\newcolumntype{P}[1]{>{\centering\arraybackslash}p{#1}}
\newcolumntype{M}[1]{>{\centering\arraybackslash}m{#1}}
\newcolumntype{L}[1]{>{\raggedright\let\newline\\\arraybackslash\hspace{0pt}}m{#1}}
\newcolumntype{C}[1]{>{\centering\let\newline\\\arraybackslash\hspace{0pt}}m{#1}}
\newcolumntype{R}[1]{>{\raggedleft\let\newline\\\arraybackslash\hspace{0pt}}m{#1}}
\newcommand{\EST}[2]{\makecell{#1 (#2)}}         
\newtheorem{thm}{Theorem}
\newtheorem{lm}{Theorem}
\newtheorem{lemma}[lm]{Lemma}
\newtheorem{prop}[thm]{Proposition}
\xpatchcmd{\proof}{\itshape}{\normalfont\proofnamefont}{}{}
\newcommand{\proofnamefont}{\bfseries}
\newcolumntype{Y}{>{\centering\arraybackslash}X}
\newcommand{\blind}{1}
\definecolor{ch}{RGB}{30,136,229}
\definecolor{chbs}{RGB}{34,139,34}
\newcommand\extrafootertext[1]{%
    \bgroup
    \renewcommand\thefootnote{\fnsymbol{footnote}}%
    \renewcommand\thempfootnote{\fnsymbol{mpfootnote}}%
    \footnotetext[0]{#1}%
    \egroup
}
\newcommand\blfootnote[1]{%
  \begingroup
  \renewcommand\thefootnote{}\footnote{#1}%
  \addtocounter{footnote}{-1}%
  \endgroup
}
\newsavebox{\smlmat}
\savebox{\smlmat}{$\boldsymbol{\Sigma}=\left(\begin{smallmatrix}1&1.05\\1.05&2.25\end{smallmatrix}\right)$}
\begin{document}

\def\spacingset#1{\renewcommand{\baselinestretch}%
{#1}\small\normalsize} \spacingset{1}


\if1\blind
{
  \title{\bf A Bayesian multivariate extreme value mixture model}}
  \author{Chenglei Hu$^1$, Ben Swallow$^2$ and Daniela Castro-Camilo$^1$*\blfootnote{Corresponding author. Email address: Daniela.CastroCamilo@glasgow.ac.uk}}
  
  \footnotetext[1]{
\baselineskip=10pt School of Mathematics and Statistics, University of Glasgow, UK.}
\footnotetext[2]{ 
\baselineskip=10pt School of Mathematics and Statistics, University of St Andrews, UK.}
  \maketitle
 \fi


\if0\blind
{
  \bigskip
  \bigskip
  \bigskip
  \begin{center}
    {\LARGE\bf Title}
\end{center}
  \medskip
} \fi

\bigskip
\begin{abstract}
Accurate characterisation of joint risk in multivariate systems requires models that adequately capture both typical behaviour and rare extremes, yet existing approaches either sacrifice theoretical guarantees or entangle bulk and tail inference.
We introduce a Bayesian multivariate extreme value mixture framework that resolves this limitation by decoupling bulk and tail modelling on disjoint supports, thereby preserving the exact asymptotic properties of extreme value theory while enabling full-distribution inference.
The proposed model treats the threshold as a learnable parameter and embeds it within a unified probabilistic framework that delivers coherent uncertainty quantification for all components, including extremal dependence. 
This separation ensures that tail behaviour is governed solely by the multivariate generalised Pareto distribution, eliminating the instability induced by bulk-tail interactions in existing copula-based and mixture approaches.
To enable practical inference under strong parameter dependence, we develop a computational strategy combining reparametrisation with automated factor slice sampling, yielding efficient and robust posterior exploration.
Extensive simulation studies demonstrate accurate recovery of marginal and dependence structures across diverse regimes, including robustness under model misspecification. An application to UK temperature shows substantial gains in estimating moderate and high quantiles, with clear implications for environmental risk assessment.
\end{abstract}

\noindent%
{\it Keywords:}  extreme mixture models; joint modeling of bulk and tail; multivariate generalized Pareto distribution
\vfill

\newpage
\spacingset{1.45} 
\section{Introduction}
\label{sec: chp2_intro}
In many applications, risk depends on both extreme and non-extreme events. For example, environmental hazards such as flooding or heatwaves might arise either from isolated extreme events or from sustained moderate conditions.
Classical extreme value theory (EVT) provides asymptotically justified models for the tail of a distribution, but does not describe the bulk, and is therefore insufficient when the full range of the data is of interest.

In the univariate setting, this limitation has motivated numerous approaches that combine a generalised Pareto distribution (GPD) for threshold exceedances with a lighter-tailed distribution for the remainder of the data. 
Examples include the dynamically weighted mixtures of~\cite{frigessiDynamicMixtureModel2002} and the Bayesian threshold models of~\cite{behrensBayesianAnalysisExtreme2004}, as well as semi- and non-parametric extensions by~\cite{tancredi2006accounting,donascimentoSemiparametricBayesianApproach2012, macdonaldFlexibleExtremeValue2011} and \cite{huang2019estimating}. 
Other unified approaches, such as \citet{naveau2016modeling} and \citet{stein2021parametric}, reproduce GPD behaviour in both tails without the need for explicit thresholding. \citet{krock2022nonstationary} extended Stein’s model to accommodate nonstationary settings, while \citet{opitzINLAGoesExtreme2018} and \citet{castro2019spliced} employed discontinuous spliced models to represent the bulk and tail of environmental variables over space and time, leading to improved prediction and forecasting of extreme events.

Extending these ideas to the multivariate setting remains challenging, as both marginal distributions and dependence structures must be modelled. 
A common approach transforms margins to a standard scale and models dependence via copulas, often in a mixture form to represent both bulk and tail behaviour~\citep{aulbach2012multivariate, Andr__2024}. 
Alternative models combining marginal mixture approaches with copula-based dependence have also been proposed~\citep{leonelliSemiparametricBivariateModelling2020}. 
However, this type of approaches suffers from entanglement between the bulk and tail: the bulk copula can influence the tail region, and the tail copula remains sensitive to bulk data, even with optimised weighting. 
This dependency could become problematic in small samples, where limited tail observations lead to fragile tail inference.
Piecewise copula constructions, such as that of \citet{Aulbach_2012}, offer partial relief but cannot simultaneously model the margins.

To address these issues, we develop a Bayesian multivariate extreme mixture model that jointly captures bulk and tail behaviour while preserving the theoretical properties of EVT.
The key feature of our approach is a strict separation between bulk and tail components: observations below a multivariate threshold are modelled using a flexible parametric distribution, while exceedances are described by a multivariate generalised Pareto distribution (mGPD).
By defining the two components on disjoint regions of the sample space, the proposed model ensures that tail behaviour is governed entirely by the mGPD, independently of the bulk specification.

A second key contribution is treating the multivariate threshold as an unknown parameter.
Rather than fixing the threshold a priori as is customary, we estimate it jointly with all other model parameters, allowing the data to inform its value and enabling coherent uncertainty quantification.
This is particularly important in multivariate settings, where thresholds are more likely to exhibit multimodality and threshold selection is more complex.

From a computational perspective, inference in this model is challenging due to the strong dependence between the threshold and tail parameters.
To address this, we combine a reparametrisation strategy with an automated factor slice sampler (AFSS), which substantially improves mixing efficiency relative to standard Metropolis-Hastings approaches.
This allows reliable posterior inference even in the presence of strong parameter dependencies.
Through simulation studies, we demonstrate that the proposed model achieves accurate parameter recovery under a range of tail behaviours, maintains appropriate uncertainty quantification, and provides robust estimates of extremal dependence measures. We also show that, despite its asymptotic dependence structure, the model performs competitively when applied to asymptotically independent data in moderate sample sizes. An application to UK temperature data illustrates how the model improves estimation of high quantiles and captures joint tail risk more effectively than standard Gaussian approaches.


The remainder of this paper is organised as follows.
Section~\ref{sec:meth} introduces our model, including a detailed overview of the mGPD and our chosen representation.
Section~\ref{sec:inf} describes the prior specifications and posterior inference.
Simulation studies are presented in Section~\ref{sec:simu}, followed by an application to UK temperature data in Section~\ref{sec:app}.
Section~\ref{sec:discussion} discusses practical considerations for implementation, and Section~\ref{sec:conclusion} concludes.

\section{Multivariate extreme mixture model}
\label{sec:meth}
\subsection{Multivariate generalised Pareto distribution}\label{sec:mgpd}

Let $\boldsymbol{Y}_1,\dots\,\boldsymbol{Y}_n$ be $n$ independent and identical copies of the $d$-dimensional random vector $\boldsymbol{Y}$, which is in the max domain of attraction of a multivariate generalised extreme value distribution (mGEVD) $G$.
This means that there exists sequences of normalising vectors $\boldsymbol{\alpha}_n \in (0,\infty)^d$ and $\boldsymbol{\beta}_n \in \mathbb{R}^d$ such that 
\begin{align}\label{eq:mGEV}
\mathbb{P} \left( \boldsymbol{\alpha}_n \{\max _{1 \leq i \leq n} \boldsymbol{Y}_i\} + \boldsymbol{\beta}_n   \leq \boldsymbol{y} \right) = \mathbb{P}^n \left( \boldsymbol{\alpha}_n  \boldsymbol{Y} +\boldsymbol{\beta}_n  \leq \boldsymbol{y} \right)\rightarrow G(\boldsymbol{y}),\quad  n \rightarrow \infty,
\end{align}
where operations on vectors are componentwise. 
The mGEVD $G(\boldsymbol{y})$ has non-degenerating margins $G_j(x)$, $j=1,\ldots,d$ belonging to the univariate generalised extreme value family of distributions. Without loss of generality, we can assume that marginals are unit Fr\'{e}chet with cumulative distribution function $\exp(-y^{-1})$, for $y>0$. In that case, we can write the joint distribution function $G$ as
\begin{equation}\label{eq:mGEVdist}
	{G}(\boldsymbol{y})  = \exp\left\{-\int_{\mathcal{S}_d} \max\left(\frac{\omega_1}{y_1},\ldots,\frac{\omega_d}{y_d}\right)\text{d}Q(\boldsymbol{\omega})\right\},\quad \boldsymbol{y} > \boldsymbol{0},
\end{equation}
where $Q$ is called the spectral measure, an arbitrary positive finite measure over the unit simplex $\mathcal{S}_d = \{\boldsymbol{\omega}\in[0,1]^d:\sum_{j=1}^{d}\omega_j= 1\}$ that satisfies the constraint $$\int_{\mathcal{S}_d}w_j\text{d}Q(\boldsymbol{w}) = 1,\quad j=1,\ldots,d.$$
Let $\boldsymbol{Y} \nleqslant \boldsymbol{\beta}_n$ denote the event of at least one of the $\boldsymbol{Y}$ components exceeding the corresponding $\boldsymbol{\beta}_n$ component.
If the convergence in~\eqref{eq:mGEV} holds, the conditional random vector defined as
\begin{equation}\label{eq:excvec}
	\frac{\boldsymbol{Y}-\boldsymbol{\beta}_n}{\boldsymbol{\alpha}_n} \Big| \boldsymbol{Y} \nleqslant \boldsymbol{\beta}_n 
\end{equation}
converges in distribution to a $d$-dimensional random vector $\boldsymbol{X}$ with multivariate generalised Pareto distribution (mGPD) $H$ \citep{rootzenMultivariateGeneralizedPareto2006}. 
Based on the relationship in \eqref{eq:excvec}, $H$ can be expressed in terms of $G$ as
\begin{align}\label{eq:mgpd}
    H(\boldsymbol{x})=\frac{1}{\log G(\boldsymbol{0})} \log\frac{G(\boldsymbol{x}\wedge\boldsymbol{0} )}{G(\boldsymbol{x})},
\end{align}
where $\wedge$ denotes the componentwise minimum and $0<G(\boldsymbol{0})<1$ is assumed. 
The marginal distributions of $H(\boldsymbol{x})$ do not conform to univariate GPDs since $\boldsymbol{Y} \nleqslant \boldsymbol{\beta}_n$ does not necessarily imply that $Y_j>\beta_j$ for every $j$. 
However, conditioning on {$X_j>0$}, the conditional margin $H_j(x_j|x_j>0)$ are univariate GPD with scale $\sigma_j$, shape $\gamma_j$, and support depending on these parameters unless $\gamma_j=0$.
As we can see from~\eqref{eq:mGEVdist} and~\eqref{eq:mgpd}, both the mGEVD and the mGPD depend on the spectral measure $Q$ and therefore have no unique representation. 
Based on the work by~\cite{rootzen2018multivariate}, ~\cite{kiriliouk2019peaks} propose three equivalent representations of the mGPD density, denoted R, U, and T, for use in parametric modelling. These forms can be transformed into one another via a suitable change of variable. The choice of representation depends on the intended application (see \cite{kiriliouk2019peaks} for details), and does not affect our framework.
Here, we use the U representation with density
\begin{align}
    h_{\boldsymbol{U}}(\boldsymbol{x}) = \mathbbm{1} \{\max(\boldsymbol{x})>0\}\frac{\prod_{j=1}^d(\gamma_jx_j + \sigma_j)^{-1}}{\mathbb{E}[\exp(\max(\boldsymbol{U}))]}\int_0^{\infty}f_{\boldsymbol{U}}\left(\frac{1}{\boldsymbol{\gamma}}\log \left(\frac{\boldsymbol{\gamma}}{\boldsymbol{\sigma}}\boldsymbol{x}+\boldsymbol{1}\right) + \log t \right) \text{d}t,
    \label{eq: mGPD_U_obs}
\end{align}
where $f_{\boldsymbol{U}}$ is called a generator and is the density of a random vector $\boldsymbol{U}$ satisfying $\mathbb{E}[\exp(U_j)]<+\infty, j=1,\cdots,d$, and $[\log (\gamma_jx_j/{\sigma_j}+1)]/\gamma_j$ takes its limiting form ${x_j}/{\sigma_j}$ when $\gamma_j=0$. A standardised form of~\eqref{eq: mGPD_U_obs} with $\boldsymbol{\gamma}=0$ and $\boldsymbol{\sigma}=1$ can be obtained by applying the transformation 
\begin{align}
    \boldsymbol{Z} :=\frac{1}{\boldsymbol{\gamma}}\log\left(\frac{\boldsymbol{\gamma}}{\boldsymbol{\sigma}} \boldsymbol{X }+ 1\right),
    \label{eq:transformation}
\end{align}
where {the} random vector $\boldsymbol{X}$ has the density in \eqref{eq: mGPD_U_obs}, and operations are componentwise. As a result, the standardised density can be expressed as
\begin{align}
    h_{\boldsymbol{U}}(\boldsymbol{z}) =  \mathbbm{1} \{\max(\boldsymbol{z})>0\}\frac{1}{\mathbb{E}[\exp(\max(\boldsymbol{U}))]}\int_0^{\infty}f_{\boldsymbol{U}}(\boldsymbol{z}+ \log t) \text{d}t.
    \label{eq: mGPD_U_std}
\end{align}
In practice, the mGPD is usually first defined on the standardised scale given in \eqref{eq: mGPD_U_std} and then transformed back to the observation scale using \eqref{eq:transformation}.
Note that the U density in~\eqref{eq: mGPD_U_obs} or~\eqref{eq: mGPD_U_std} still lacks finite parametrisation since it depends on the density function $f_{\boldsymbol{U}}$, which determines the extremal dependence of the mGPD. 
Here, we assume 
$f_{\boldsymbol{U}}$ to have independent reverse 
exponential components, that is $f_{\boldsymbol{U}}(\boldsymbol{x})=\prod_{j=1}^d{a^{-1}_j}\exp({a_j}^{-1}{x_j})$, for $ x_j \in (-\infty,0), a_j>0$.
Other forms of the $f_{\boldsymbol{U}}$ and generators are briefly discussed in Section \ref{sec:discussion}.
In the reverse exponential case, \eqref{eq: mGPD_U_std} can be explicitly expressed as

\begin{align}
    h_{\boldsymbol{U}}(\boldsymbol{z})=\frac{\exp[-\max(\boldsymbol{z})(1+\sum_{j=1}^da_j^{-1})]}{\mathbb{E}[\exp(\max(\boldsymbol{U}))]}\frac{\prod_{i=1}^da_j^{-1}\exp(a_j^{-1}z_j)}{1+\sum_{i=1}^da_j^{-1}},
    \label{eq:explict_mgpd}
\end{align}
where $\mathbb{E}[\exp(\max(\boldsymbol{U}))] = \int_0^{\infty}1-\mathbb{P}\left(\exp(\max(\boldsymbol{U}))\leq t\right)\text{d}t = (\sum_{j=1}^da_j^{-1})/(1+\sum_{i=1}^da_j^{-1})$. 

\subsection{Multivariate bulk distribution}
\label{bulk_dist}
The mGPD in Section~\ref{sec:mgpd} is supported on the complement of the negative orthant. Introducing the threshold vector $\boldsymbol{u}$ shifts this support to the region {$A=\{\boldsymbol{x} \in \mathbb{R}^d : \boldsymbol{x}-\boldsymbol{u} \nleqslant \boldsymbol{0} \}$ }, that is, the set of points where at least one component exceeds its corresponding threshold.
On the complement $\mathbb{R}^d \setminus A$, we model observations with a multivariate bulk distribution $F_{\text{bulk}}$ on $\mathbb{R}^d$, having density $f_{\text{bulk}}$.
The choice of $F_{\text{bulk}}$ is flexible and should be guided by the characteristics of the system under study.
The only requirement is that both $f_{\text{bulk}}$ and $F_{\text{bulk}}$ can be evaluated exactly over their support.
This condition is met, for example, by a multivariate normal distribution or by copula-based constructions with specified marginals (see Section~\ref{sec:discussion}).
For illustration, in the remainder of this paper, we take $F_{\text{bulk}}$ to be multivariate normal, with density
\begin{align}
 f_{\text{bulk}}(\boldsymbol{x}| \boldsymbol{\mu}, \boldsymbol{\Sigma})
 =(2 \pi)^{-d / 2}(\operatorname{det} \boldsymbol{\Sigma})^{-\frac{1}{2}} \exp \left\{-\frac{1}{2}(\boldsymbol{x}-\boldsymbol{\mu})^{\prime} \boldsymbol{\Sigma}^{-1}(\boldsymbol{x}-\boldsymbol{\mu})\right\},\quad \boldsymbol{x} \leq \boldsymbol{u},
 \label{eq:bulk_density}
\end{align}
where $\boldsymbol{\mu}$ is the mean vector and $\boldsymbol{\Sigma}$ is the covariance matrix.

\subsection{Multivariate extreme mixture distribution}
Since the tail and bulk {define} a partition of the mixture model's support, combining \eqref{eq:bulk_density} and \eqref{eq:explict_mgpd}, the density function of our {multivariate extreme mixture} model can be written as 
\begin{equation}
f(\boldsymbol{x}|\boldsymbol{\mu},\boldsymbol{\Sigma},\boldsymbol{a},\boldsymbol{\sigma},\boldsymbol{\gamma},\boldsymbol{u}) = 
\begin{cases}
    f_{\text{bulk}}(\boldsymbol{x}|\boldsymbol{\mu},\boldsymbol{\Sigma}),& \text{if } \boldsymbol{x}\leqslant \boldsymbol{u}\\
    &\\
    [1-F_{\text{bulk}}(\boldsymbol{u}|\boldsymbol{\mu},\boldsymbol{\Sigma})]h_{\boldsymbol{U}}(\boldsymbol{x}-\boldsymbol{u}|\boldsymbol{a},\boldsymbol{\sigma},\boldsymbol{\gamma}),              & \text{otherwise}.
\end{cases}
\label{eq:pdf}
\end{equation}
The density in \eqref{eq:pdf} indicates that data are characterised by the multivariate normal distribution when all components are less than the threshold $\boldsymbol{u}$ and described by the U-representation of the mGPD with a reverse exponential generator when at least one component exceeds the threshold. 
Equation~\eqref{eq:pdf} can also be expressed in the standard mixture form by introducing the truncated bulk density
$f^*_{\text{bulk}}(\boldsymbol{x}|\boldsymbol{\mu},\boldsymbol{\Sigma})=\mathbbm{1}\{\boldsymbol{x} \leq \boldsymbol{u}\}f_{\text{bulk}}(\boldsymbol{x}|\boldsymbol{\mu},\boldsymbol{\Sigma})/F_{\text{bulk}}(\boldsymbol{u}|\boldsymbol{\mu},\boldsymbol{\Sigma})$.
In this notation, our extreme mixture model becomes
\begin{align*}
f(\boldsymbol{x}|\boldsymbol{\mu},\boldsymbol{\Sigma},\boldsymbol{a},\boldsymbol{\sigma},\boldsymbol{\gamma},\boldsymbol{u}) &= \pi f^*_{\text{bulk}}(\boldsymbol{x}|\boldsymbol{\mu},\boldsymbol{\Sigma})+ (1-\pi )h_{\boldsymbol{U}}(\boldsymbol{x}-\boldsymbol{u}|\boldsymbol{a},\boldsymbol{\sigma},\boldsymbol{\gamma}),
\end{align*}
where the mixture probability is $\pi =F_{\text{bulk}}(\boldsymbol{u}|\boldsymbol{\mu},\boldsymbol{\Sigma})$.
This form is particularly convenient for sampling from \eqref{eq:pdf} when $f^*_{\text{bulk}}$ can be sampled directly.

\savebox{\smlmat}{$\boldsymbol{\Sigma}=\left(\begin{smallmatrix}1&1.05\\1.05&2.25\end{smallmatrix}\right)$}

\begin{figure}[!htbp]
\begin{center}
\includegraphics[width=3in]{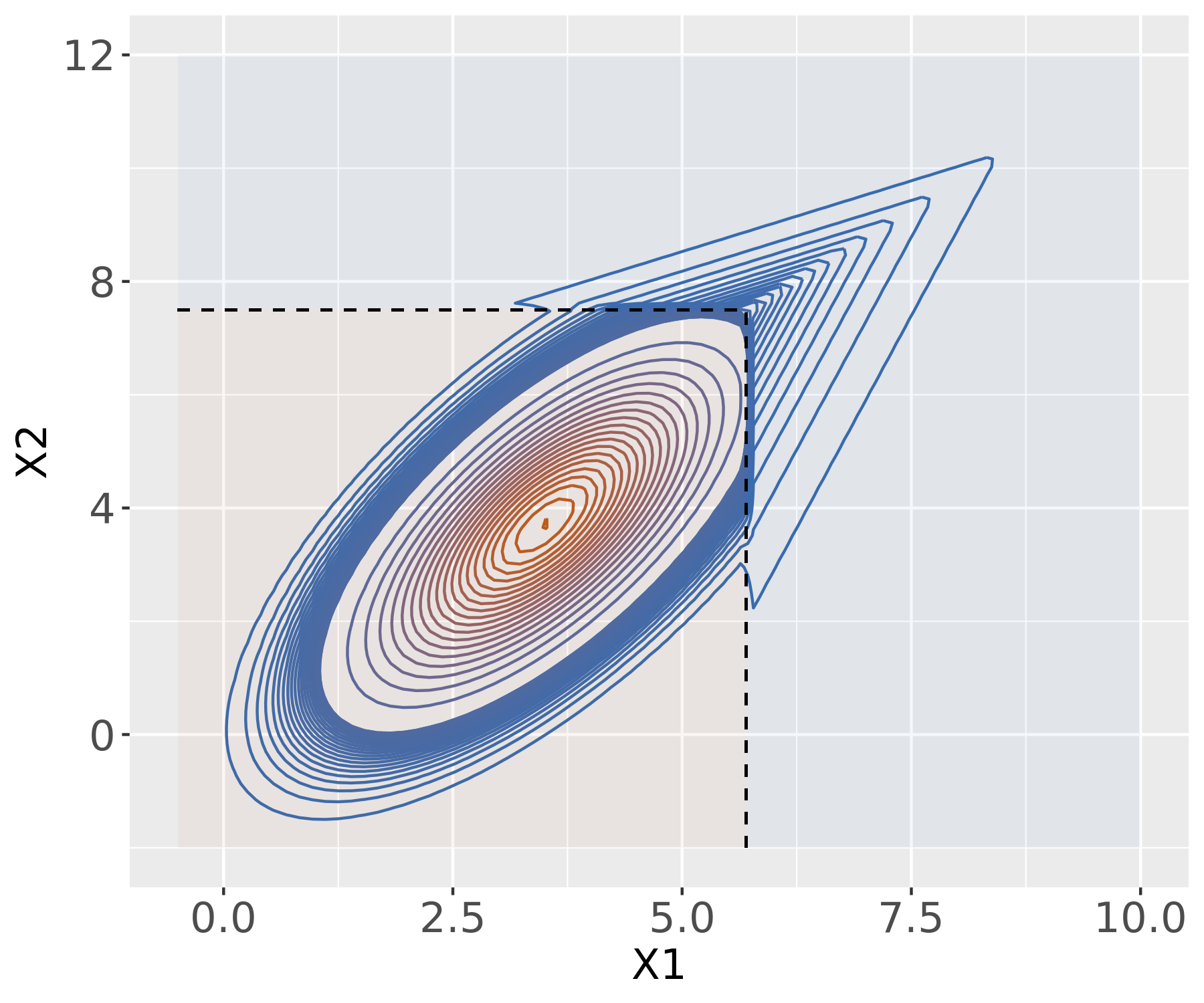}
\includegraphics[width=3in]{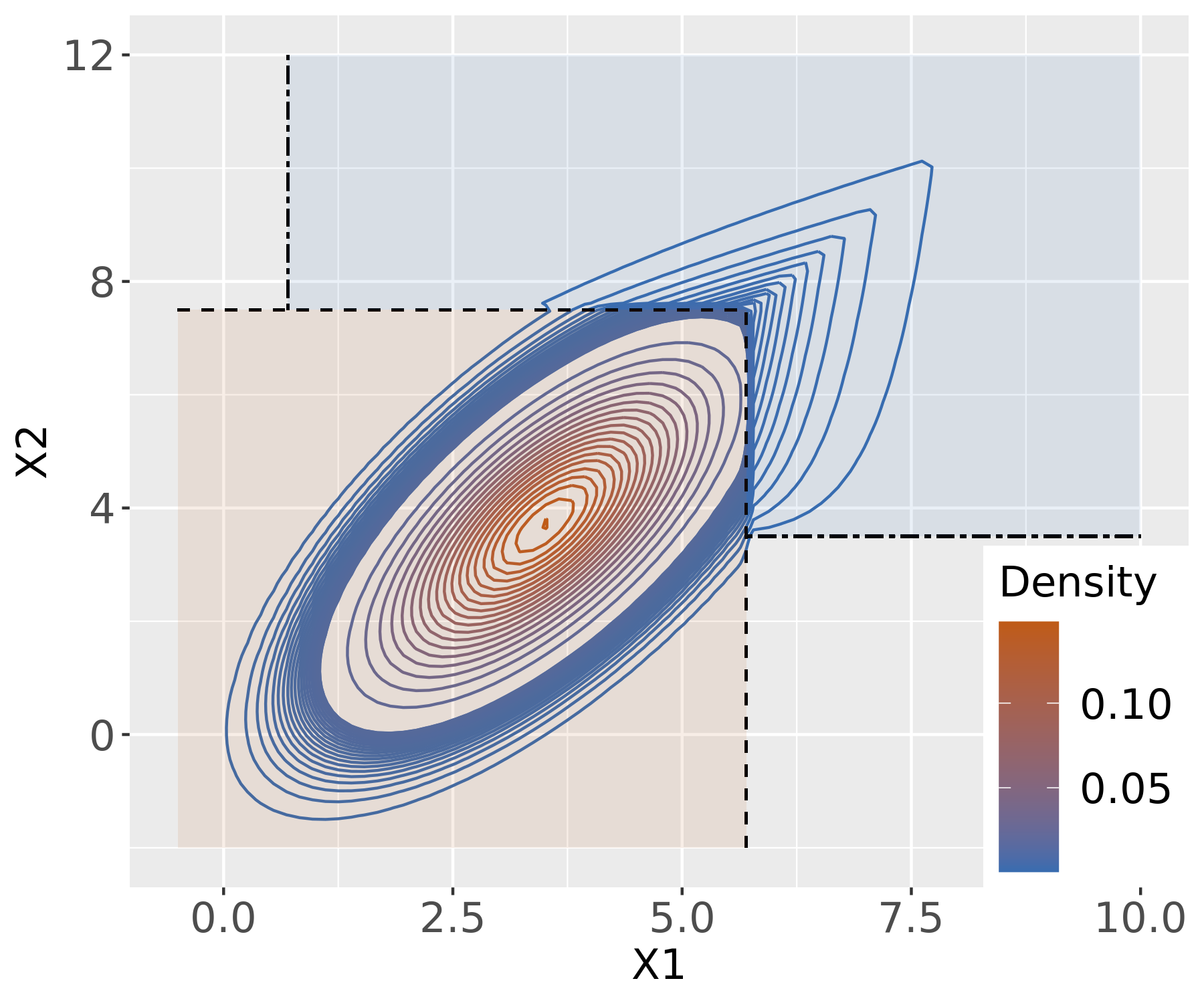}
\end{center}
\caption[\footnotesize{Two-dimensional density contour plots derived from \eqref{eq:pdf}.}]{\footnotesize{Two-dimensional density contour plots derived from \eqref{eq:pdf}. Dashed lines indicate threshold values, whereas the double-dashed lines in the right plot highlight the lower endpoints of the bivariate GPD. Shaded areas represent the support of each mixture component. For the left plot, the bivariate GPD parameters are specified as $\boldsymbol{\sigma}=\boldsymbol{1}$, $\boldsymbol{\gamma}=\boldsymbol{0}$, and $\boldsymbol{a}=(1,2)$. Conversely, the right plot employs parameters $\boldsymbol{\sigma}=(1,1.2)$, $\boldsymbol{\gamma}=(0.2,0.3)$, and $\boldsymbol{a}=(1,2)$. Both plots share common parameters for the bulk and threshold, represented by $\boldsymbol{\mu}=(3.5,3.7)$, ~\usebox{\smlmat}, and $\boldsymbol{u}=(5.7,7.5)$. }
}
\label{fig:plot_BEMM}
\end{figure}

Figure \ref{fig:plot_BEMM} shows a two-dimensional representation of the density in~\eqref{eq:pdf}. The dashed lines at $X_1 = u_1$ and $X_2 = u_2$ mark the boundary between the bulk and tail regions, as well as the discontinuities in the density.
These discontinuities are visible as abrupt changes in the contour lines when crossing the threshold.
Although the sharp change in density at the threshold may appear artificial, its practical effect is minimal for small sample sizes. Figure~\ref{fig:plot_BEMM} also helps to interpret the marginal densities. For example, in the left-hand panel, the marginal density $f_{1}(x_1)$ behaves as follows. For $X_1 > u_1 = 5.7$, the tail region is modelled entirely by the bivariate GPD, so the conditional marginal $f_1(x_1 \mid x_1 > u_1)$ reduces to a univariate GPD.
For $X_1 < u_1$, the marginal density $f_1(x_1 \mid x_1 < u_1)$ is obtained by integrating $x_2$ out of a mixture of the bivariate GPD (triangular contours) and the bivariate normal distribution (elliptical contours).
In this sense, our framework extends the univariate approach of \citet{donascimentoSemiparametricBayesianApproach2012} to the multivariate case, with both methods combining a mixture of parametric bulk distributions with a univariate GPD for the tail in the marginal analysis.

As shown in~\eqref{eq:pdf}, the tail behaviour of our model is entirely determined by the mGPD.
Consequently, our multivariate extreme mixture model inherits all theoretical properties of the mGPD, including threshold stability, GPD conditional margins, and sum-stability under shape constraints \citep{kiriliouk2019peaks}. 
In particular, our model lies in the max-domain of attraction of an mGEVD, a result that follows directly from Theorem 2.2 of~\cite{rootzenMultivariateGeneralizedPareto2006}. 
A formal statement and proof are provided in Section \ref{sec:appendix-proofs} of the Supplementary Materials. 
As a consequence, our model is always asymptotically dependent; that is, the coefficient
\begin{equation}
    \chi :=\lim_{r\rightarrow1^-} \chi(r), \qquad \chi(r)=\frac{\mathbb{P}\{\bigcap_{j=1}^d \{X_j>F_j^{-1}(r)\} \}}{1-r} 
    \label{eq:chi},
\end{equation}
is strictly positive.
This implies that the model may introduce bias when the underlying data exhibit asymptotic independence (i.e., $\chi = 0$ as $r \rightarrow 1^-$).
In practice, however, the distinction between weak asymptotic dependence and complete asymptotic independence is often ambiguous, particularly in low-dimensional and small-sample settings.
Our simulation results demonstrate that the model can still provide reasonable estimates even when the data are asymptotically independent.

\section{Bayesian inference}
\label{sec:inf}
Although maximum likelihood inference could, in principle, be performed using methods such as the EM algorithm for mixture models, the Bayesian framework provides a natural means of quantifying all sources of uncertainty in our model, particularly for the threshold vector.
Here, we detail the prior specification for all model parameters: $\boldsymbol{\mu}$ and $\boldsymbol{\Sigma}$ for the bulk; the threshold vector $\boldsymbol{u}$; and $\boldsymbol{a}$, $\boldsymbol{\gamma}$, and $\boldsymbol{\sigma}$ for the tail.

\subsection{Priors for the parameters in the bulk}
A standard approach for assigning priors in the multivariate normal setting is to use a multivariate normal prior for the mean vector and an inverse-Wishart prior for the covariance matrix.
While this choice offers conjugacy, it can be overly restrictive for the covariance matrix, as noted by \citet{sunObjectiveBayesianAnalysis2007}.
To address this limitation, \citet{barnardModelingCovarianceMatrices2000} proposed decomposing the covariance matrix as $\boldsymbol{\Sigma} = \text{diag}(\boldsymbol{S})\boldsymbol{C} \text{diag}(\boldsymbol{S})$, where $\boldsymbol{C}$ is the $d$-dimensional correlation matrix and $\boldsymbol{S}$ is the $d \times 1$ vector of standard deviations.
Priors are then assigned separately to these components.
In our simulations and case study (Sections~\ref{sec:simu} and~\ref{sec:app}), we assign independent $\text{Half-Normal}( h_{s_j})$ priors to each $s_j$ in $\boldsymbol{S}$, choosing $ h_{s_j}$ sufficiently large to be noninformative; specifically, $ h_{s_j}$ is set to 50 times the scale of the data.
For the correlation matrix, we adopt the Lewandowski–Kurowicka–Joe (LKJ) prior \citep{lewandowskiGeneratingRandomCorrelation2009}, given by 
$$\pi (\boldsymbol{C} )\propto (\det \boldsymbol{C})^{\delta -1},$$
where the parameter $\delta$ controls the strength of correlation.
When $\delta = 1$, the prior is uniform over all valid $d$-dimensional correlation matrices.
$\delta > 1$ favors stronger correlations (larger diagonal elements), while $\delta < 1$ favors weaker correlations.
For computational considerations, the LKJ prior can also be defined on the upper-triangular Cholesky factor $\boldsymbol{L}$ of $\boldsymbol{C}$  using a change of variable.
For the mean vector $\boldsymbol{\mu}$, we adopt independent normal priors $\mu_j \sim N(m_j, t_j^2)$ for $j = 1, \dots, d$.
In summary, the joint prior for the bulk parameters is
\begin{equation}
\pi(\boldsymbol{\mu}, \boldsymbol{\Sigma}) = \pi(\boldsymbol{\mu}, \boldsymbol{S}, \boldsymbol{L})
\propto \left| \boldsymbol{L}^\mathsf{T} \boldsymbol{L} \right|^{\delta}
\prod_{j=1}^d \varphi\left(\frac{\mu_j - m_j}{t_j}\right)  \exp\left\{-\frac{s_j^2}{2 h_{s_j}^2}\right\}\mathbbm{1}(s_j > 0)
\label{eq:bulk_prior}
\end{equation}
where $\varphi$ denotes the standard normal density.

\subsection{Prior for the threshold}
The choice of threshold plays a critical role in the threshold exceedance framework. A high threshold results in too few exceedances for reliable inference, leading to increased variance in parameter estimates. Conversely, a low threshold may violate the asymptotic assumptions of the mGPD, thereby introducing bias.
We seek to retain thresholds in the upper quantiles while allowing sufficient flexibility for the data to determine their optimal values and to quantify the associated uncertainty. To this end, we parametrise the threshold $u_j$ using a marginal reference location at the $\tau$-quantile, denoted by $q_{\tau,j}$, together with a positive offset. Specifically,
\begin{equation}
u_j = q_{\tau,j} + o_j, \quad o_j \stackrel{\mathrm{i.i.d.}}{\sim} \text{Half-Normal}(h_{o}),
\label{eq: prior_threshold}
\end{equation}
where $h_{o}$ is the scale parameter of the half-normal distribution, controlling how far the threshold may deviate from the reference level. Although $q_{\tau,j}$ is data-dependent, our substantive prior is placed on the uncertainty of exceedances and is itself data-independent.

The prior in \eqref{eq: prior_threshold} can be viewed as a location-aware extension of the truncated normal prior for thresholds \citep{behrensBayesianAnalysisExtreme2004, macdonaldFlexibleExtremeValue2011, donascimentoSemiparametricBayesianApproach2012}, enabling its application to datasets that may include negative values. An alternative approach is to impose a discrete prior on upper order statistics \citep{de2001predictive, behrensBayesianAnalysisExtreme2004}. However, in our framework, this choice is more prone to poor mixing during Markov chain Monte Carlo (MCMC) inference than in univariate settings \citep{behrensBayesianAnalysisExtreme2004, macdonaldFlexibleExtremeValue2011, donascimentoSemiparametricBayesianApproach2012}. 
We therefore adopt a continuous prior on the threshold to preserve flexibility and facilitate efficient MCMC sampling.

Assuming independence across marginal priors, the joint prior distribution of the threshold vector $\boldsymbol{u}$ is given by
\begin{equation*}
    \pi(\boldsymbol{u}) \propto\prod_{j=1}^d\exp\left\{-\frac{(u_j-q_{\tau,j})^2}{2h_{o}^2}\right\}\mathbbm{1}(u_j > q_{\tau,j})
\end{equation*}



\subsection{Priors for the parameters in the tail}
The marginal parameters of the mGPD are $\boldsymbol{\gamma} = (\gamma_1, \dots, \gamma_d)$ and $\boldsymbol{\sigma} = (\sigma_1, \dots, \sigma_d)$, whose interpretations mirror those in the univariate GPD, since each marginal conditional distribution $H_j(x_j \mid x_j > 0)$ is univariate GPD. 
Various priors for GPD parameters have been proposed, including quantile-based priors \citep{colesBayesianMethodsExtreme1996}, Jeffreys priors \citep{castellanosDefaultBayesianProcedure2007}, and penalised complexity priors \citep{opitzINLAGoesExtreme2018}.
In our framework, we impose weakly informative priors with hard constraints to preserve essential tail properties, for example, ensuring finite marginal excess expectations by restricting $\gamma_j < 1$, while avoiding further parameter preference.
Specifically, we assume $\gamma_j \stackrel{\mathrm{i.i.d.}}{\sim} \text{Uniform}(l_{\gamma}, r_{\gamma})$ and $\sigma_j \stackrel{\mathrm{i.i.d.}}{\sim} \text{Half-Norm}(h_{\sigma})$, where $l_{(\cdot)}$ and $r_{(\cdot)}$ denote lower and upper bounds, respectively.
The components of the tail dependence parameter $\boldsymbol{a} = (a_1, \dots, a_d)$, defined in~\eqref{eq:explict_mgpd}, are independently assigned prior $\text{Half-Norm}(h_{a})$.
The joint prior for all tail parameters is therefore
\begin{align}
\pi(\boldsymbol{a}, \boldsymbol{\gamma}, \boldsymbol{\sigma}) \propto
\prod_{j=1}^{d} \exp\left\{-\frac{a_j^2}{2h_a^2}-\frac{\sigma_j^2}{2h^2_{
\sigma
}}\right\}\mathbbm{1}( a_j > 0)  \cdot \mathbbm{1}( \sigma_j > 0)\cdot \mathbbm{1}(l_{\gamma} < \gamma_j < r_{\gamma}).
\label{eq:tail_prior}
\end{align}

\subsection{Posterior inference}
\label{sec:posterior_inference}
Let \( \boldsymbol{\theta}_b = (\boldsymbol{\mu},\boldsymbol{S},{\boldsymbol{L})} \), and \(  \boldsymbol{\theta}_t = (\boldsymbol{a},\boldsymbol{\gamma}, \boldsymbol{\sigma}) \). 
Let \( \boldsymbol{X} = (\boldsymbol{x}_1, \cdots, \boldsymbol{x}_n)^T\) be an \( n \times d \) sample matrix. 
Define the mutually exclusive sets $B$ and $T$ to correspond to the rows of \( \boldsymbol{X} \) classified as bulk and tail data, respectively, with $ B,T \subset  \{1,\ldots,n\}$.
The log posterior density of the multivariate extreme mixture model in~\eqref{eq:pdf} is given by: 
\begin{alignat}{2}
\begin{split}
\log \pi( \boldsymbol{\theta}_b,  \boldsymbol{u},  \boldsymbol{\theta}_t|{\boldsymbol{X}}) &\propto \sum_{i=1}^n  \log  \left\{  f_{\text{bulk}}(\boldsymbol{x_i}| \boldsymbol{\theta}_b)\cdot \mathbbm{1}(\boldsymbol{x_i} \leq  \boldsymbol{u}) \right. \\ 
&\quad + \left. [1-F_{\text{bulk}}( \boldsymbol{u}| \boldsymbol{\theta}_b)]f_{\text{tail}}(\boldsymbol{x_i}- \boldsymbol{u}| \boldsymbol{\theta}_t){\cdot \mathbbm{1}(\boldsymbol{x_i} \nleqslant    \boldsymbol{u})}   \right\} \\ 
&\quad + \log \pi( \boldsymbol{\theta}_b) + \log \pi( \boldsymbol{u}) + \log \pi( \boldsymbol{\theta}_t) \\
& {\propto} \sum_{i \in B}  \log f_{\text{bulk}}(\boldsymbol{x_i}| \boldsymbol{\theta}_b) +  \sum_{i \in T} \log f_{\text{tail}}(\boldsymbol{x_i}- \boldsymbol{u}| \boldsymbol{\theta}_t)  + \left | T \right |\log[1-F_{\text{bulk}}( \boldsymbol{u}| \boldsymbol{\theta}_b)] \\ &\quad + \log \pi( \boldsymbol{\theta}_b) + \log \pi( \boldsymbol{u}) + \log \pi( \boldsymbol{\theta}_t),
\end{split}
\label{eq: posterior}
\end{alignat}
where {$ \left | T \right | := \left | T( \boldsymbol{u}) \right |$} denotes the number of tail observations, which depends on the threshold. 

We perform inference using Markov chain Monte Carlo (MCMC) sampling from~\eqref{eq: posterior}.
As an initial step, we employ a multivariate random-walk Metropolis–Hastings (MH) algorithm with independent normal proposals for each parameter.
Proposal scaling is adaptively tuned to achieve the theoretical optimal acceptance rate \citep{robertsOptimalScalingVarious2001}.
However, this approach can result in suboptimal mixing for the threshold parameters $\boldsymbol{u}$ and certain components of $\boldsymbol{\theta}_t$, due to strong posterior dependence among parameters.
Specifically, the threshold stability property of the mGPD \citep{kiriliouk2019peaks} implies that increasing the threshold of an mGPD with shape parameter $\boldsymbol{\gamma}$ and scale parameter $\boldsymbol{\sigma}$ by $\boldsymbol{w}$ yields a new mGPD with identical marginal shape and tail dependence structure, but with an updated scale parameter $\boldsymbol{\sigma} + \boldsymbol{\gamma w}$. This induces strong dependence between $\boldsymbol{u}$ and $\boldsymbol{\sigma}$ in the posterior distribution. 

To address this issue, we define $\tilde{\boldsymbol{\sigma}} = \boldsymbol{\sigma} - \boldsymbol{\gamma u}$, which is invariant for sufficiently high $\boldsymbol{u}$, and reparametrise $(\boldsymbol{\sigma}, \boldsymbol{\gamma}, \boldsymbol{u})$ as $(\tilde{\boldsymbol{\sigma}}, \boldsymbol{\gamma}, \boldsymbol{u})$. 
Retaining the original priors for $\boldsymbol{\sigma}$, $\boldsymbol{\gamma}$, and $\boldsymbol{u}$, the induced prior for $\tilde{\boldsymbol{\sigma}}$ is
\begin{equation*}
    \pi(\tilde{\boldsymbol{\sigma}}\mid \boldsymbol{u},\boldsymbol{\gamma})=\pi_{\boldsymbol{\sigma}}(\tilde{\boldsymbol{\sigma}}+\boldsymbol{\gamma u})\cdot |1| \propto \prod_{j=1}^d\exp\left\{-\frac{(\tilde{\sigma}_j+\gamma_j u_j)^2}{2 h_{\sigma}^2}\right\}\mathbbm{1}(\tilde{\sigma}_j>-\gamma_j u_j)
\end{equation*}
To further improve sampling efficiency, we block update the threshold $\boldsymbol{u}$ and tail parameters $(\boldsymbol{a},\tilde{\boldsymbol{\sigma}},\boldsymbol{\gamma})$ using the automated factor slice sampler (AFSS; \citealp{tibbitsAutomatedFactorSlice2014}). The AFSS mitigates posterior dependence by performing univariate slice sampling along the eigenvectors of the covariance matrix obtained from a quadratic approximation of the target distribution.
Empirically, this approach substantially improves computational efficiency, often by more than an order of magnitude when measured by the ratio of effective sample size to runtime. 
The remaining bulk parameters are updated using MH with a Gaussian random-walk proposal. 
All AFSS-based MCMC routines are implemented using the \texttt{NIMBLE} package in R \citep{perry-nimble2017}. For consistency with the parametrisation in \eqref{eq:pdf}, we report results using the original parametrisation rather than the reparametrised form.

\section{Simulations}
\label{sec:simu}
We conduct simulation studies to evaluate (i) the ability of our model to recover bulk, tail, and threshold parameters under diverse tail behaviours, (ii) robustness to model misspecification, particularly when the true process is asymptotically independent, and (iii) its capacity to capture key dependence measures $\chi$, $\bar\chi$ and Kendall's $\tau$. 
{These aspects are essential both for accurate extrapolation to unseen observations and for a proper characterisation of the bulk data.}
We focus on the bivariate case to enable extensive Monte Carlo evaluation.

\subsection{Well-specified scenarios (Scenario 1)}
In the first experiment, we assess parameter recovery under correctly specified models with three different marginal tail types, detailed in Table~\ref{tab:scenario_gamma}.
The bulk parameters and thresholds are fixed across scenarios, as shown in the ``True Value'' (TV) columns of Table~\ref{tab:tail_thres_Scen1.1-1.3}. 
Each dataset contains $n=2000$ observations, with approximately 5\% classified as tail data.
\begin{table}[ht]
\centering
\caption{\footnotesize{Tail shape parameters for the three well-specified simulation scenarios. 
Each scenario differs in the heaviness of the marginal tails, as determined by the shape parameters $(\gamma_1, \gamma_2)$.}}
\vspace{0.5cm}
\label{tab:scenario_gamma}
\begin{tabular}{ccccl}
\hline
Scenario & $\gamma_1$ & $\gamma_2$ && Description \\
\hline
1.1 & 0.3   &  0.1   && Both margins heavy-tailed \\
1.2 & 0.2   & -0.2   && One heavy-, one light-tailed \\
1.3 & -0.1  & -0.3   && Both margins light-tailed \\
\hline
\end{tabular}
\end{table}


We adopt weakly informative priors. For the bulk prior in \eqref{eq:bulk_prior}, we set $\delta = 1.3$, $m_1 = m_2 = 0$, and $h_{s_1} = h_{s_2}= 50$. The threshold vector is constrained to lie above the 0.8-quantile of each margin, with scale parameter  $h_o = 10$. 
For the tail parameters in \eqref{eq:tail_prior}, we specify $h_a = h_{\sigma} = 50$ to reflect limited prior information on the scale in both the margins and $f_{\boldsymbol{U}}$. The shape parameter bounds are set to $l_{\gamma} = -1$ and $r_{\gamma} = 1$ to ensure a finite mean for the GPD of the marginal threshold excesses.
\begin{table}
\centering
\small
\setlength{\tabcolsep}{6pt}
\renewcommand{\arraystretch}{1.1}
\caption[\footnotesize{Average posterior means with 95\% credible interval (CI) lengths in brackets, and coverage rates (CR) for all parameters across three well-specified scenarios.}]{\footnotesize{Average posterior means with 95\% credible interval (CI) lengths in brackets, and coverage rates (CR) for all parameters across three well-specified scenarios. Each scenario uses $n = 2000$ observations with approximately 5\% classified as tail data. The ``True Value'' (TV) column gives the parameter values used for data generation. Results are averaged over 1000 simulation replicates. CI length is the average length of the marginal 95\% credible interval across replicates. CR is the proportion of replicates in which the true value falls within the CI. Bulk, tail, and threshold parameters are grouped separately for clarity. }}
\vspace{0.5cm}
\begin{tabular}{c c C{6em} c c C{6em} c c C{6em} c}
\toprule
& \multicolumn{3}{c}{\textbf{Scenario 1.1}}
& \multicolumn{3}{c}{\textbf{Scenario 1.2}}
& \multicolumn{3}{c}{\textbf{Scenario 1.3}} \\
\cmidrule(lr){2-4}\cmidrule(lr){5-7}\cmidrule(lr){8-10}
& TV & EST & CR & TV & EST & CR & TV & EST & CR \\
\midrule
\multicolumn{5}{l}{\textbf{Tail parameters}} \\
$a_1$     & 0.5 & \EST{0.55}{0.52} & 0.95 & 0.5 & \EST{0.54}{0.47} & 0.95 & 0.5 & \EST{0.53}{0.45} & 0.94 \\
$a_2$     & 1.2 & \EST{1.23}{0.86} & 0.95 & 1.2 & \EST{1.25}{0.83} & 0.96 & 1.2 & \EST{1.25}{0.82} & 0.96 \\
$\sigma_1$& 0.5 & \EST{0.52}{0.30} & 0.95 & 0.5 & \EST{0.51}{0.29} & 0.94 & 0.5 & \EST{0.51}{0.28} & 0.95 \\
$\sigma_2$& 1.2 & \EST{1.25}{0.66} & 0.95 & 1.2 & \EST{1.23}{0.64} & 0.96 & 1.2 & \EST{1.23}{0.64} & 0.95 \\
$\gamma_1$& 0.3 & \EST{0.31}{0.40} & 0.96 & 0.2 & \EST{0.22}{0.39} & 0.95 & -0.1& \EST{-0.09}{0.35} & 0.95 \\
$\gamma_2$& 0.1 & \EST{0.09}{0.25} & 0.95 & -0.2& \EST{-0.20}{0.30} & 0.96 & -0.3& \EST{-0.30}{0.32} & 0.95 \\
\midrule
\multicolumn{5}{l}{\textbf{Threshold parameters}} \\
$u_1$     & 5.5 & \EST{5.51}{0.21} & 0.95 & 5.5 & \EST{5.50}{0.14} & 0.94 & 5.5 & \EST{5.50}{0.12} & 0.95 \\
$u_2$     & 6.7 & \EST{6.70}{0.08} & 0.94 & 6.7 & \EST{6.70}{0.08} & 0.95 & 6.7 & \EST{6.70}{0.09} & 0.95 \\
\midrule
\multicolumn{5}{l}{\textbf{Bulk parameters}} \\
$L[1,2]$   & 0.70   & \EST{0.70}{0.05}  & 0.96 & 0.70   & \EST{0.70}{0.05}  & 0.96 & 0.70   & \EST{0.70}{0.05}  & 0.96 \\
$L[2,2]$   & 0.71 & \EST{0.71}{0.05} & 0.96 & 0.71 & \EST{0.71}{0.05} & 0.96 & 0.71 & \EST{0.71}{0.05} & 0.96 \\
$\mu_1$    & 3.5   & \EST{3.50}{0.09}  & 0.97 & 3.5   & \EST{3.50}{0.09}  & 0.98 & 3.5   & \EST{3.50}{0.09}  & 0.98 \\
$\mu_2$    & 4.0     & \EST{4.00}{0.13}    & 0.98 & 4.0     & \EST{4.00}{0.13}    & 0.98 & 4.0     & \EST{4.00}{0.13}    & 0.98 \\
$s_1$      & 1.0     & \EST{1.00}{0.07}    & 0.97 & 1.0     & \EST{1.00}{0.07}    & 0.96 & 1.0     & \EST{1.00}{0.07}    & 0.96 \\
$s_2$      & 1.5   & \EST{1.50}{0.10}   & 0.97 & 1.5   & \EST{1.50}{0.10}   & 0.98 & 1.5   & \EST{1.50}{0.10}   & 0.98 \\
\bottomrule
\bottomrule
\end{tabular}
\label{tab:tail_thres_Scen1.1-1.3}
\end{table}

Posterior samples are obtained using the AFSS sampler with three parallel chains, each of length 30,000, a burn-in of 20,000, and thinning by a factor of 10. Each scenario is replicated 1,000 times to assess estimation variability.
Table \ref{tab:tail_thres_Scen1.1-1.3} reports average posterior means, the average lengths of 95\% credible intervals (in brackets), and coverage rates (CR). Across all parameters and scenarios, CR values are close to 0.95, indicating accurate recovery of both marginal and dependence parameters.

\subsection{Misspecification scenario (Scenario 2)}

We now assess robustness under departures from the model’s asymptotic dependence assumption.
This is relevant because our model is always asymptotically dependent ($\chi>0$), and applying it to asymptotically independent data can, in principle, induce bias.

We generate data from a bivariate normal distribution, which is asymptotically independent ($\chi=0$; \citealt{sibuyaBivariateExtremeStatistics1960}). Bulk parameters match those in Table \ref{tab:tail_thres_Scen1.1-1.3}.
We compare our bivariate extreme mixture model (BEMM) to the bivariate mixture copula model (BMCM) of \citeauthor{Andr__2024}, using their best-performing configuration for Gaussian data: a Student-t copula for the bulk and an inverted Gumbel copula for the tail.
For each method, we assess overall dependence through the Kendall’s rank coefficient, defined as $$\tau = 2\mathbb{P}((X_1-X_1^\prime)(X_2-X_2^\prime)>0)$$ for a random pair $(X_1,X_2)$ and its independent replicate $(X_1^\prime,X_2^\prime)$, with marginal distribution $F_1$ and $F_2$.
For tail dependence, we use two measures. The first is $\chi$, defined in \eqref{eq:chi}, which distinguishes between asymptotic dependence ($\chi > 0$) and asymptotic independence ($\chi = 0$). For our model, based on the mGPD in \eqref{eq:explict_mgpd}, $\chi$ has the closed-form expression (see Section \ref{sec:appendix-proofs} of the Supplementary Materials for derivation):
\begin{align} 
\chi = 1-\left(\frac{1+a_{(1)}}{1+a_{(2)}} \right )^{1+a_{(2)}^{-1}}\frac{a_{(2)}}{a_{(1)}}\frac{a_1a_2}{a_1a_2+a_1+a_2},
\label{eq:theoretical_chi}
\end{align} where $a_{(1)}=\min \{a_1,a_2\}$, $a_{(2)}=\max \{a_1, a_2\}$. 
This expression further shows that our model is asymptotically dependent, as $\chi>0$ for all $a_1,a_2>0$.

While $\chi(r)$ is effective for distinguishing between asymptotic dependence and independence, it provides limited information on the strength of asymptotic independence. To address this, we also use $\bar{\chi}$ \citep{colesDependenceMeasuresExtreme1999}, defined as
\begin{align*}
    \bar{\chi} = \lim_{r\rightarrow 1^-} \bar{\chi}(r), \qquad 
    \bar{\chi}(r)=\frac{2\log \mathbb{P}(F_1(X_1)>r)}{\log \mathbb{P}(F_1(X_1)>r), F_2(X_2)>r)}-1, 
\end{align*}
which differentiates among asymptotically independent cases. 
$\bar{\chi}$ takes values in $[-1, 1)$, increases with the strength of extremal dependence, and approaches 1 for asymptotically dependent distributions.

For comparability with BMCM’s frequentist inference, we obtain point estimates for the BEMM by averaging the empirical dependence metrics ($\chi$, $\bar{\chi}$, and Kendall’s $\tau$) computed from 3,000 replicated datasets, each containing 2,000 observations. These replicated datasets, denoted $\boldsymbol{y}_{\text{rep}}$, are drawn from the posterior predictive distribution:
\begin{align}
p(\boldsymbol{y}_{\text{rep}}|\boldsymbol{X})=\int p(\boldsymbol{y}|\boldsymbol{\theta})p(\boldsymbol{\theta}|\boldsymbol{X})\mathrm{d} \boldsymbol{\theta},
    \label{eq:dist_rep}
\end{align}
where $p(\boldsymbol{y}|\boldsymbol{\theta})$ represents the density function in \eqref{eq:pdf}, and $p(\boldsymbol{\theta}|\boldsymbol{X})$ is the posterior distribution.

\begin{figure}
\begin{center}
\includegraphics[width=3in]{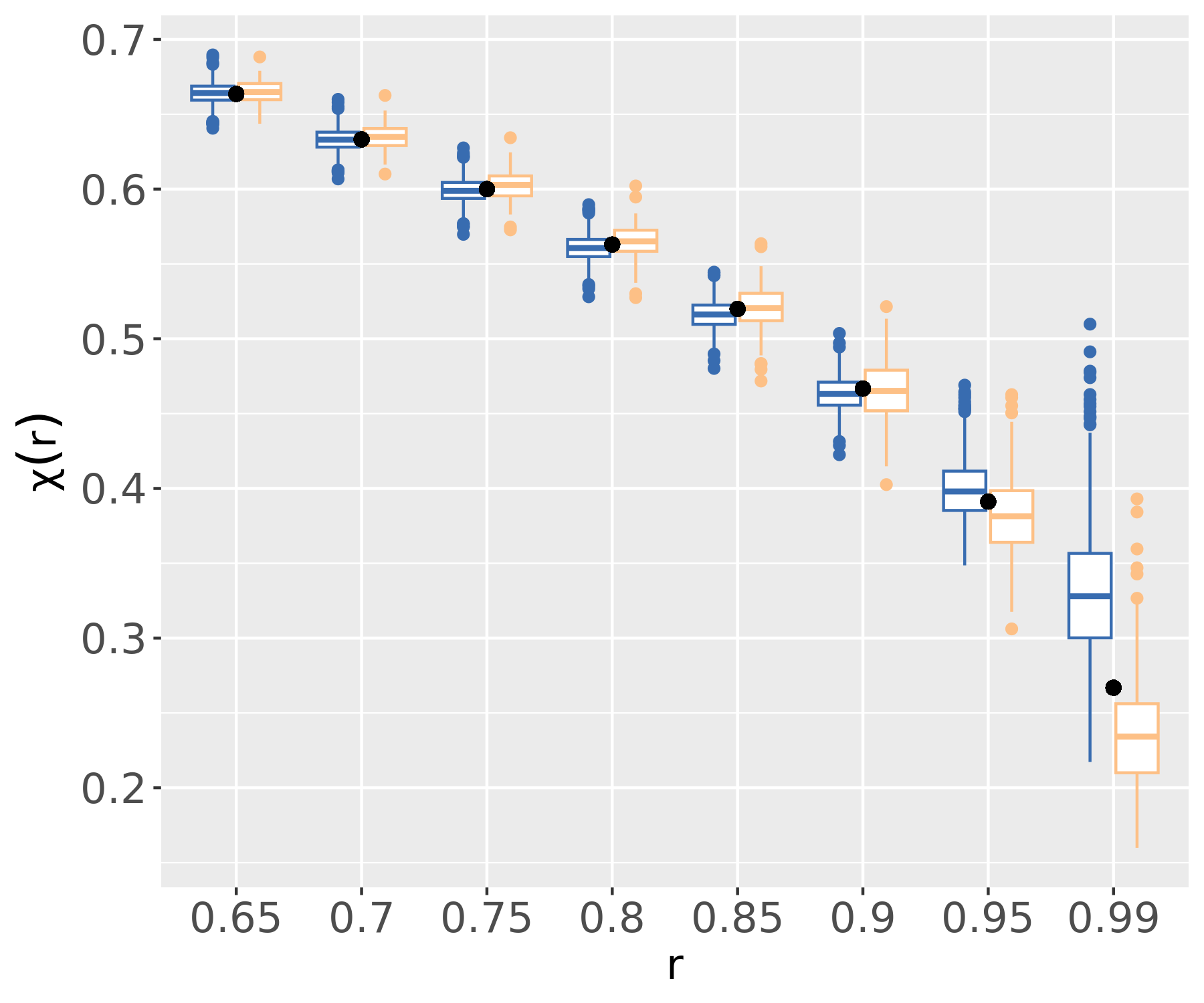}
\includegraphics[width=3in]{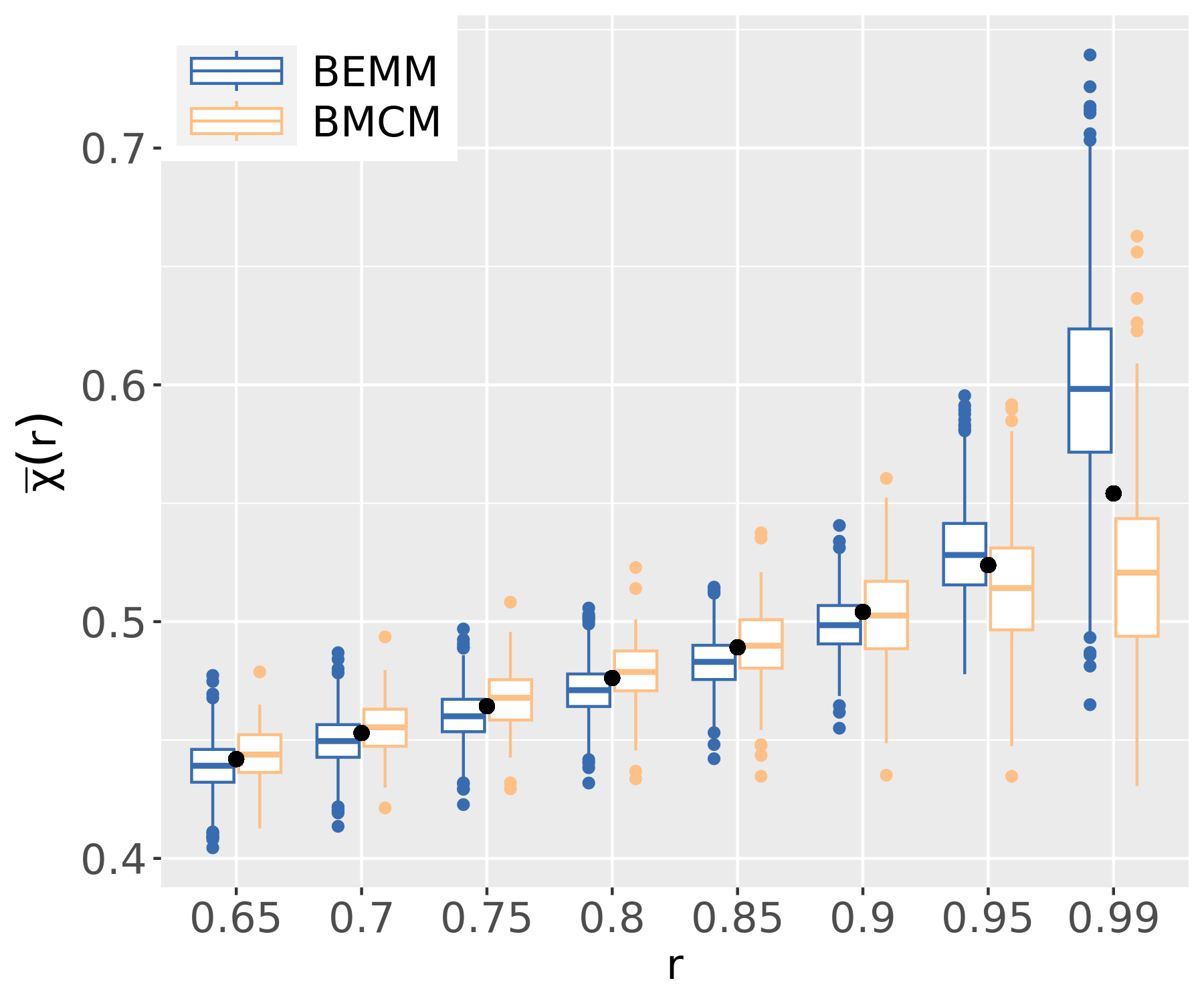}
\includegraphics[width=3in]{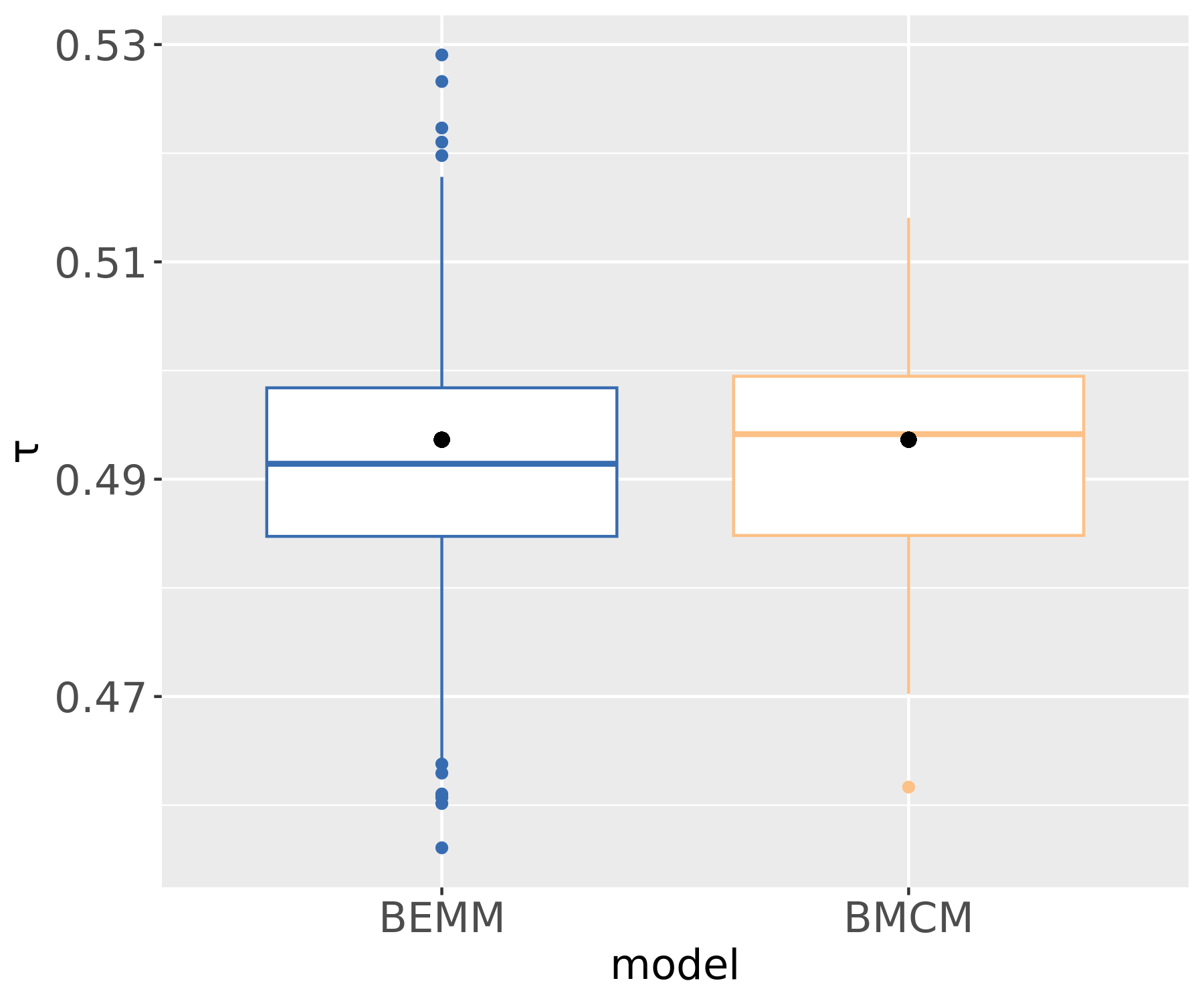}
\end{center}
\caption[\footnotesize{Boxplots comparing extremal dependence metrics ($\chi(r)$, $\bar{\chi}(r)$), and Kendall’s $\tau$ for the bivariate extreme mixture model (BEMM, blue) and the bivariate mixture copula model (BMCM, orange) under Scenario 2 (Gaussian data, asymptotic independence).}]{\footnotesize{Boxplots comparing extremal dependence metrics ($\chi(r)$, $\bar{\chi}(r)$), and Kendall’s $\tau$ for the bivariate extreme mixture model (BEMM, blue) and the bivariate mixture copula model (BMCM, orange) under Scenario 2 (Gaussian data, asymptotic independence).
Yellow boxes are based on estimates from 100 BMCM experiments. Blue boxes correspond to empirical metrics from 1,000 BEMM experiments, where each experiment reports the point estimates of the three metrics as averages over 3,000 replicated datasets. Black points denote the true values of the dependence metrics.
\label{fig:simulation2}}}
\end{figure}

Figure \ref{fig:simulation2} presents boxplots of $\chi(r)$, $\bar{\chi}(r)$, and Kendall’s $\tau$ from 100 BMCM experiments and 1,000 BEMM experiments. The number of BMCM experiments matches those in \citeauthor{Andr__2024}, while the number of BEMM experiments is increased to account for the additional parameter uncertainty arising from prior specification. For tail dependence, both models perform similarly for $r \leq 0.95$. At $r = 0.99$, the BEMM slightly overestimates, and the BMCM slightly underestimates these quantities; in both cases, true values (black dots) lie within the interquartile range. Kendall’s $\tau$ estimates are comparable across models, reflecting its lower sensitivity to tail misspecification.

Overall, both simulation studies show that the BEMM delivers accurate parameter estimation under correct specification, maintains nominal credible-interval coverage, and yields reasonable dependence estimates even under asymptotic independence. These results suggest that the model’s asymptotic dependence property is not unduly problematic in moderate-sample, low-dimensional settings, where the empirical distinction between weak dependence and independence is inherently blurred.

\section{Application}
\label{sec:app}
The frequency of heatwaves in the United Kingdom has increased markedly in recent years, with the 2022 event setting a record temperature of $40.3^{\circ}$C and causing substantial societal impacts, including an estimated 3,000 excess deaths. 
To study temperature behaviour during such extremes and compare patterns across regions, we analyse daily maximum temperatures from two stations: Bishop’s Lane (Ringmer, East Sussex) and Model Farm (Shirburn, Oxfordshire). 
We aim to quantify joint tail risk while accounting for uncertainty in both thresholds and dependence parameters. The data, obtained from the Centre for Environmental Data Analysis\footnote{\href{https://catalogue.ceda.ac.uk/uuid/dbd451271eb04662beade68da43546e1}{https://catalogue.ceda.ac.uk/uuid/dbd451271eb04662beade68da43546e1}}, cover 2016-2021, yielding 1,945 daily maximum temperature observations after removing records with missing values.


\subsection{Pre-processing and model specification}
Daily temperatures exhibit strong seasonality and short-term dependence, which we model using sinusoidal terms for the annual cycle and a first-order autoregressive term for temporal correlation.
Specifically, for the air temperature $Y_{t,j}$ on day $t$ at site $j \in \{1, 2\}$, we assume
\begin{align}
Y_{t,j} = \beta_{0,j} + \beta_{1,j}\sin\left(\frac{2\pi}{365}t\right) + \beta_{2,j}\cos\left(\frac{2\pi}{365}t\right) + \beta_{3,j}Y_{t-1,j} + \varepsilon_{j},
\label{eq: temp_lr}
\end{align}
where $\beta_{i,j}$ $(i = 0, 1, 2, 3)$ are regression coefficients and $\varepsilon_{j}$ is a noise term. Stationarity of the residuals is verified using autocorrelation plots (provided in Section~\ref{sec:appendix_statchecks} of the Supporting Materials).
The joint negative residual $\boldsymbol{E} = -(\varepsilon_{1},\varepsilon_{2})$, as shown in Figure \ref{fig:UK_Temp_Resi}, presents heavy right tails and strong extremal dependence.
To characterise these joint residuals, we fit our BEMM model to $\boldsymbol{E}$ using the priors from the simulation study, with the additional constraint $\gamma_j + 1/a_j \geq 0$ $(i = 1, 2)$ to ensure finite marginal expectations. 
Finite marginal expectations enable the use of proper scoring rules for model evaluation, as discussed at the end of this section.
The derivation of these constraints is provided in Section~\ref{sec:appendix-proofs} of the Supplementary Materials.
We run three parallel chains of 20,000 iterations using the hybrid MH and AFSS sampler as in the simulations, and discard the first 10,000 as burn-in and thinning by a factor of 10.
Convergence is assessed using trace plots and the Gelman–Rubin diagnostic (reported in Section~\ref{sec:appendix_mcmcchecks} of the Supplementary Materials).
\begin{figure}[!htbp]
\begin{center}
\includegraphics[width=3in]{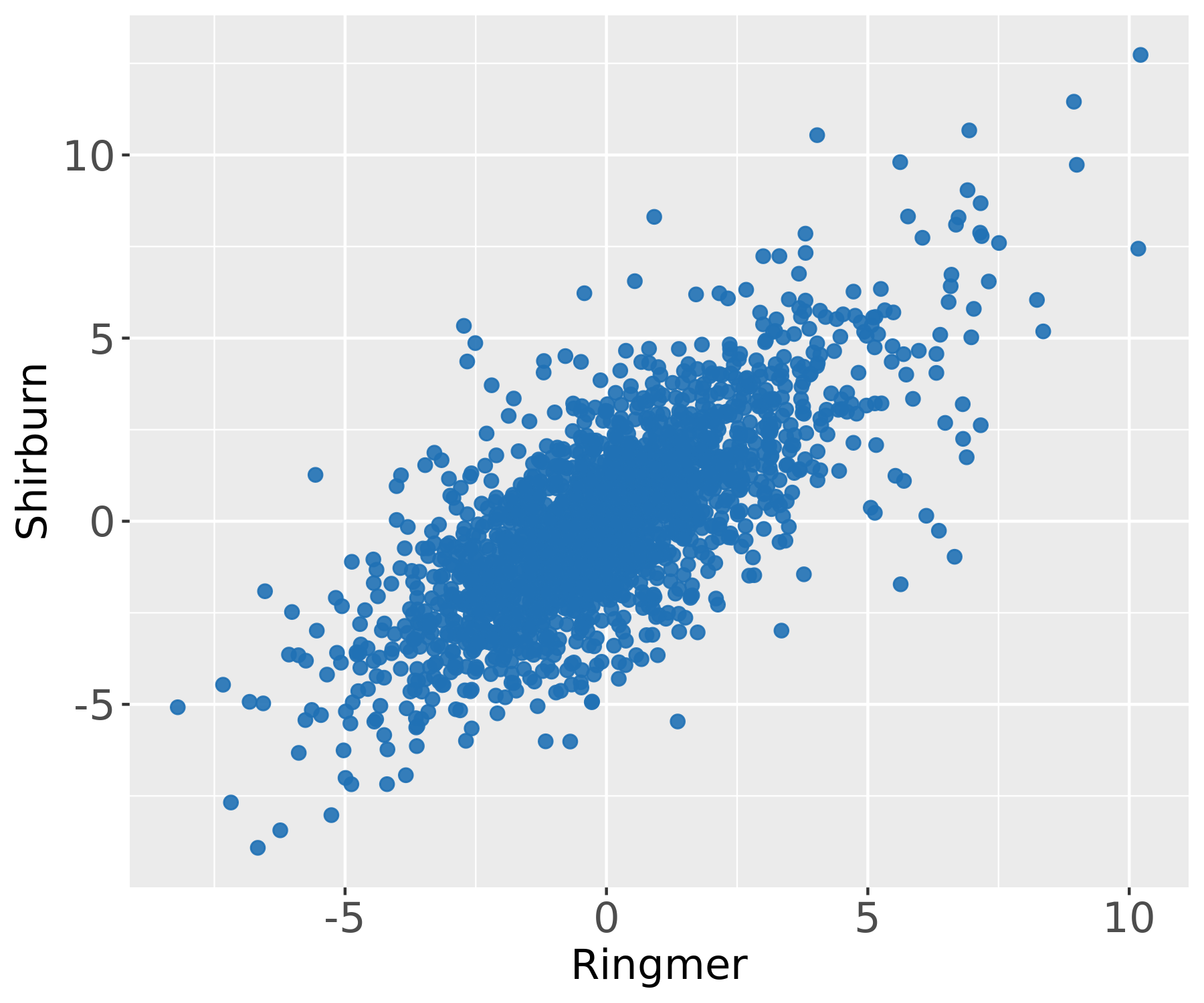}
\end{center}
\caption{\footnotesize{Scatterplot of the negative residuals from model \eqref{eq: temp_lr}.}
\label{fig:UK_Temp_Resi}}
\end{figure}

\subsection{Results}
\paragraph{Thresholds.} Figure \ref{fig: hist of posterior} presents histograms of the posterior samples for all parameters.
Notably, posterior distributions for $u_1$ and $u_2$ are multimodal.
For $u_1$, the dominant mode occurs near the 92nd percentile, with smaller peaks at the 89th and 94th percentiles. 
For $u_2$, the histogram is left-skewed, with clusters around the 93rd, 94th, and 95th percentiles.
To investigate this behaviour, we examine the joint posterior density $\pi(\boldsymbol{u} \mid \boldsymbol{X})$ together with a binned expected log-likelihood surface
$A(\boldsymbol{u})=\mathbb{E}[\log p( \boldsymbol{X}\mid\boldsymbol{\theta}_t,\boldsymbol{\theta}_b,\boldsymbol{u} )]$, computed by averaging $\log p( \boldsymbol{X}\mid\boldsymbol{\theta}_t,\boldsymbol{\theta}_b,\boldsymbol{u} )$ over posterior draws whose $\boldsymbol{u}$ values fall within each bin, as shown in Figure \ref{fig: posterior_density}.
The joint posterior exhibits three well-separated modes, and $A(\boldsymbol{u})$ reveals near-equal values across the modal regions. 
This indicates that the multimodality reflects genuine ambiguity in threshold selection supported by the likelihood under the UK temperature data, rather than a sampling artefact.

\begin{figure}
\begin{center}
\includegraphics[width=1.5in]{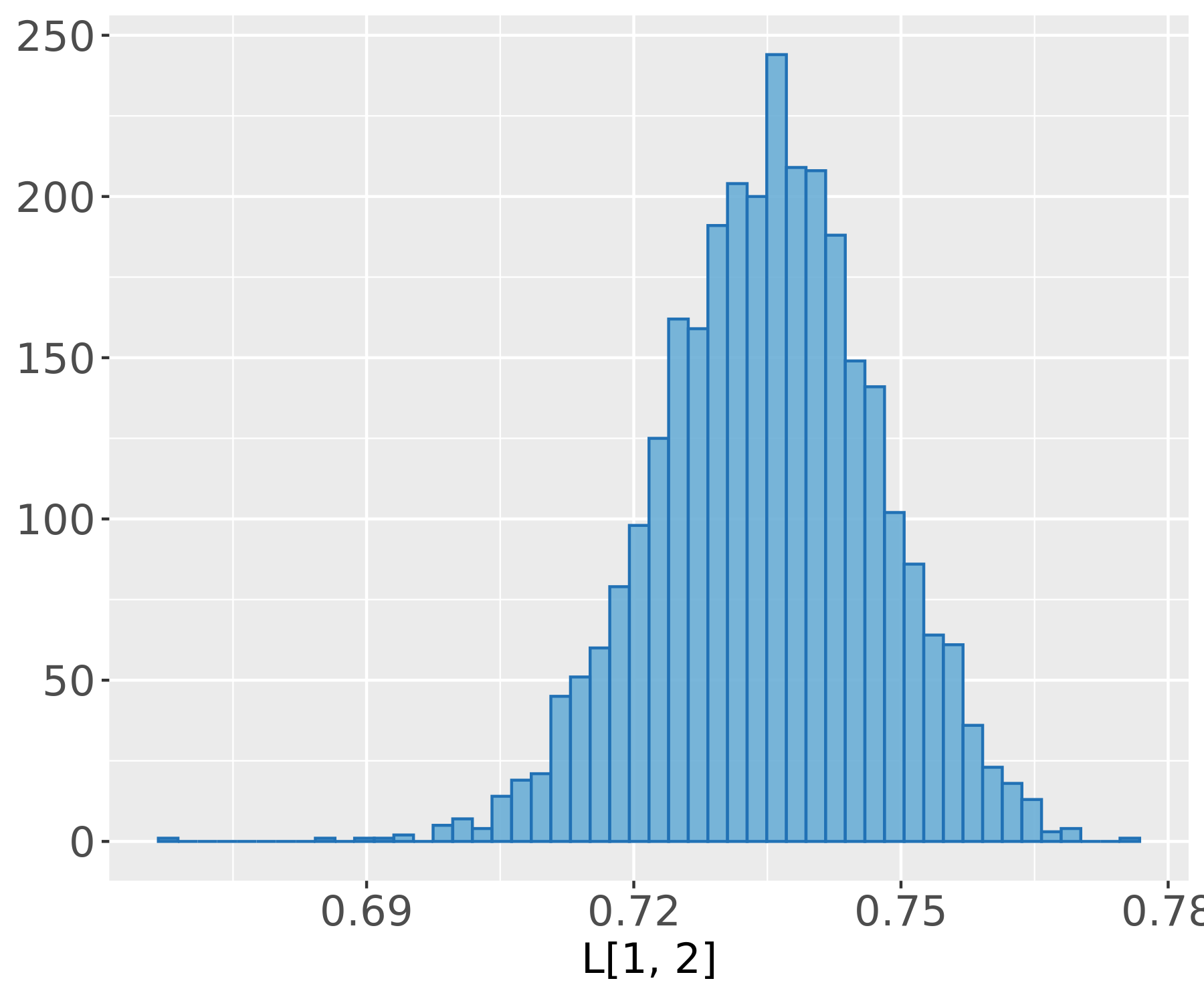}
\includegraphics[width=1.5in]{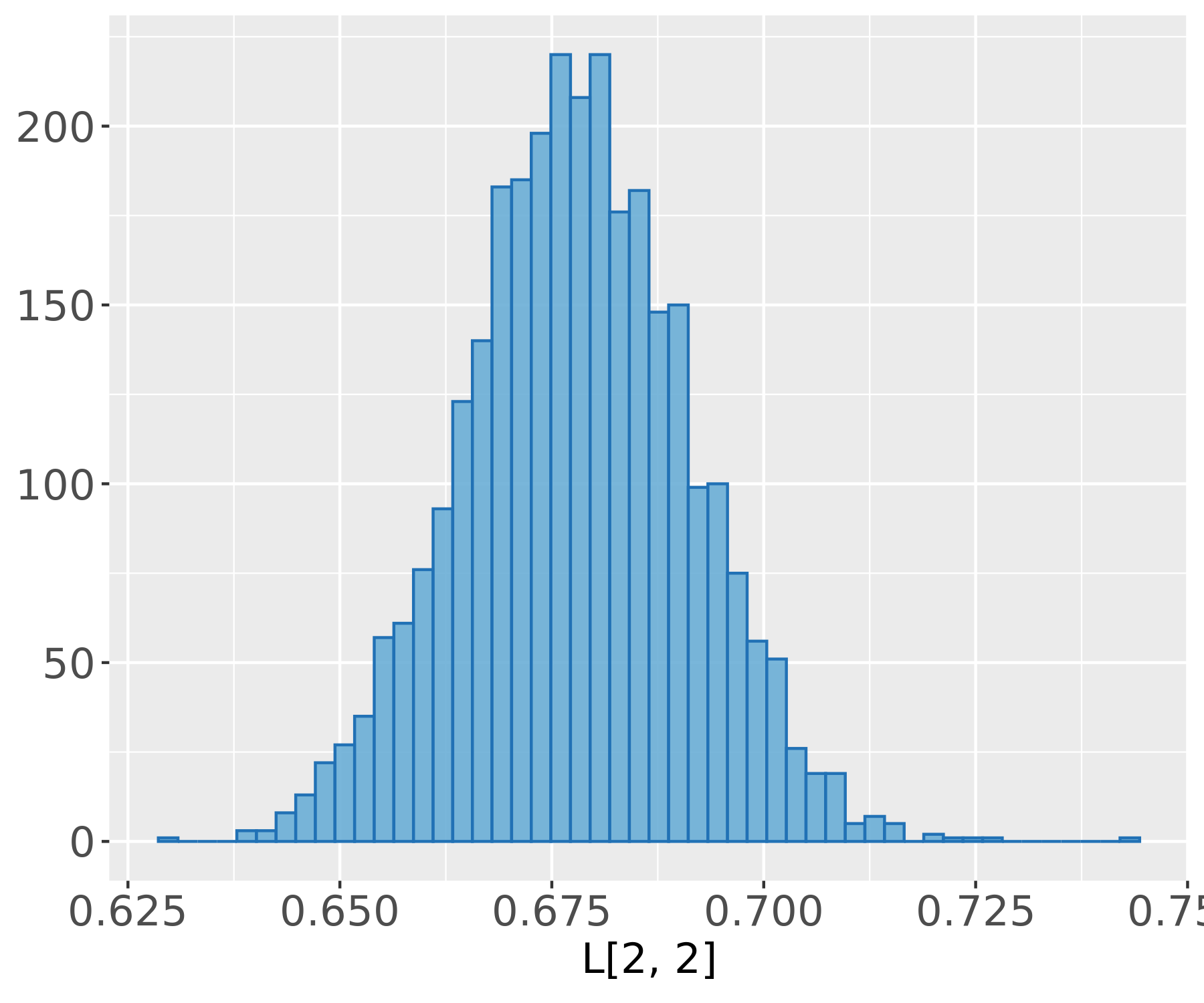}
\includegraphics[width=1.5in]{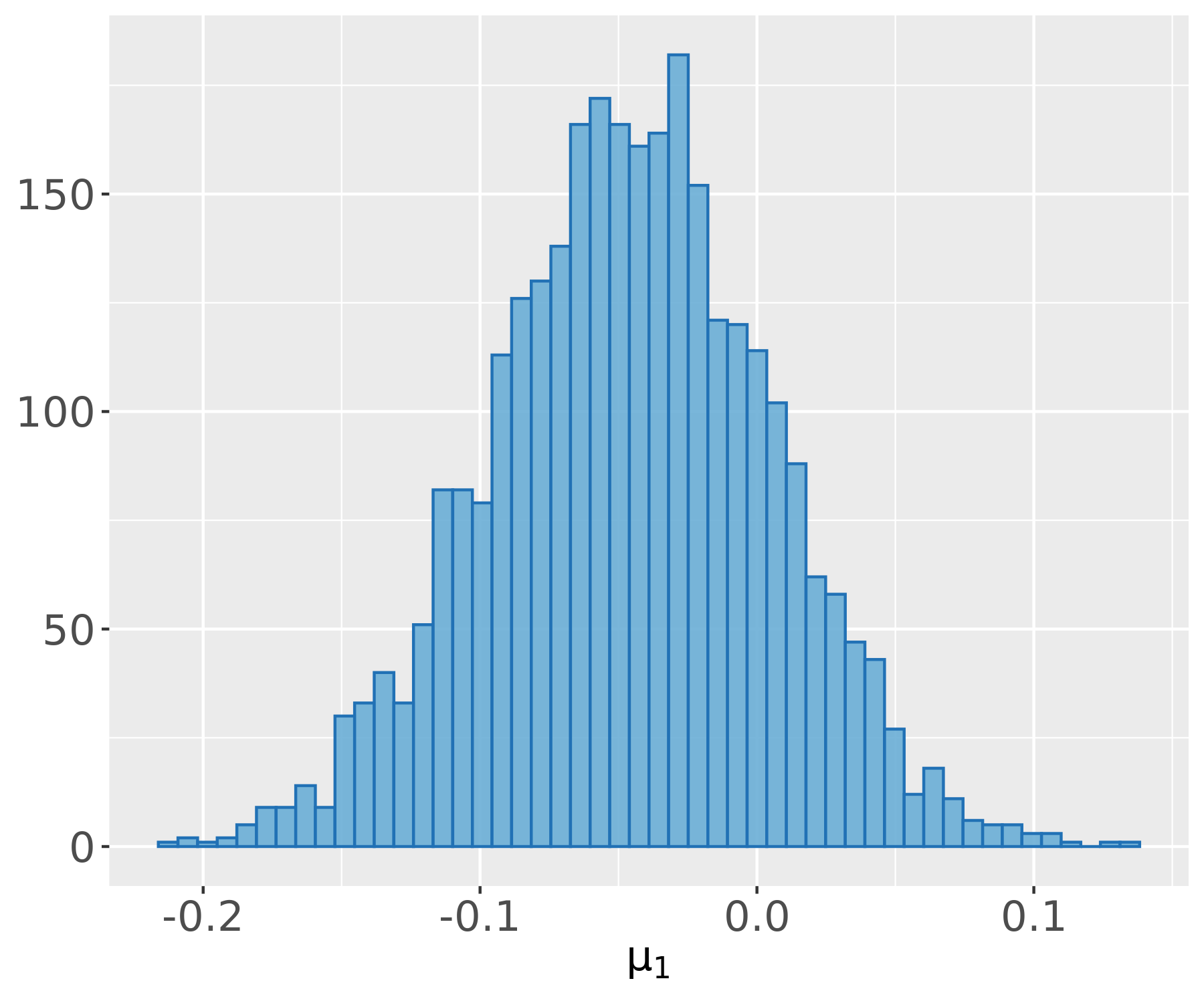}
\includegraphics[width=1.5in]{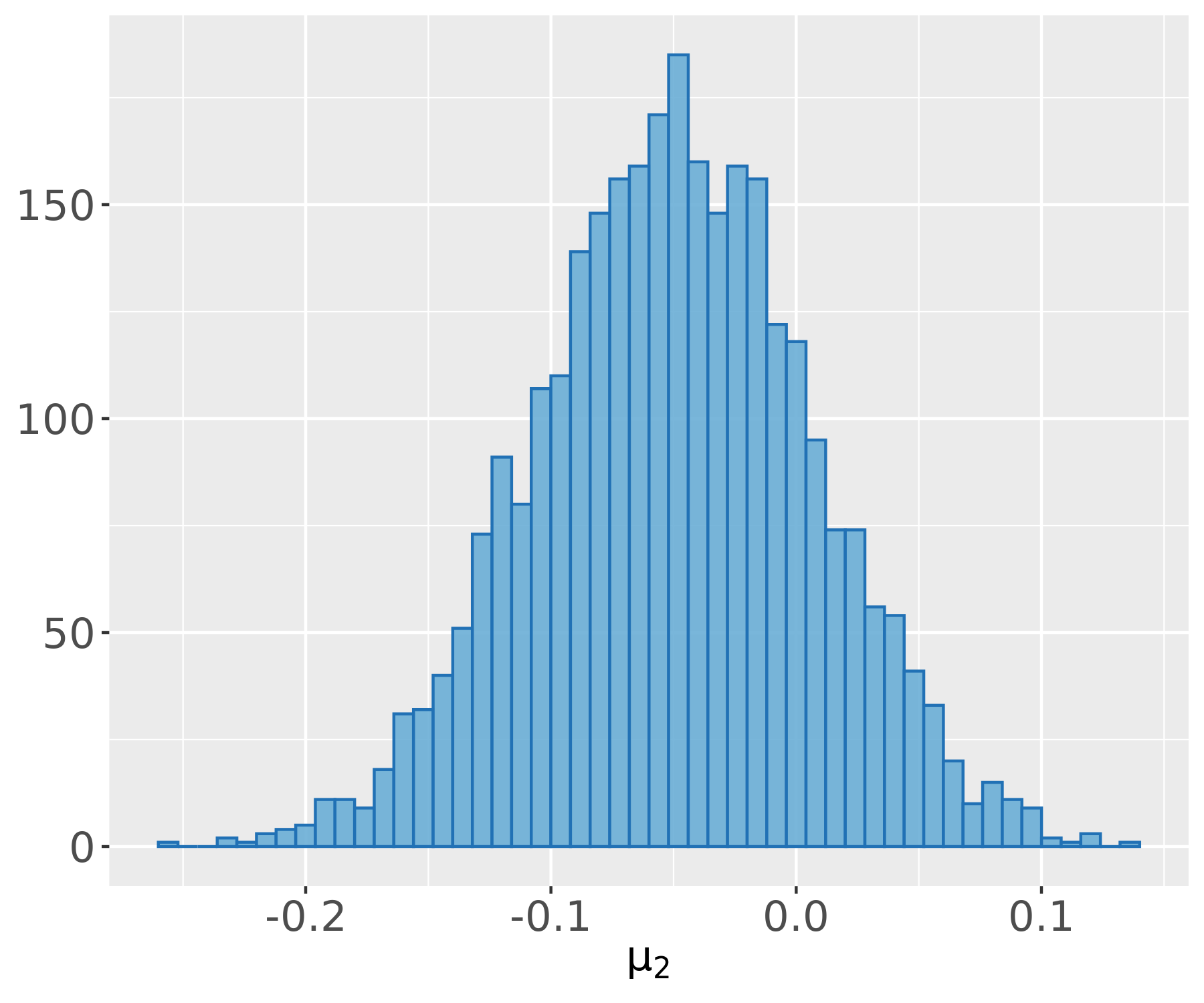}
\includegraphics[width=1.5in]{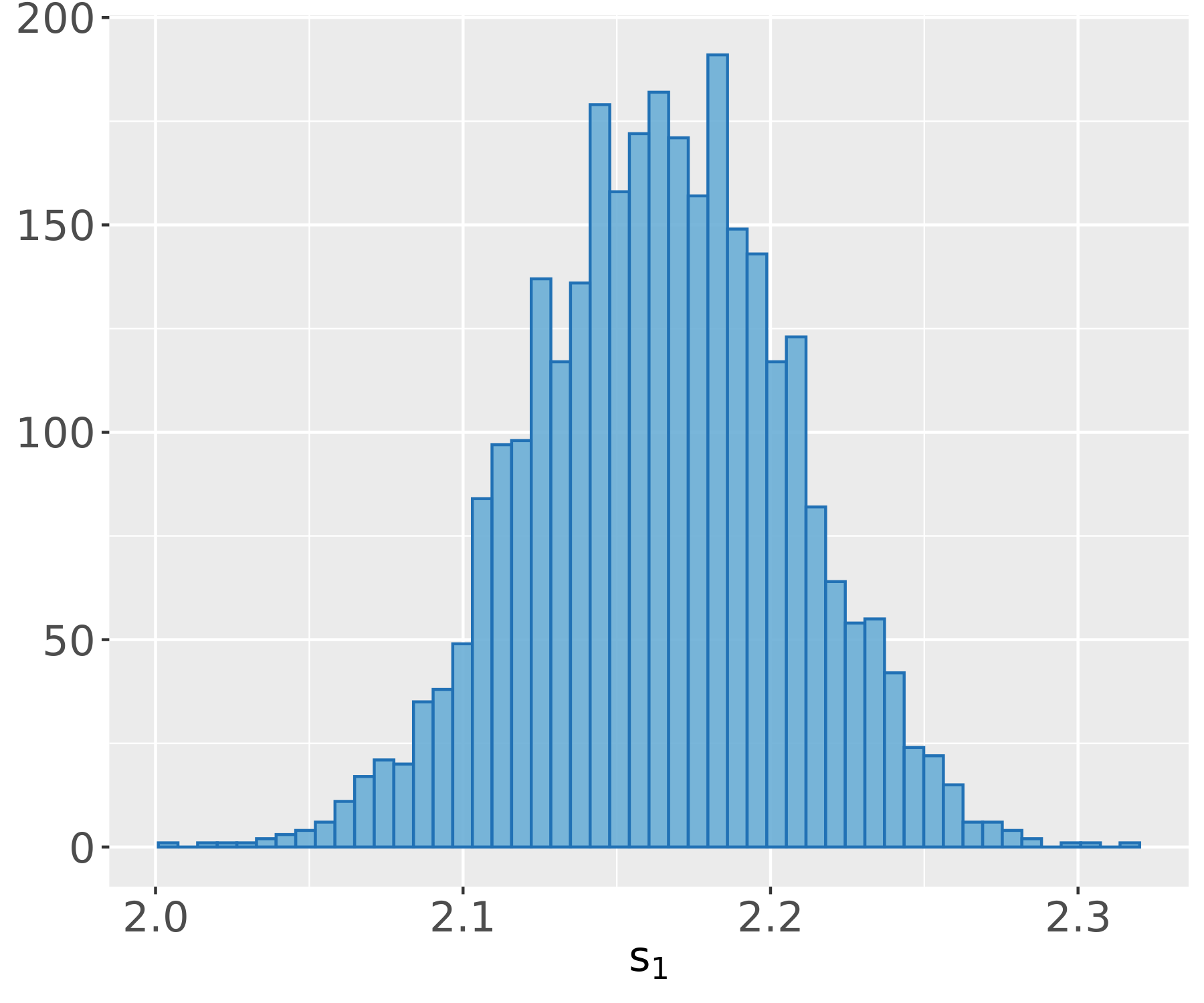}
\includegraphics[width=1.5in]{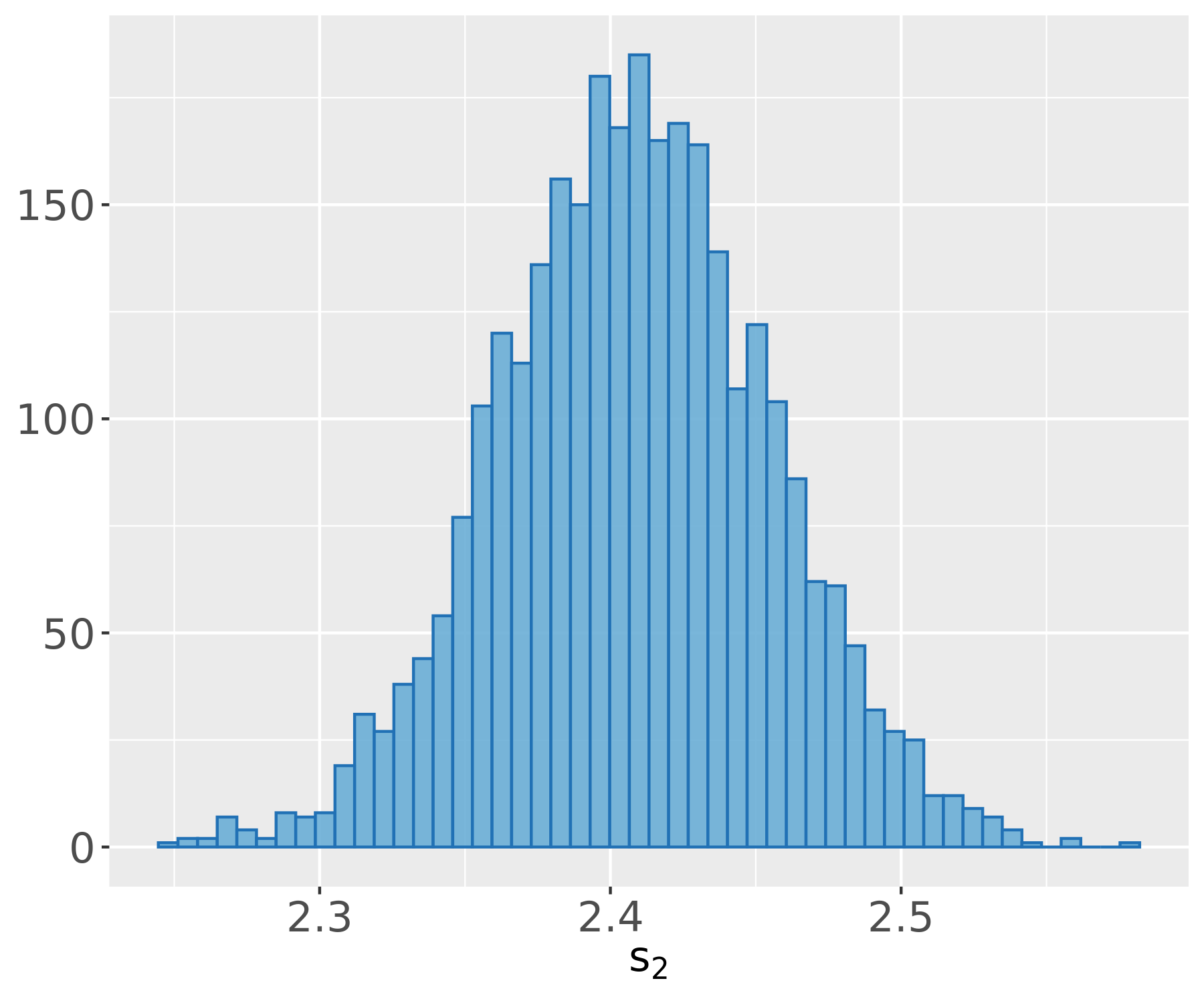}
\includegraphics[width=1.5in]{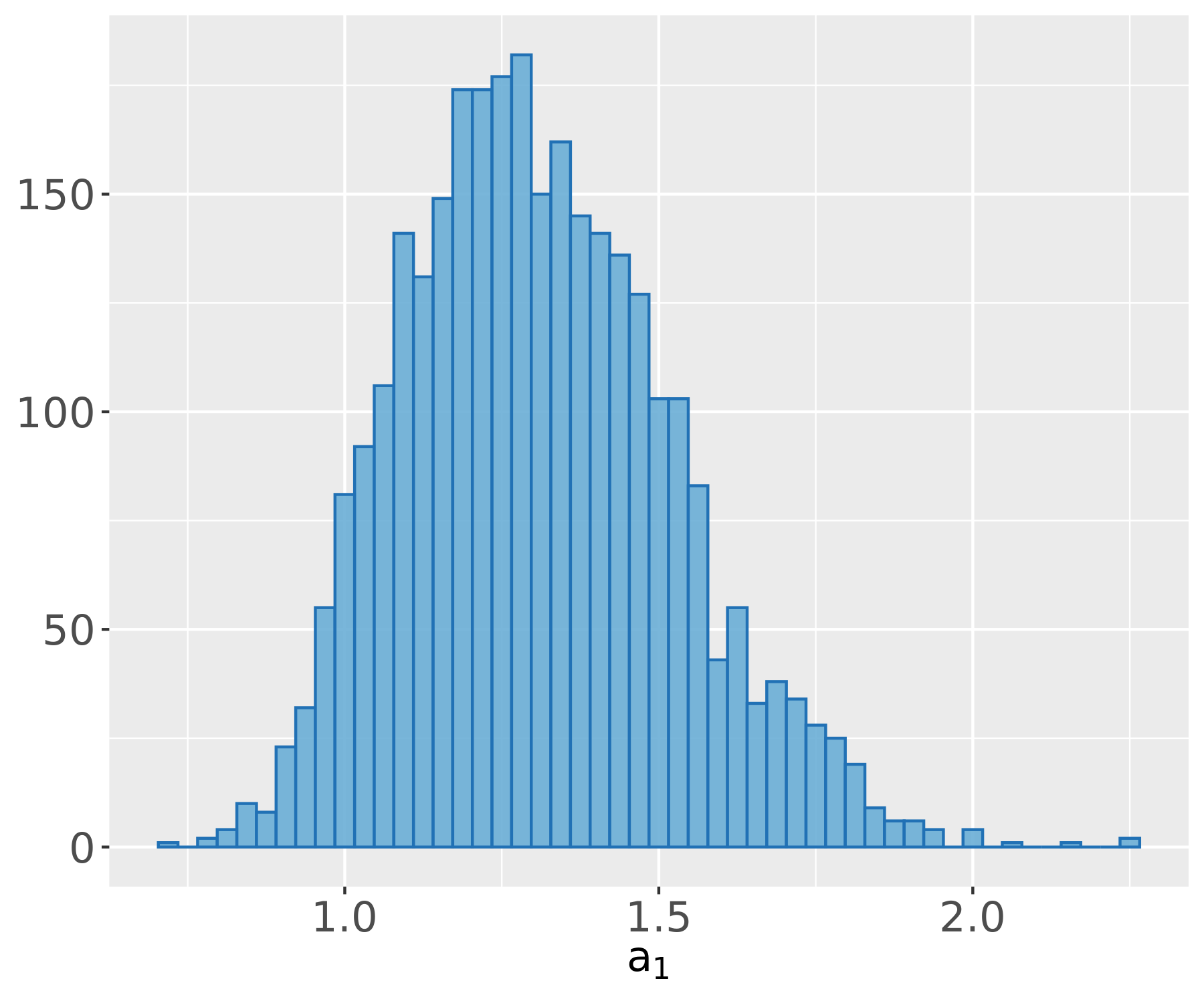}
\includegraphics[width=1.5in]{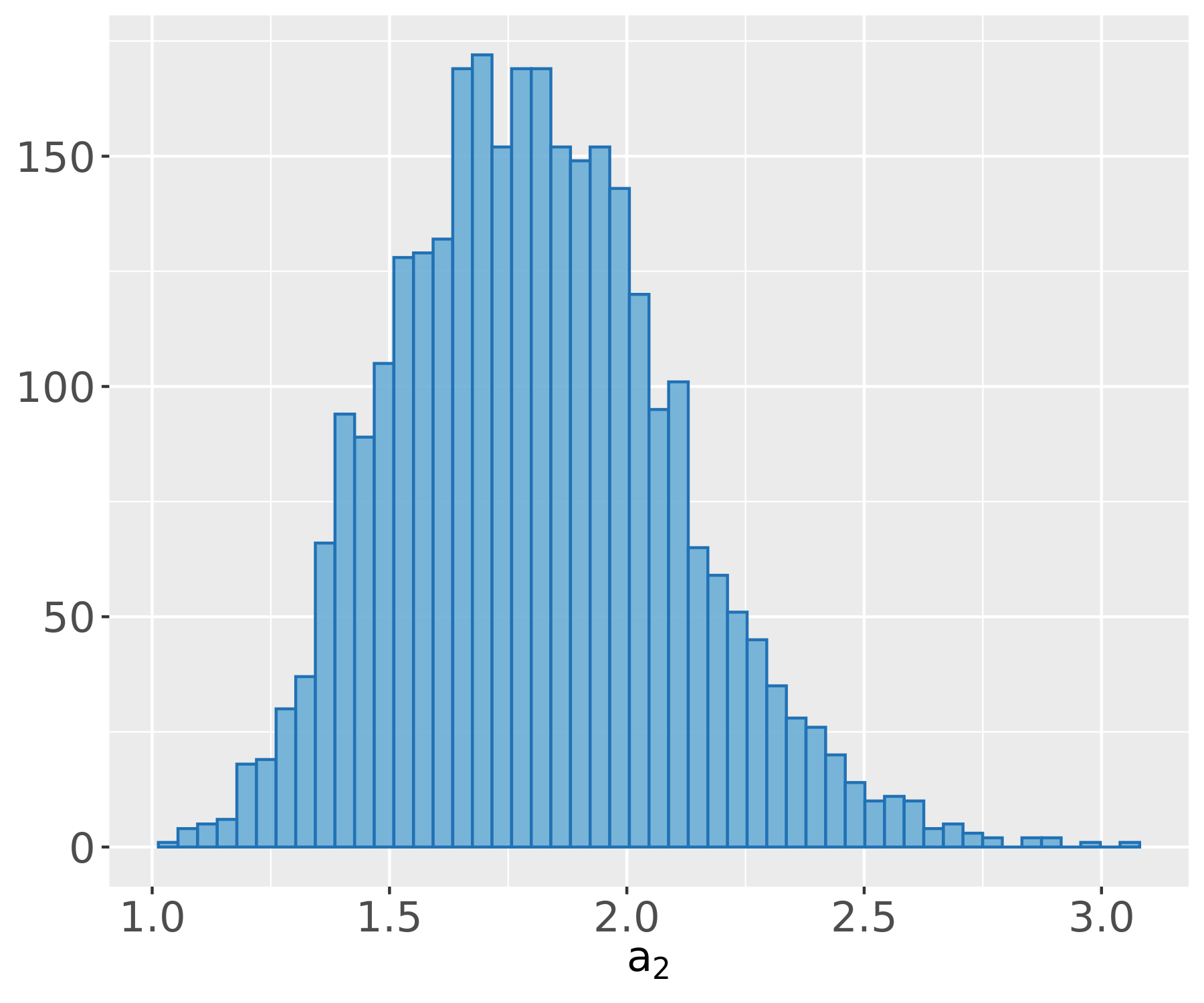}
\includegraphics[width=1.5in]{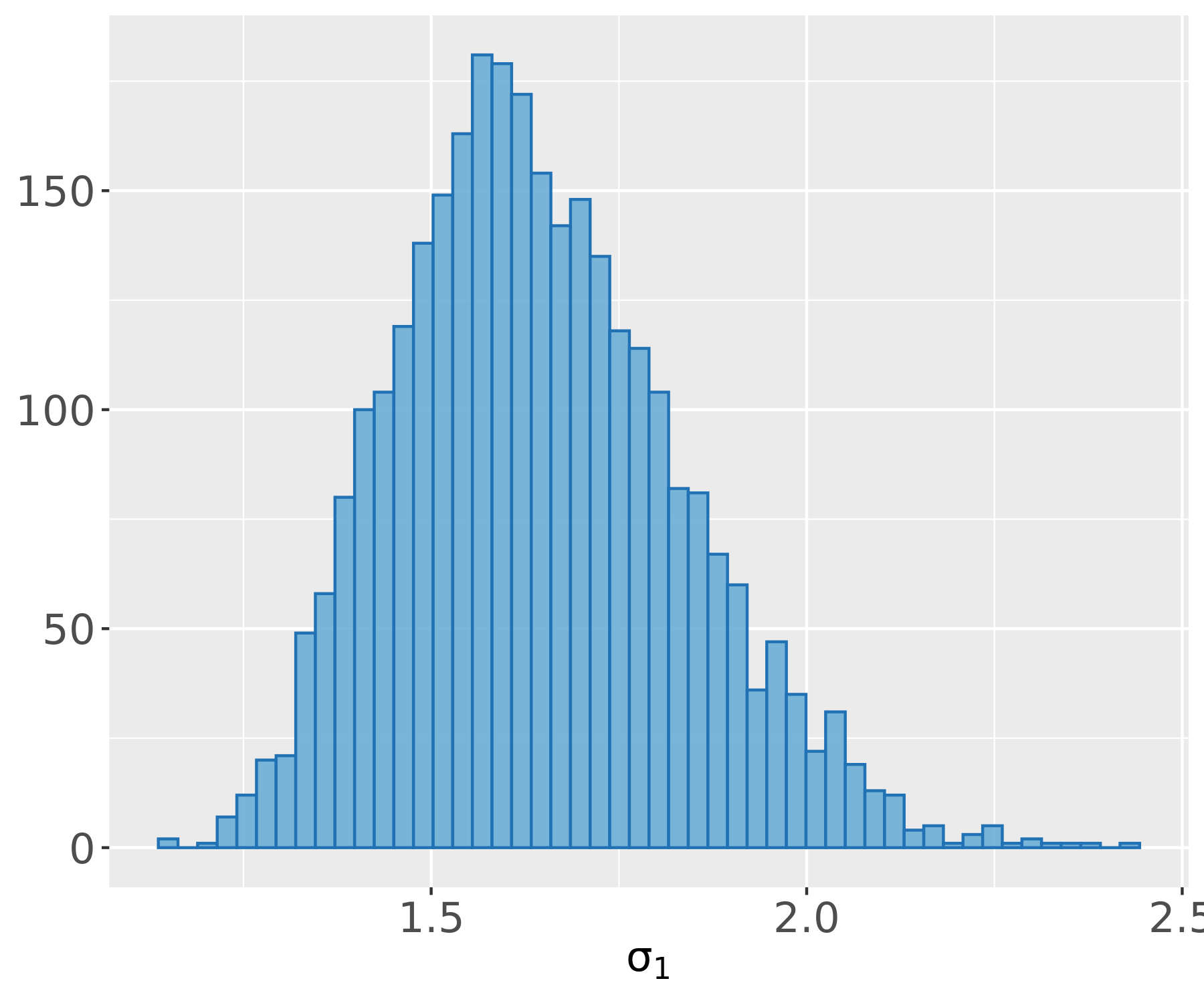}
\includegraphics[width=1.5in]{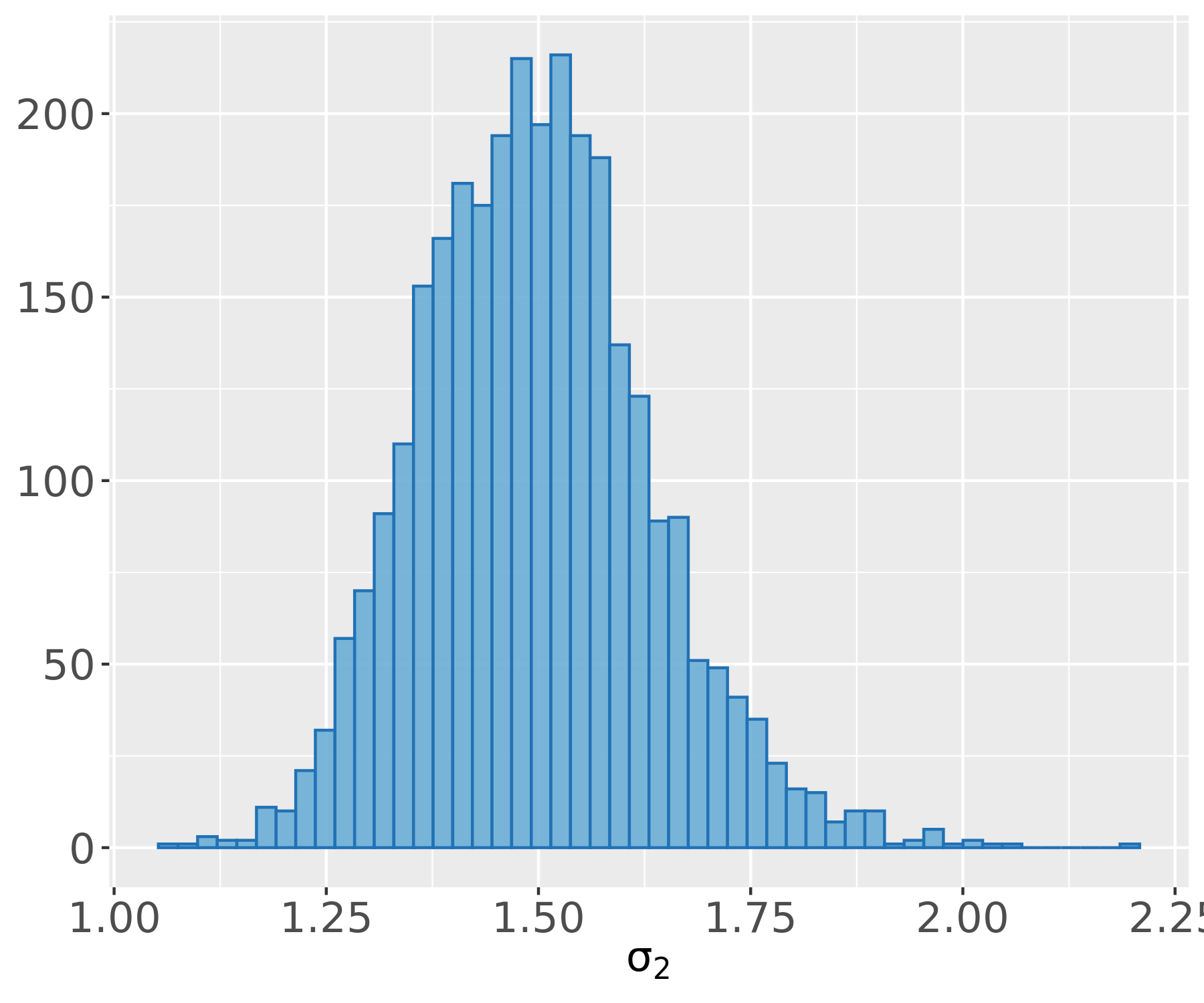}
\includegraphics[width=1.5in]{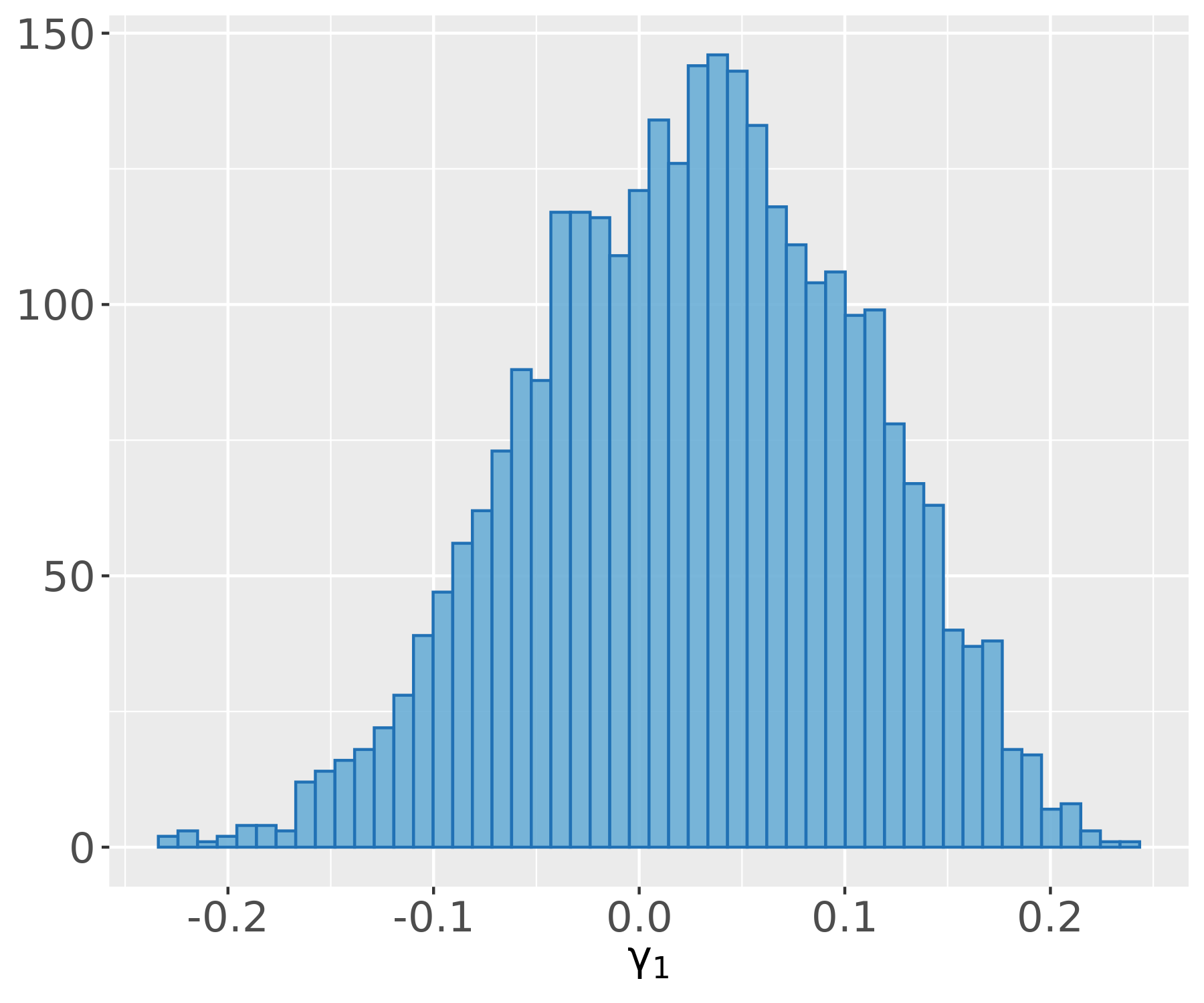}
\includegraphics[width=1.5in]{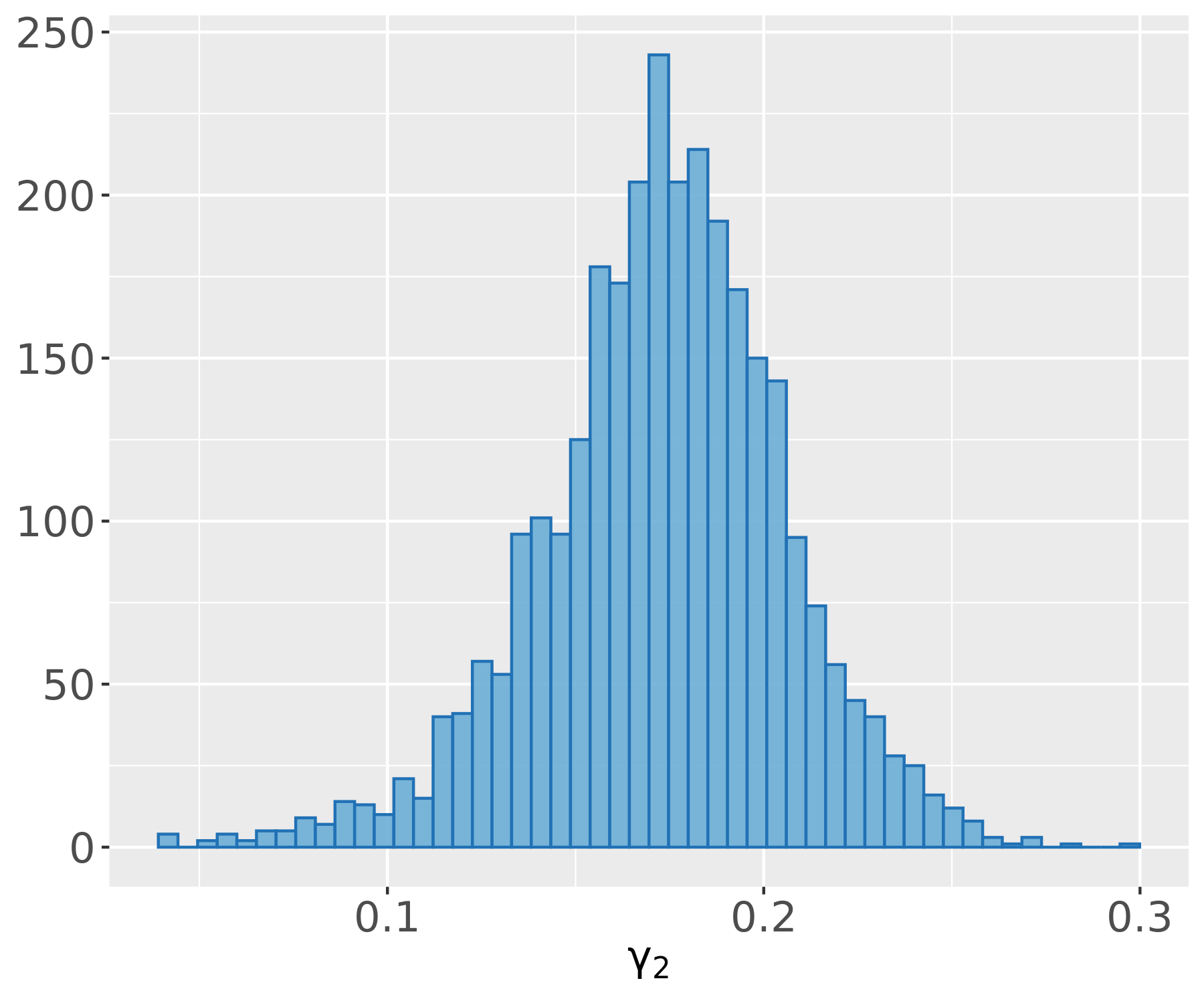}
\includegraphics[width=1.5in]{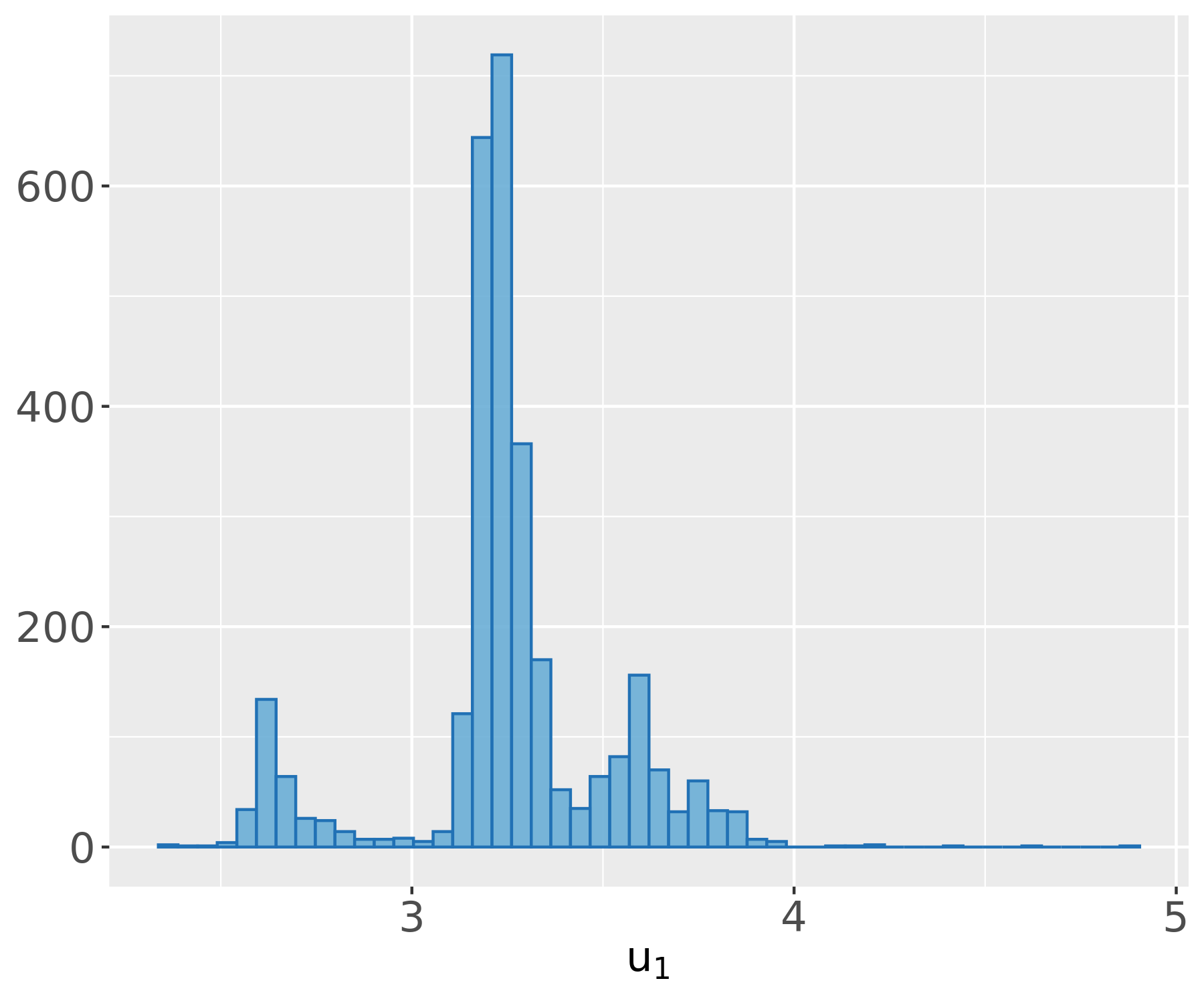}
\includegraphics[width=1.5in]{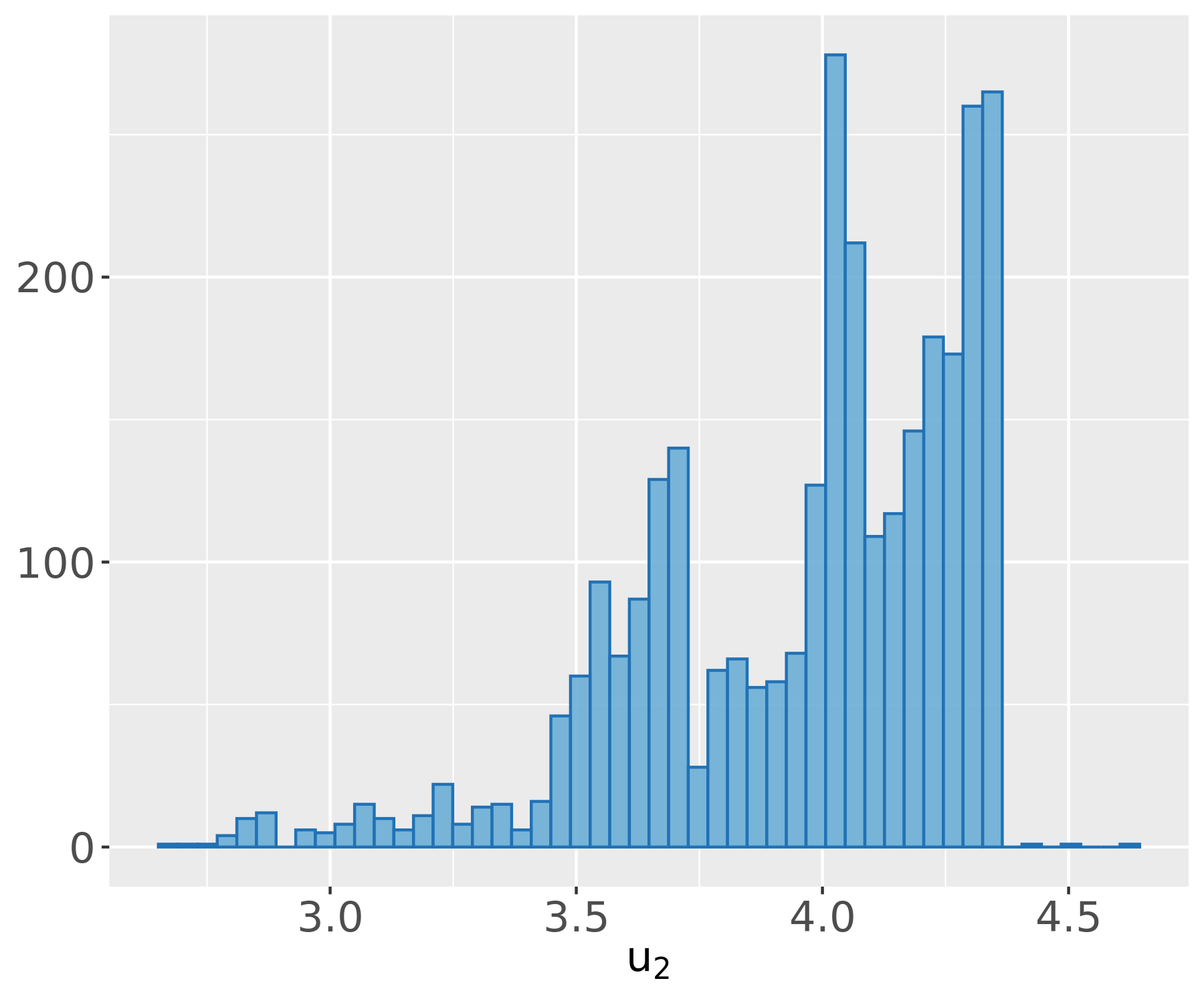}
\end{center}
\caption[\footnotesize{Histograms of the posterior distributions for all model parameters.}]{\footnotesize{Histograms of the posterior distributions for all model parameters. The first six plots correspond to bulk parameters; the remaining plots show tail parameters and threshold components (last two plots). }}
\label{fig: hist of posterior}
\end{figure}

\begin{figure}
\begin{center}
\includegraphics[width=3.5in]{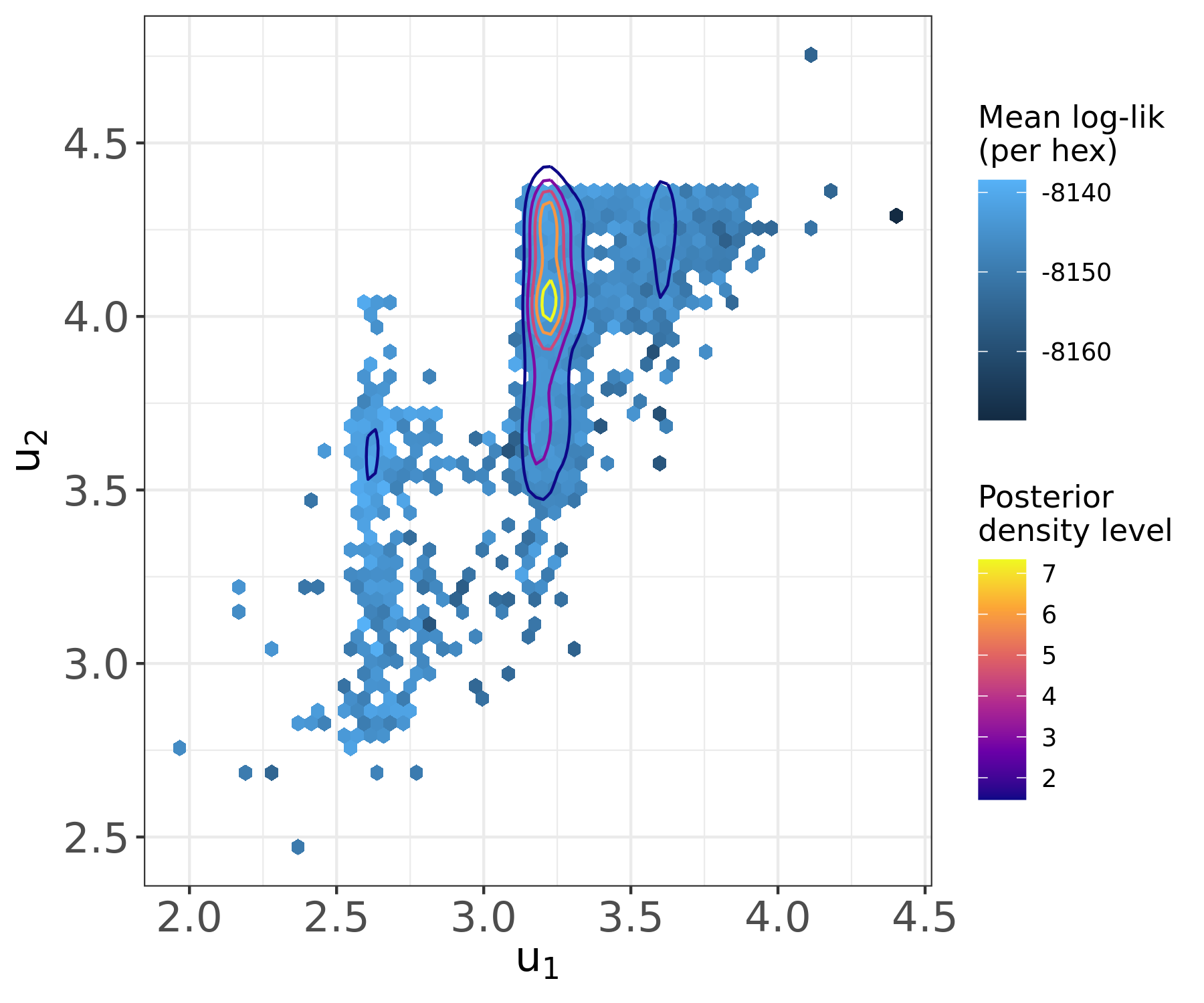}
\end{center}
\caption[\footnotesize{
Joint posterior density of the thresholds and the binned mean log-likelihood surface.}]{\footnotesize{
Joint posterior density of the thresholds and the binned mean log-likelihood surface.
Contours represent the joint posterior density $\pi(\boldsymbol{u} \mid \boldsymbol{X})$.
The background shading shows the binned mean log-likelihood,
$A(\boldsymbol{u}) = \mathbb{E}[\log p(\boldsymbol{X} \mid \boldsymbol{\theta}_t, \boldsymbol{\theta}_b, \boldsymbol{u})]$,
computed by averaging $\log p(\boldsymbol{X} \mid \boldsymbol{\theta}_t, \boldsymbol{\theta}_b, \boldsymbol{u})$
over posterior draws whose $\boldsymbol{u}$ values fall within each hexagonal bin.
}}
\label{fig: posterior_density}
\end{figure}

\paragraph{Marginal fit.} Marginal model performance is assessed using posterior predictive checks.
Specifically, we generate 3,000 replications of $\boldsymbol{E}$ from \eqref{eq:dist_rep} to construct posterior predictive distributions for marginal quantiles.
Figure \ref{fig:temp_qqplot} presents quantile–quantile (Q–Q) plots comparing posterior predictive quantiles from the replicated data to the empirical quantiles of $\boldsymbol{E}$, evaluated at uniformly spaced percentiles from 0 to 0.99. For both margins, the posterior predictive means lie close to the 45-degree line, which itself remains within the 95\% credible band. 
{For comparison, we include quantiles from a bivariate normal distribution, commonly used for residual modelling, fitted to $\boldsymbol{E}$ via maximum likelihood, though it lacks the tail correction offered by our mixture model.}
The results clearly show that our BEMM provides superior estimation, particularly for higher quantiles in both margins.

\paragraph{Dependence.} To assess dependence, we construct posterior predictive distributions for $\chi(r)$, $\bar{\chi}(r)$, and Kendall’s $\tau$ using the 3,000 replications of $\boldsymbol{E}$ generated above.
\begin{figure}[!htbp]
\begin{center}
\includegraphics[width=3in]{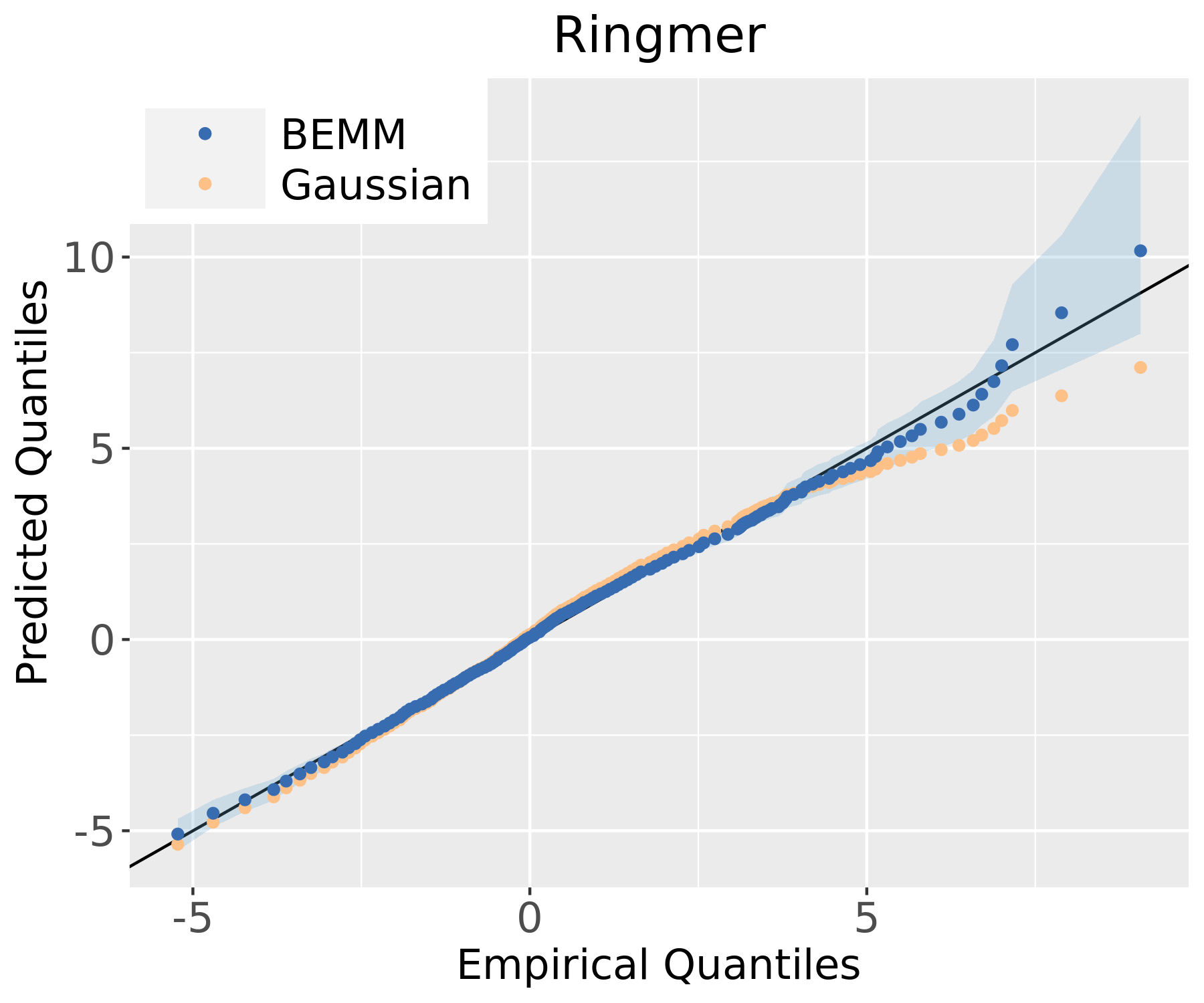}
\includegraphics[width=3in]{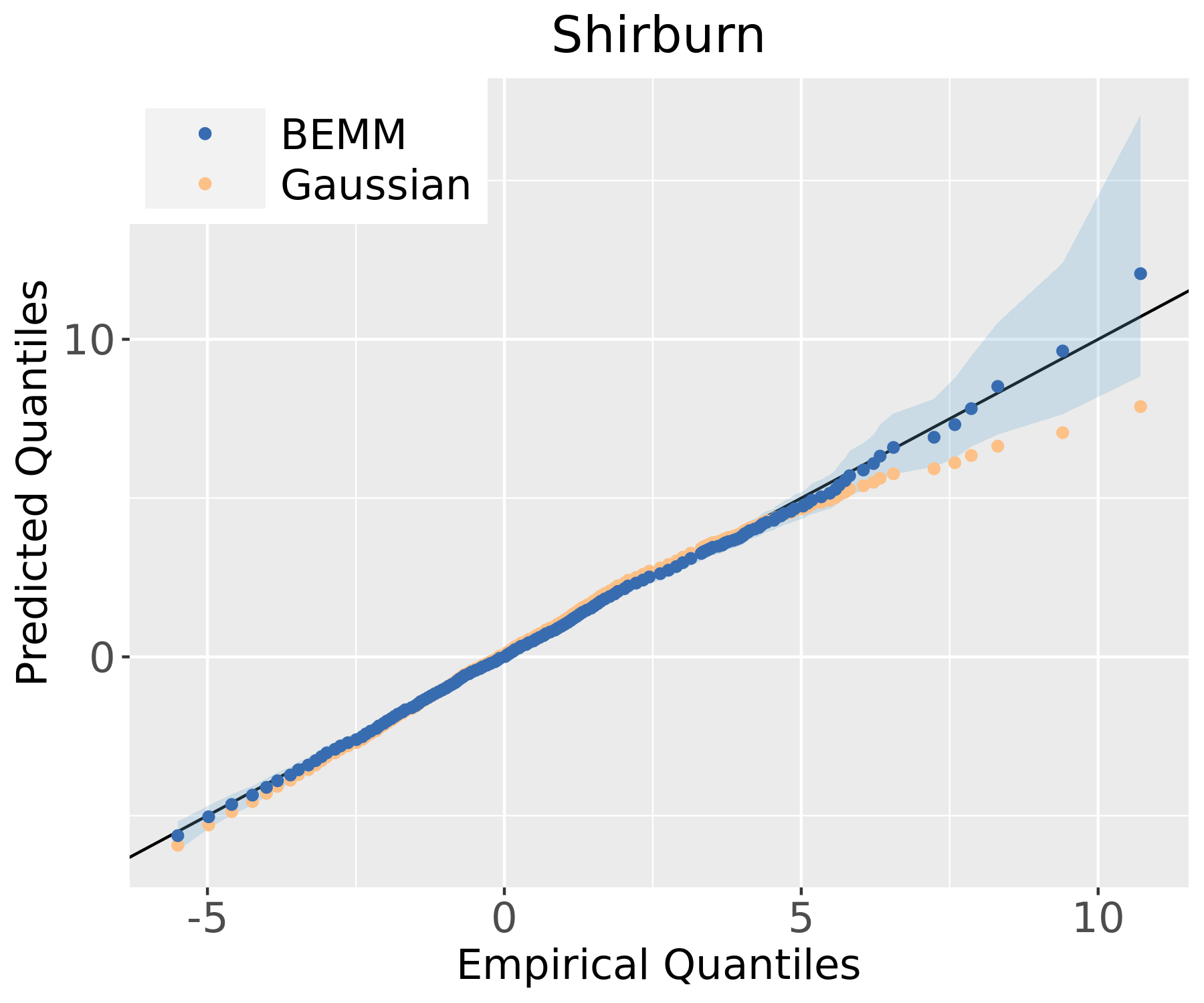}
\end{center}
\caption[\footnotesize{Q–Q plots of posterior predictive quantiles (blue) from 3,000 replicated datasets against the empirical quantiles.}]{\footnotesize{Q–Q plots of posterior predictive quantiles (blue) from 3,000 replicated datasets against the empirical quantiles. The shaded region indicates the 95\% credible band. Quantiles from a fitted bivariate normal distribution (orange) using the maximum likelihood method are included for comparison.}
\label{fig:temp_qqplot}}
\end{figure}
Figure \ref{fig:temp_dependence} shows posterior predictive distributions of $\chi(r)$, $\bar{\chi}(r)$, and Kendall’s $\tau$, alongside their empirical counterparts. The plots also include the 95\% credible interval for the theoretical $\chi$ computed from \eqref{eq:theoretical_chi}. For both $\chi(r)$ and $\bar{\chi}(r)$, empirical estimates fall within the 95\% credible bands for all $r$, except at $r = 0.68$. 
The theoretical $\chi$ lies near the centre of the posterior predictive $\chi(r)$ distribution, confirming the validity of \eqref{eq:theoretical_chi}.
As $r$ approaches 1, the empirical $\chi(r)$ moves toward the tail of the theoretical $\chi$ distribution, while $\bar{\chi}(r)$ approaches the posterior mean. Kendall’s $\tau$ also falls within the high-density region of its posterior predictive distribution.
All these suggest that the BEMM effectively captures the dependence structure. 
\begin{figure}[!htbp]
\begin{center}
\includegraphics[width=3in]{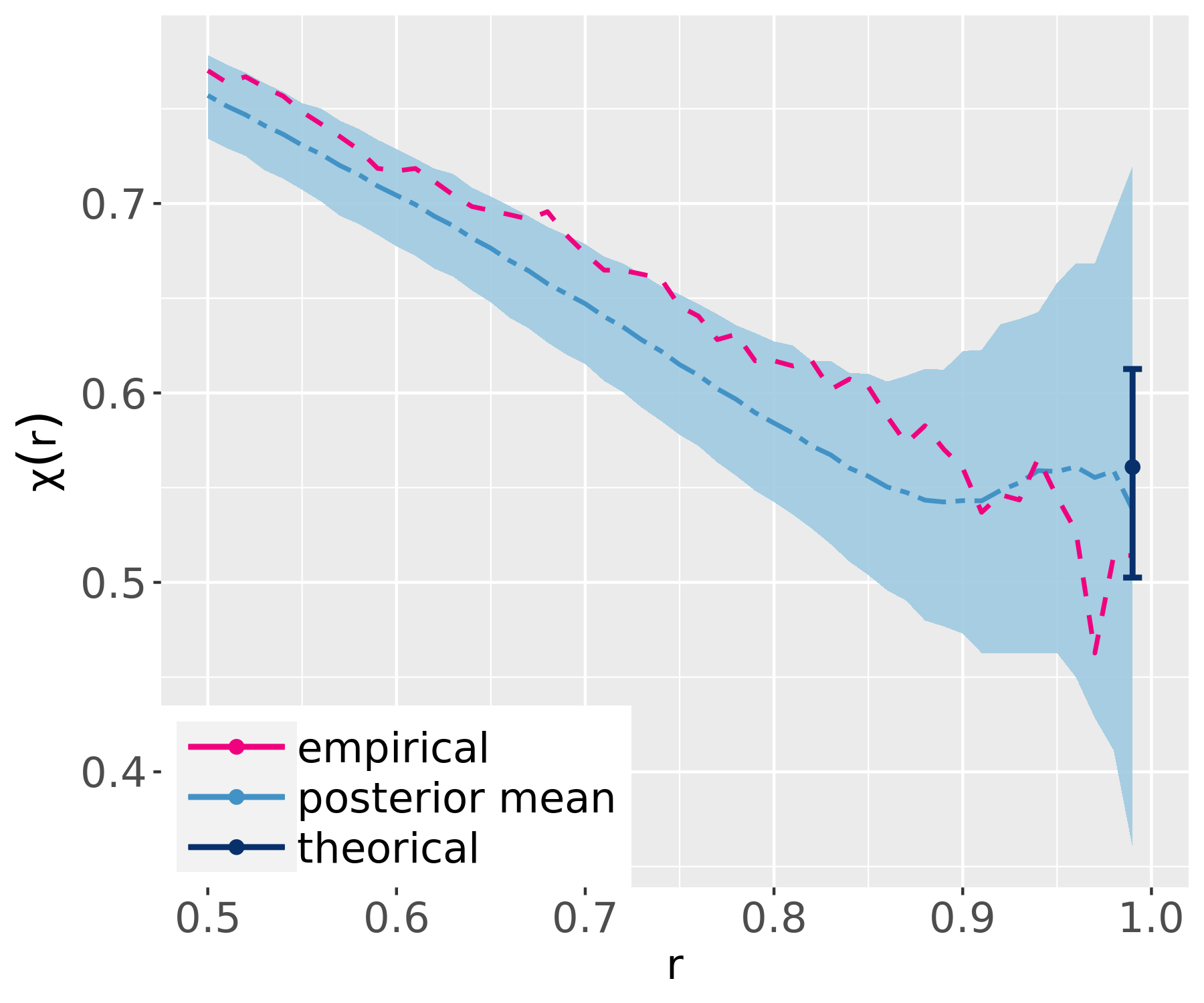}
\includegraphics[width=3in]{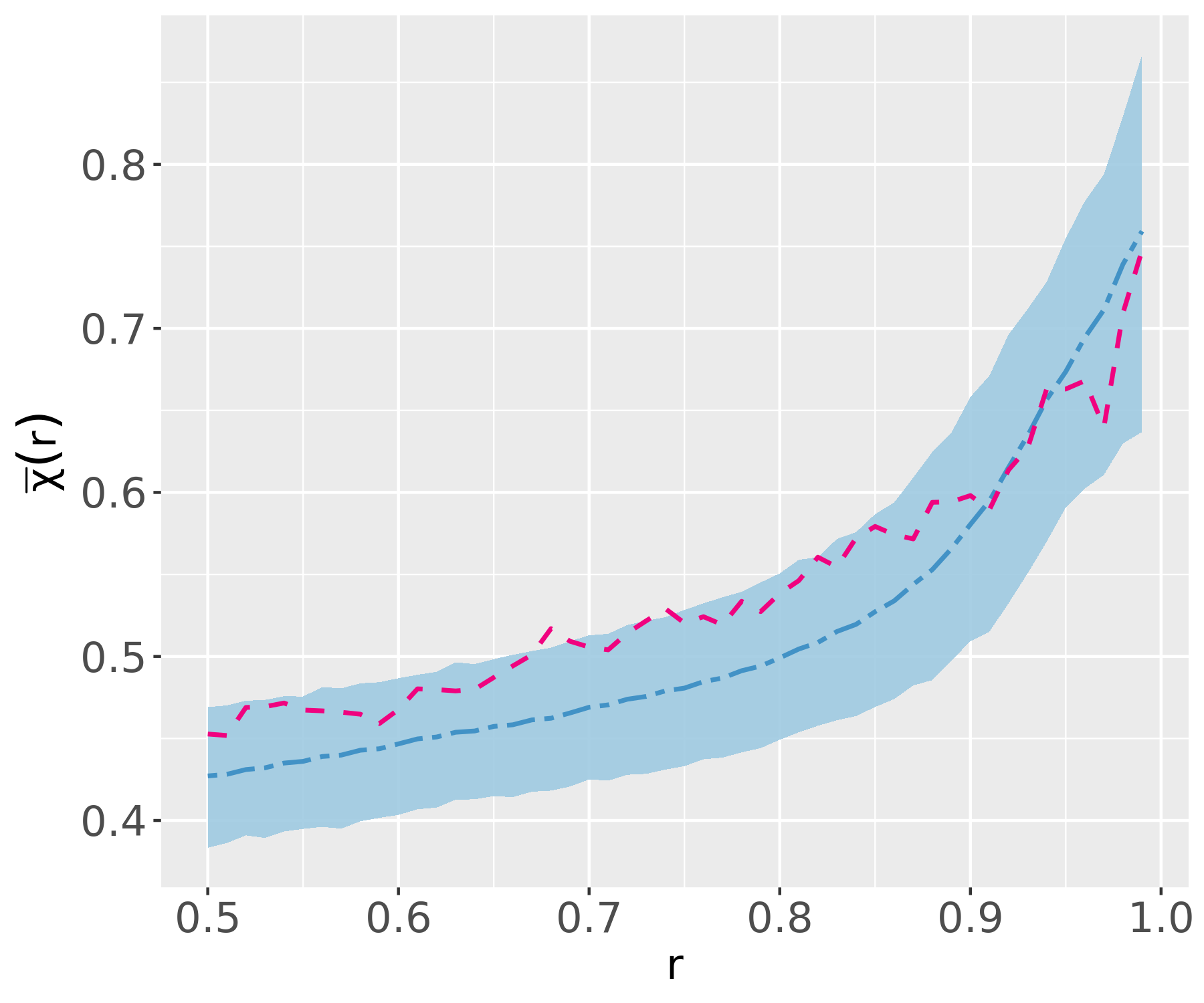}
\includegraphics[width=3in]{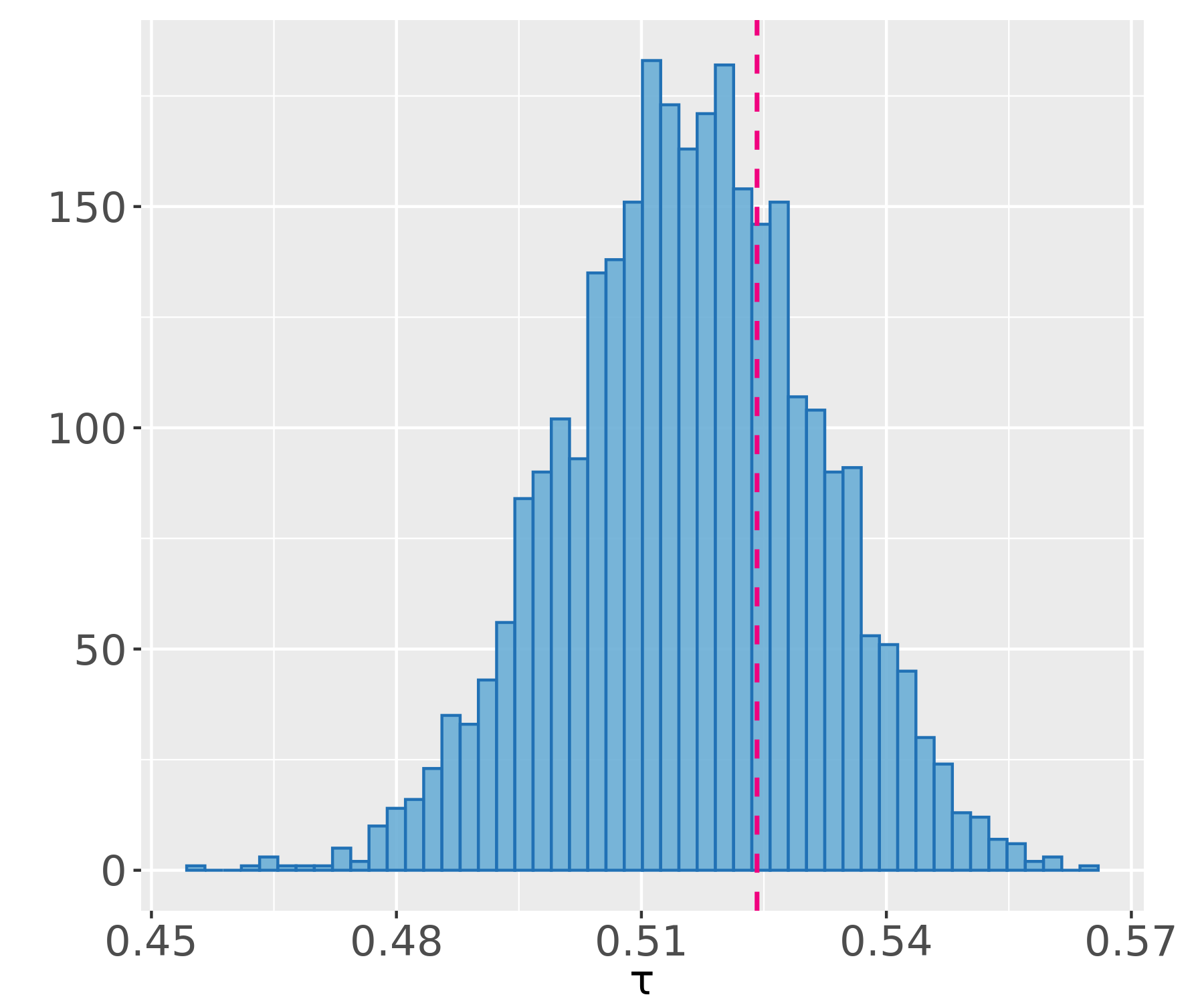}
\end{center}
\caption[\footnotesize{Based on 3,000 replicates from the posterior predictive distribution, the upper two plots present empirical values vs. means of the posterior prediction, each accompanied by a 95\% credible band, of $\chi(r)$ and $\bar{\chi}(r)$.}]{\footnotesize{Based on 3,000 replicates from the posterior predictive distribution, the upper two plots present empirical values vs. means of the posterior prediction, each accompanied by a 95\% credible band, of $\chi(r)$ and $\bar{\chi}(r)$. The error bar in the $\chi(r)$ plot represents the 95\% credible interval of the theoretical $\chi$, with the centre point indicating the posterior mean. The lower plot features a histogram of the posterior predictive Kendall's $\tau$, with the empirical value highlighted in red.}
\label{fig:temp_dependence}}
\end{figure}

\paragraph{Predictive accuracy.} We further compare the predictive accuracy of the BEMM and the Gaussian model using the energy score, a multivariate generalisation of the continuous ranked probability score (CRPS; \citealp{gneitingStrictlyProperScoring2007}). For an observation $\boldsymbol{y}$ and predictive distribution $F$, the energy score is
\begin{align}
    \text{ES}(F,\boldsymbol{y})=\mathbb{E}_F \left \| \boldsymbol{X}-\boldsymbol{y} \right \| - \frac{1}{2}\mathbb{E}_F \left \|  \boldsymbol{X}-\boldsymbol{X}^{\prime}\right \|,
    \label{eq:es}
\end{align}
where $\boldsymbol{X}$ and $\boldsymbol{X}'$ are independent draws from $F$, and $\|\cdot\|$ denotes the Euclidean norm. The score measures the expected distance between predictions and observations, adjusted for the spread of the predictive distribution, and is proper when $F$ has a finite expectation.

When specific regions of the distribution are of primary interest, such as the tail, threshold-weighted CRPS is often used.
This approach extends to the multivariate setting through the threshold-weighted energy score (twES; \citealp{allen2023weighted}):
\begin{equation*}
    \text{twES}(F,\boldsymbol{y}; v) = \mathbb{E}(\|v(\boldsymbol{X})-v(\boldsymbol{y})\|) -\frac{1}{2}\|v(\boldsymbol{X})-v(\boldsymbol{X}^\prime)\|
\end{equation*}
where $v: \mathbb{R}^d \to \mathbb{R}^d$ is a chaining function derived from a weight function $w: \mathbb{R}^d \to \mathbb{R}$ specifying the region of interest.
We consider two weighting schemes:
\begin{enumerate}
    \item \textbf{High-tail indicator weight}: $w_1(\boldsymbol{z})=\mathbbm{1}\{z_1>q_{1,0.9},z_2>q_{2,0.9}\}$, 
     where $q_{j,0.9}$ is the 0.9-quantile of margin $j$ of $\boldsymbol{E}$. The corresponding chaining function is
     \begin{equation*}
    v_1(\boldsymbol{z})=\begin{cases}
        \boldsymbol{z} &\quad \text{if} \quad w_1(\boldsymbol{z})=1\\
        \boldsymbol{x}_0 &\quad \text{if} \quad w_1(\boldsymbol{z})=0
    \end{cases}
    \end{equation*}
    for some fixed $\boldsymbol{x}_0 \in \mathbb{R}^2$.
  \item \textbf{Gaussian CDF weight}: $w_2(\boldsymbol{z})$ is defined as the CDF of a bivariate normal distribution fitted to $\boldsymbol{E}$. With estimated means $\hat{\mu}_1$, $\hat{\mu}_2$ and standard deviations $\hat{s}_1$, $\hat{s}_2$, the chaining function is
\footnotesize{
  \begin{equation*}
    v_2(\boldsymbol{z}) =\left( (z_1-\hat{\mu}_1)\Phi\left(\frac{z_1-\hat{\mu}_1}{\hat{s}_1}\right)+\hat{s}_1\phi\left(\frac{\hat{z}_1-\hat{\mu}_1}{\hat{s}_1}\right), (\hat{z}_2-\hat{\mu}_2)\Phi\left(\frac{z_2-\hat{\mu}_2}{\hat{s}_2}\right)+\hat{s}_2\phi\left(\frac{z_2-\hat{\mu}_2}{\hat{s}_2}\right) \right),
\end{equation*}
}
where $\Phi$ and $\phi$ are the standard normal CDF and PDF, respectively.
\end{enumerate}
Table \ref{tab:energyscore} reports the unweighted energy scores and their threshold-weighted variants for both the BEMM and Gaussian models. 
Across all metrics, the BEMM yields consistently lower scores, indicating superior predictive performance. However, because the models differ primarily in the tail region, the magnitude of the score differences is small.

\begin{table}[!htbp]
\centering
\small
\setlength{\tabcolsep}{20pt}
\renewcommand{\arraystretch}{1.1}

\caption[\footnotesize{Energy score (ES) and threshold-weighted energy score (twES) for the BEMM and Gaussian models.}]{\footnotesize{Energy score (ES) and threshold-weighted energy score (twES) for the BEMM and Gaussian models. 
Weighting scheme $W_1$ applies only to the region where all components exceed the 90th percentile of $\boldsymbol{E}$; 
$W_2$ uses the CDF of $\mathrm{MVN}(\hat{\boldsymbol{\mu}}, \hat{\boldsymbol{\Sigma}})$ as the weighting function, with 
$\hat{\boldsymbol{\mu}}$ and $\hat{\boldsymbol{\Sigma}}$ estimated from $\boldsymbol{E}$.}}
\vspace{0.5cm}
\begin{tabular}{l c c c}
\toprule
& \textbf{ES} & \textbf{twES $v_1$} & \textbf{twES $v_2$} \\
\midrule
\textbf{BEMM}     & 2.0155 & 0.2512 & 1.0361 \\
\textbf{Gaussian} & 2.0197 & 0.2517 & 1.0384 \\
\bottomrule
\end{tabular}

\label{tab:energyscore}
\end{table}

\section{Discussion}
\label{sec:discussion}
We now briefly discuss several practical challenges that may arise when implementing our model, along with potential solutions.

\subsection{Other distributions for the bulk and alternative representations for the tail}
The first issue concerns the choice of bulk and tail distributions.
For the bulk distribution, we illustrate our framework using a multivariate normal distribution for simplicity; however, this is by no means the only viable option.
Let $f_j(x_j)$ and $F_j(x_j)$, $j=1,\ldots,d$ denote the marginal density and CDF, respectively, and let $C(\cdot)$ be a copula with corresponding copula density $C(\cdot)$.
A multivariate distribution $f(\boldsymbol{x})$ can be constructed as  
\begin{equation*}
    f(\boldsymbol{x})= c(F_1(x_1),\ldots,F_d(x_d))\prod_{i=1}^d f_j(x_j),
\end{equation*}
with its CDF $F$ given by
\begin{equation*}
    F(x_1,\ldots,x_d) = C(F_1(x_1),\ldots,F_d(x_d)).
\end{equation*}
This allows for separate and flexible specification of the marginal distributions and the copula-based dependence structure to best fit the data.
The multivariate normal distribution fits naturally within this copula framework, with copula $C_R$ defined as
\begin{equation*}
    C_R(u_1,\ldots,u_d) = \Phi_R(\Phi^{-1}(u_1),\ldots,\Phi^{-1}(u_d)),\quad u_j \in [0,1],
\end{equation*}
where $\Phi_R$ is the CDF of a d-dimensional standard normal with correlation matrix $R$.

For mGPD in the tail, \cite{kiriliouk2019peaks} provided several alternative forms of generator $f_{\boldsymbol{U}}$ in \eqref{eq: mGPD_U_std}, including those with independent components of Gumbel, reverse Gumbel, log-gamma, and multivariate normal components.
These generators can be used in both R and T representations of the mGPD, leading to richer model forms.
Recent research explores the use of generative models from machine learning to reformulate the mGPD \citep{hu2025gpdflow, lhaut2025wasserstein}.
Although these generative-model-based mGPDs can be integrated into our framework, the inference procedure would need to shift to a frequentist approach because the large number of parameters in generative models renders posterior sampling via MCMC computationally infeasible.

\subsection{Model selection}
Given the variety of possible model components, a robust model selection procedure is essential.
Due to the large number of candidate models, information-criterion-based methods (e.g., WAIC \citep{watanabe2010asymptotic}) are generally preferred.
However, from a practical standpoint, fitting the full model in \eqref{eq:pdf} across all combinations of bulk and tail distributions is computationally prohibitive, due to the large number of possible distribution combinations and the long MCMC runs required to obtain sufficient effective sample sizes in the presence of strong within-chain autocorrelation.
We propose a heuristic two-stage approach for model selection:
\begin{enumerate}
    \item \textbf{Stage One}: Since the bulk and tail distributions are independent for a fixed threshold, set the threshold to a relatively high quantile and select the best-fitting bulk and tail distributions separately, using criteria such as WAIC, on the bulk and tail data, respectively. 
    This process can be repeated for multiple thresholds to account for variability, and the most frequently selected bulk and tail distributions are chosen.
    \item \textbf{Stage Two}: Fit the model in \eqref{eq:pdf} using the bulk and tail components identified in Stage One, following the inference procedure described in Section \ref{sec:posterior_inference}.
\end{enumerate}
This approach avoids exhaustive MCMC fitting for all model combinations while still yielding a reasonable and practical model choice.


\subsection{High-dimensional generalisation}
Our model is best suited for low-dimensional settings (e.g. $d\leq3$).
Although it is theoretically applicable in higher dimensions, achieving satisfactory performance becomes challenging for two main reasons.
First, in high-dimensional settings, complex cross-dimensional dependence structures are generally difficult to capture using distributions with simple copula constructions. 
While more flexible approaches, such as vine copulas \citep{kurowicka2010dependence} or normalising flows \cite{papamakarios2021normalizing}, can be used to model the bulk distribution, these methods are primarily designed for density estimation and do not yield closed-form cumulative distribution functions. 
As a result, the exceedance probabilities required by our framework are difficult to evaluate.
Second, the model’s tail behaviour is fully determined by the mGPD, which is only justified when the data distribution is in the max-domain of attraction of an mGEVD.
This is a strong condition that tends to hold in low dimensions but is rarely satisfied in high-dimensional contexts \citep{huser2025modeling}.
In practice, it is advisable to respect these constraints and limit the application of the model to low-dimensional cases.

\section{Conclusion}
\label{sec:conclusion}
This paper introduces a Bayesian multivariate extreme value mixture model designed to jointly capture bulk and extreme observations.
The proposed method models both the marginal distributions and the dependence structure simultaneously, and circumvents the threshold selection challenge in peaks-over-threshold modelling by treating the threshold as a learnable parameter.
Bivariate simulation studies demonstrate that our model performs well across diverse tail behaviours when properly specified, and remains competitive in estimation accuracy compared to alternative bivariate approaches, even under model misspecification.
An application to UK daily maximum air temperatures shows that the proposed model substantially improves the precision of residual tail estimates relative to a Gaussian benchmark.

We conclude by noting two limitations. First, the supports of the bulk and tail distributions may be misaligned.
The support of the mGPD depends on the marginal shape parameter $\gamma_j$, and it could be lower bounded $\gamma_j>0$ or upper bounded $\gamma_j<0$.
As a result, the bulk distribution may be unbounded below while the marginal mGPD has a lower bound, or the bulk distribution may be restricted to positive values while the mGPD extends to negative infinity.
Since the lower tail of the mGPD generally carries little mass (see Figure \ref{fig:plot_BEMM}), this mismatch is usually not problematic, but truncation of the mGPD can be applied if necessary.
Second, because the mGPD contributes to both bulk and tail components, censored-likelihood-based inference cannot be applied to values below the threshold to reduce bias in parameter estimation \citep{kiriliouk2019peaks}.
This means that mGPD estimation may remain biased even when the data exhibit asymptotic dependence.
A potential remedy is to employ an mGPD with a flexible and learnable dependence structure, such as one based on normalising flows \citep{hu2025gpdflow}, though this would require adapting the inference procedure to integrate deep learning techniques, as mentioned in Section \ref{sec:discussion}.

\bigskip
\begin{center}
{\large\bf Acknowledgements}
\end{center}
We thank L\'idia Andr\'e (Lancaster University) for facilitating the code to implement the bivariate mixture copula model.


\appendix
\section{ Supplementary Materials}
\renewcommand{\theequation}{A.\arabic{equation}}
\setcounter{equation}{0}
\renewcommand{\thefigure}{A.\arabic{figure}}
\setcounter{figure}{0}

\renewcommand{\thetable}{A.\arabic{table}}
\setcounter{table}{0}

\subsection{Code}
Code to fit the bivariate version of our extreme mixture model and reproduce the data application in Section~\ref{sec:app} is freely available at {\url{https://github.com/hcl516926907/Biv_Ext_Mix_Mod}}. 

\subsection{Proofs}\label{sec:appendix-proofs}
\subsubsection{The BEMM distribution is in the MDA of an mGEVD}

\begin{lemma}
The mGPD $H(\boldsymbol{x})$ in \eqref{eq:mgpd} is in the MDA of the mGEVD $G(\boldsymbol{x})$.
\label{lemma:1}
\end{lemma}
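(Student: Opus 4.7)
The plan is to exploit the explicit functional link between $H$ and $G$ on the positive orthant, and then invoke the max-stability of $G$ to identify the limit. First, observe that on $\boldsymbol{x} > \boldsymbol{0}$ one has $\boldsymbol{x} \wedge \boldsymbol{0} = \boldsymbol{0}$, so \eqref{eq:mgpd} collapses to
\[
1 - H(\boldsymbol{x}) \;=\; \frac{\log G(\boldsymbol{x})}{\log G(\boldsymbol{0})} \;=\; \frac{-\log G(\boldsymbol{x})}{c}, \qquad c := -\log G(\boldsymbol{0}) > 0.
\]
Thus on the positive orthant the survival function of $H$ is a constant multiple of the exponent function of $G$, and in particular $1 - H(\boldsymbol{x}) \to 0$ as $\boldsymbol{x} \to \boldsymbol{\infty}$.

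Next, I would use the max-stability of $G$ to choose normalizing sequences $\boldsymbol{a}_n \in (0,\infty)^d$ and $\boldsymbol{b}_n \in \mathbb{R}^d$ with $G^n(\boldsymbol{a}_n \boldsymbol{x} + \boldsymbol{b}_n) = G(\boldsymbol{x})$, equivalently $n \log G(\boldsymbol{a}_n \boldsymbol{x} + \boldsymbol{b}_n) = \log G(\boldsymbol{x})$. For any fixed $\boldsymbol{x}$ in the interior of the support of $G$, the translated argument $\boldsymbol{a}_n \boldsymbol{x} + \boldsymbol{b}_n$ eventually lies in the positive orthant, so the identity above gives
\[
n\,[1 - H(\boldsymbol{a}_n \boldsymbol{x} + \boldsymbol{b}_n)] \;=\; -\frac{n \log G(\boldsymbol{a}_n \boldsymbol{x} + \boldsymbol{b}_n)}{c} \;=\; -\frac{\log G(\boldsymbol{x})}{c}.
\]
The elementary limit $(1 - y_n/n)^n \to e^{-y}$ whenever $y_n \to y$ then yields
\[
H^n(\boldsymbol{a}_n \boldsymbol{x} + \boldsymbol{b}_n) \;\longrightarrow\; \exp\!\left(\frac{\log G(\boldsymbol{x})}{c}\right) = G(\boldsymbol{x})^{1/c}.
\]

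Since $G$ is max-stable, $G^{1/c}$ is itself an mGEVD of the same type as $G$: its univariate margins differ only by an affine rescaling and the extremal dependence structure is preserved (the exponent function simply becomes $V/c$). Absorbing this affine modification into $\boldsymbol{a}_n$ and $\boldsymbol{b}_n$ produces new sequences $\tilde{\boldsymbol{a}}_n, \tilde{\boldsymbol{b}}_n$ for which $H^n(\tilde{\boldsymbol{a}}_n \boldsymbol{x} + \tilde{\boldsymbol{b}}_n) \to G(\boldsymbol{x})$, establishing $H \in \mathrm{MDA}(G)$. The main delicate step will be verifying that $\boldsymbol{a}_n \boldsymbol{x} + \boldsymbol{b}_n$ enters the positive orthant for $n$ large enough so that the simplified form of $1-H$ applies; this is a routine consequence of the max-stability normalization pushing the effective origin towards the upper endpoints of the GEV margins, and I do not expect serious obstacles beyond careful bookkeeping of the affine rescaling that absorbs the factor $1/c$.
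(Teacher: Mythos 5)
Your proof is correct, but it takes a genuinely different route from the paper. The paper disposes of the lemma by citation: it invokes Theorem 2.2 of Rootz\'en and Tajvidi (2006) to realize $H$ as the limiting conditional distribution of normalized exceedances of some random vector, and then Theorem 2.1(ii) of the same paper to conclude membership in the MDA of $G$. You instead give a direct, self-contained computation: on the positive orthant the identity $1-H(\boldsymbol{x}) = -\log G(\boldsymbol{x})/c$ with $c=-\log G(\boldsymbol{0})$ reduces everything to the exponent function of $G$, and max-stability ($n\log G(\boldsymbol{a}_n\boldsymbol{x}+\boldsymbol{b}_n)=\log G(\boldsymbol{x})$) immediately yields $H^n(\boldsymbol{a}_n\boldsymbol{x}+\boldsymbol{b}_n)\to G(\boldsymbol{x})^{1/c}$, which is of the same type as $G$. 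What your approach buys is transparency and an explicit identification of the limit (including the power $1/c$, which echoes the $G^{1-F_{\text{bulk}}(\boldsymbol{u})}$ appearing later in the paper's proof of Proposition 1, case 2); what the paper's approach buys is brevity and avoidance of the one bookkeeping point you correctly flag, namely that $\boldsymbol{a}_n\boldsymbol{x}+\boldsymbol{b}_n$ eventually enters the positive orthant. That point is indeed routine here: the standing assumptions on the mGPD (each conditional margin in \eqref{eq:gdp} is a nondegenerate GPD with $\sigma_j>0$) force each marginal upper endpoint of $G$ to exceed $0$, and $a_{n,j}x_j+b_{n,j}$ converges to that upper endpoint for $x_j$ interior to the support, so the simplification $\boldsymbol{x}\wedge\boldsymbol{0}=\boldsymbol{0}$ applies for all large $n$. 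With that observation made explicit, your argument is complete and could stand in place of the paper's citation-based proof.
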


\begin{proof}
{Theorem 2.2 in \cite{rootzenMultivariateGeneralizedPareto2006} states that if $\boldsymbol{X}$ follows an mGPD $H(\boldsymbol{x})$ defined in \eqref{eq:mgpd}, then there exists an increasing curve $\boldsymbol{u}(t)$ with $P(\boldsymbol{X} \leq \boldsymbol{u}(t)) \rightarrow 1$ as $t \rightarrow \infty$ and a function $\boldsymbol{\sigma}(\boldsymbol{u}) > \boldsymbol{0}$ such that 
\begin{align*}
\mathbb{P}\left(\frac{\boldsymbol{X}-\boldsymbol{u}(t)}{\boldsymbol{\sigma}(\boldsymbol{u}(t))}\leq \boldsymbol{x}  \middle|\frac{\boldsymbol{X}-\boldsymbol{u}(t)}{\boldsymbol{\sigma}(\boldsymbol{u}(t))}\nleqslant \boldsymbol{0}\right) = H(\boldsymbol{x}).
\end{align*}
Following Theorem 2.1(ii) in the same paper, $H(\boldsymbol{x})$ is in the MDA of $G(\boldsymbol{x})$.}
\end{proof}

\begin{prop}
The distribution $F$ with density \eqref{eq:pdf} lies in the max-domain of attraction of an mGEVD, regardless of the choice of bulk distribution.
\label{Prop1}
\end{prop}
\begin{proof}[Proof of Proposition \ref{Prop1}]
When all components exceed the threshold, we have
$$F(\boldsymbol{x})=F_{\text{bulk}}(\boldsymbol{u}) + [1-F_{\text{bulk}}(\boldsymbol{u})]H_{\boldsymbol{U}}(\boldsymbol{x}-\boldsymbol{u}), \quad \boldsymbol{x}>\boldsymbol{u}.$$
Let $\boldsymbol{x}_F^*=(x_1^*,\cdots,x_d^*)$, where $x_j^*=\sup\{F_j(x)<1\}$ is the upper bound of margin $i$. Then,
\begin{align*}
\lim_{\boldsymbol{x} \to \boldsymbol{x}_F^{*-}} &= \frac{1-H_{\boldsymbol{U}}(\boldsymbol{x}-\boldsymbol{u})}{1-F(\boldsymbol{x})}\\
&=\frac{1-H_{\boldsymbol{U}}(\boldsymbol{x}-\boldsymbol{u})}{1- F_{\text{bulk}}(\boldsymbol{u}) - [1-F_{\text{bulk}}(\boldsymbol{u})]H_{\boldsymbol{U}}(\boldsymbol{x}-\boldsymbol{u})}\\
&=\frac{1}{1-F_{\text{bulk}}(\boldsymbol{u})}>0.
\end{align*}
By Lemma \ref{lemma:1}, there exist a sequence of vector $\boldsymbol{\alpha}_n>\boldsymbol{0}$ and $\boldsymbol{\beta}_n$ such that $H_{\boldsymbol{U}}^n(\boldsymbol{\alpha}_n(\boldsymbol{x}-\boldsymbol{u}) + \boldsymbol{\beta}_n) \rightarrow G(\boldsymbol{x})$. 
{Following the }proof to Theorem 2.1 in \cite{resnick_1971}{, the above} implies {that} $F^n(\boldsymbol{\alpha}_n \boldsymbol{x} + \boldsymbol{\beta}_n -\boldsymbol{\alpha}_n \boldsymbol{u}) \rightarrow G^{1-F_{\text{bulk}}(\boldsymbol{u})}(\boldsymbol{x})$, where $G^{1-F_{\text{bulk}}(\boldsymbol{u})}(\boldsymbol{x})$ is also an mGEVD by the max-stable property of $G$.
\end{proof}

\subsubsection{Condition for Finite Expectation of BEMM}
\begin{proof}
We restrict attention to the BEMM with a bivariate normal distribution for the bulk and an mGPD with an independent reverse exponential generator for the tail.  
Since the expectation of the bivariate normal is finite, it suffices to ensure that the mGPD component also has a finite expectation.  
Let $\boldsymbol{Z}=(Z_1,Z_2)$ follow the standardized mGPD in \eqref{eq:explict_mgpd}.
The mGPD random variable $\boldsymbol{X}=(X_1,X_2)$ is obtained from \eqref{eq:transformation} as
$$\boldsymbol{X}=\frac{\boldsymbol{\sigma}}{\boldsymbol{\gamma}}(\exp\{\boldsymbol{\gamma Z}\}-1).$$
Then,
$$\mathbb{E}(X_j)=\mathbb{E}(X_j\mid X_j \geq0)\mathbb{P}(X_j \geq0) + \mathbb{E}(X_j \mid X_j<0)\mathbb{P}(X_j<0)$$
The first expectation term, $\mathbb{E}(X_j \mid X_j>0)$, is the expectation of a univariate GPD, and hence is finite if $\gamma_j<1$.
The second term is
\begin{align*}
    \mathbb{E}(X_j\mid X_j \leq0)&=\frac{\sigma_j}{\gamma_j}\mathbb{E}(\exp\{\gamma_j Z_j\}\mid Z_j \leq0)-1,
\end{align*}
where
\begin{align*}
    \mathbb{E}(\exp\{\gamma_j Z_j\}\mid Z_j \leq0) &=\int_{-\infty}^0\exp\{\gamma_j z_j\}\int_{0}^{\infty}h_{\boldsymbol{U}}(z_j,z_{-j})\text{d}z_{-j} \text{d}z_j\\
    &\propto \int_{-\infty}^0 \exp\{\gamma_j z_j\}\cdot a_j^{-1} \exp\{a_j^{-1} z_j\} \text{d} z_j\\
    &\propto \int_{-\infty}^0 \exp\{(\gamma_j + a_j^{-1}) z_j\} \text{d} z_j
\end{align*}
The last integration is finite if and only if $\gamma_j + a_j^{-1} > 0$.
\end{proof}

\subsubsection{Theoretical $\chi$ for the BEMM}
\begin{proof}
This proof is based on the mGPD with an independent reverse exponential generator (density given in \eqref{eq:explict_mgpd}).  
Let $F_1$ and $F_2$ be the marginal distributions in \eqref{eq:pdf}, and let $\boldsymbol{u} = (u_1, u_2)$ be a threshold vector satisfying $F_j(u_j) < 1$, $j = 1, 2$.  
For {$\max \{F_1(u_1), F_2(u_2)\}<r<1$},
\begin{align*}
\mathbb{P}(F_j(X_j)>r) &= \mathbb{P}\left(F_{\text{bulk}}(u_1,u_2)+[1-F_{\text{bulk}}(u_1,u_2)]H_j(X_j-u_j)>r\right)\\
&= \mathbb{P}\left(H_j(X_j-u_j)>\frac{r-F_{\text{bulk}}(u_1,u_2)}{1-F_{\text{bulk}}(u_1,u_2)}\right),
\end{align*} 
where $H_j$ is the $j$th marginal of the mGPD in \eqref{eq:explict_mgpd}.  
Let $r^*=[r-F_{\text{bulk}}(u_1,u_2)]/[1-F_{\text{bulk}}(u_1,u_2)]$. 
Then
\begin{align*} 
\chi = \lim_{r^*\rightarrow 1^-}\chi(r^*)&=\frac{\mathbb{P}(H_1(X_1-u_1)>r^*,H_2(X_2-u_2)>r^*)}{1-r^*}\\
&=1-\left(\frac{1+a_{(1)}}{1+a_{(2)}} \right )^{1+a_{(2)}^{-1}}\frac{a_{(2)}}{a_{(1)}}\frac{a_1a_2}{a_1a_2+a_1+a_2},
\end{align*} where $a_{(1)}=\min \{a_1,a_2\}$, $a_{(2)}=\max \{a_1, a_2\}$. 
This follows from the $\chi$ of the mGPD with an independent reverse exponential generator, as derived in \cite{kiriliouk2019peaks}.
\end{proof}

\subsection{Stationarity check of the temperature data}
\label{sec:appendix_statchecks}
We apply \eqref{eq: temp_lr} to remove the seasonal cycle and reduce autocorrelation in the daily air temperature data.  
Stationarity is assessed using the autocorrelation function (ACF) of the residuals at each station (Figure \ref{fig:autocorrelation}).  
The ACF values are close to zero for lags up to 32 days, indicating that seasonality and autocorrelation have been effectively removed.
\begin{figure}[h]
\begin{center}
\includegraphics[width=3in]{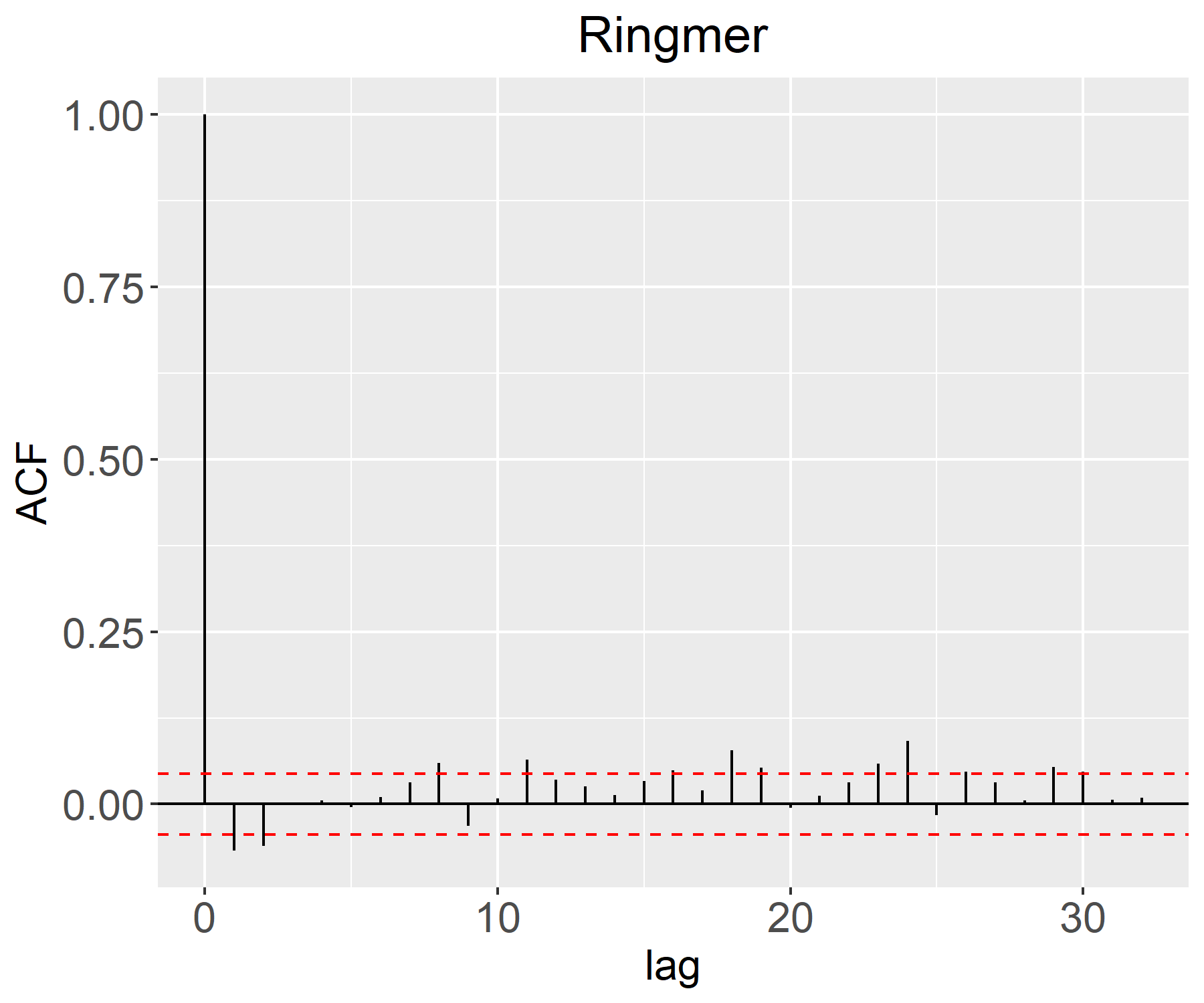}
\includegraphics[width=3in]{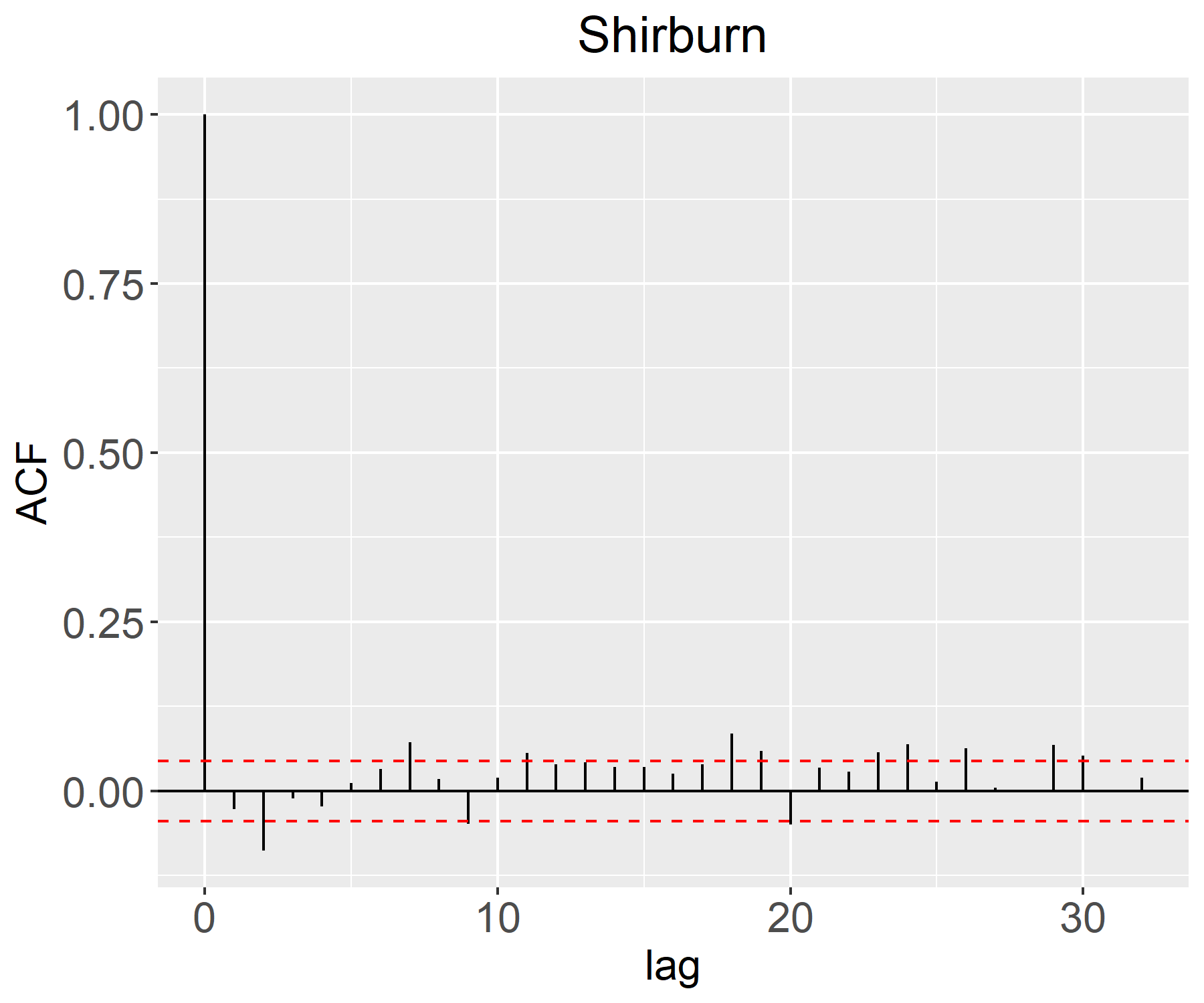}
\end{center}
\caption{\small{Autocorrelation plots of the two sites after removing the seasonality and autocorrelation by \eqref{eq: temp_lr}. The red dashed line indicates the confidence band at 95\% confidence level. 
}}
\label{fig:autocorrelation}
\end{figure}

\subsection{MCMC convergence diagnostics}
\label{sec:appendix_mcmcchecks}
\setcounter{figure}{0}   
\setcounter{table}{0} 

We assess convergence of the MCMC results, both in simulations and data applications, using the potential scale reduction factor $\hat{R}$ \citep{Gelman1992}, the effective sample size (ESS), and trace plots.  
The statistic $\hat{R}$ compares the variance of parameter estimates across multiple chains (total variance) with the average variance within each chain. Well-mixed chains yield $\hat{R} \approx 1$.  
ESS, derived from $\hat{R}$, measures the number of effectively independent posterior draws.  
We use the rank-normalised $\hat{R}$ and ESS proposed by \citet{Vehtari2021}, which are more robust than the traditional versions.  
For the data application, these diagnostics are shown in Table~\ref{tab:rhat} and Fig.~\ref{fig:traceplot}; simulation results are similar.  
Following \citet{Vehtari2021}, stable inference typically requires $\hat{R} < 1.01$ and $\text{ESS} > 400$.

\begin{table}[h]
\centering
\small
\setlength{\tabcolsep}{6pt}
\renewcommand{\arraystretch}{1.1}
\caption[\footnotesize{Rank-normalised $\hat{R}$ and ESS (in parentheses) for the MCMC results in the data application.}]{\footnotesize{Rank-normalised $\hat{R}$ and ESS (in parentheses) for the MCMC results in the data application. Three parallel chains of 20{,}000 iterations each were run. The first 10{,}000 iterations were discarded as burn-in, and the remaining samples were thinned by retaining every tenth draw.}}
\vspace{0.5cm}
\begin{tabular}{c c c c c c}
\toprule
\multicolumn{2}{c}{\textbf{Bulk Parameters}} &
\multicolumn{2}{c}{\textbf{Tail Parameters}} &
\multicolumn{2}{c}{\textbf{Threshold}} \\
\cmidrule(lr){1-2}\cmidrule(lr){3-4}\cmidrule(lr){5-6}
\textbf{Param} & $\boldsymbol{\hat R}$ (ESS) &
\textbf{Param} & $\boldsymbol{\hat R}$ (ESS) &
\textbf{Param} & $\boldsymbol{\hat R}$ (ESS) \\
\midrule
$U[1,2]$ & 1.0007 (2034) & $a_1$     & 1.0015 (2101) &        &            \\
$U[2,2]$ & 1.0007 (2030) & $a_2$     & 1.0009 (1992) &        &            \\
$\mu_1$  & 1.0044 (1478) & $\sigma_1$& 1.0000 (1918) & $u_1$  & 1.0007 (775) \\
$\mu_2$  & 1.0031 (1806) & $\sigma_2$& 0.9996 (2662) & $u_2$  & 1.0018 (931) \\
$s_1$    & 1.0014 (1794) & $\gamma_1$& 0.9997 (1670) &        &            \\
$s_2$    & 1.0005 (1885) & $\gamma_2$& 1.0006 (2716) &        &            \\
\bottomrule
\end{tabular}
\label{tab:rhat}
\end{table}

\begin{figure}[!htbp]
\begin{center}
\includegraphics[width=1.6in]{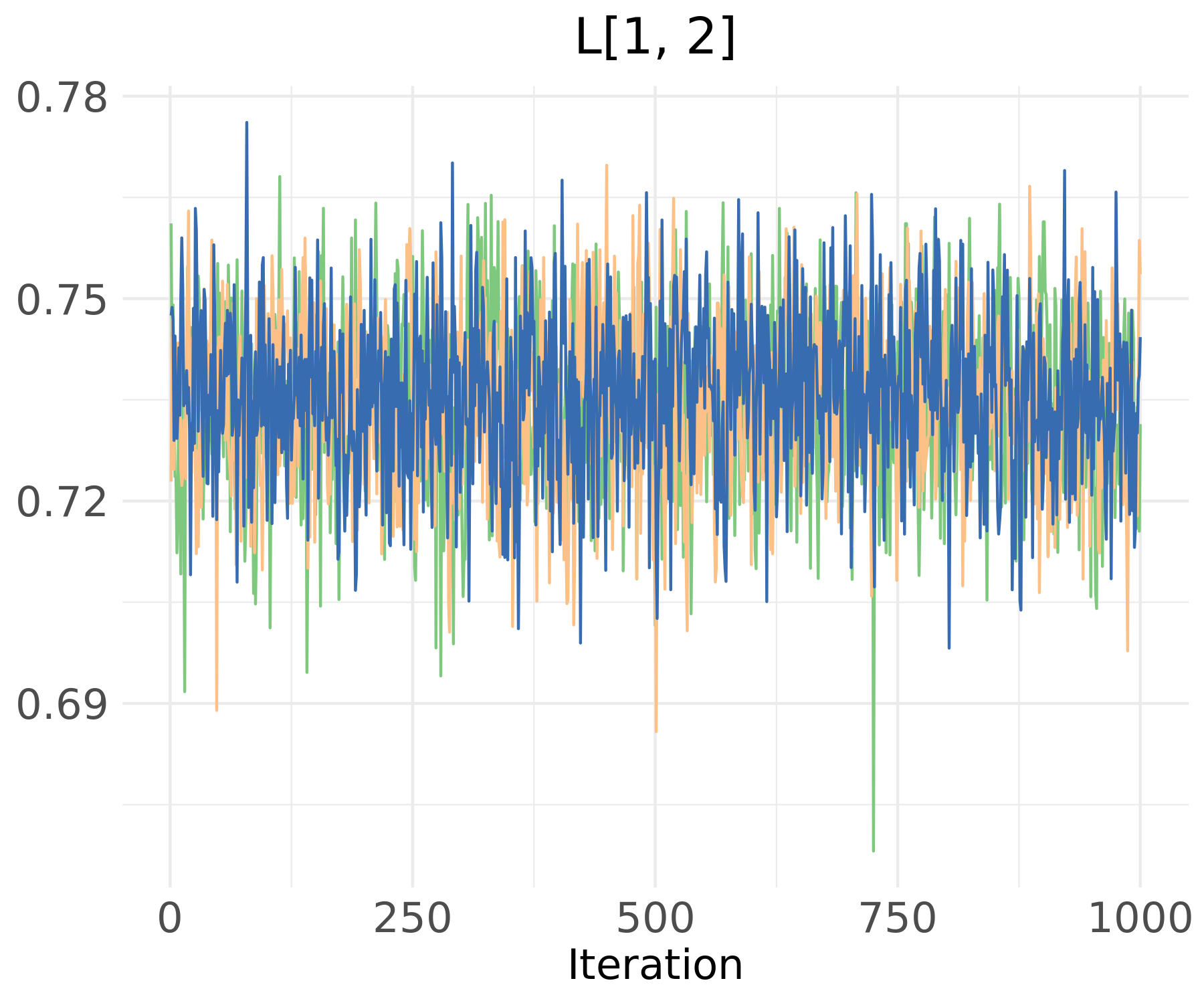}
\includegraphics[width=1.6in]{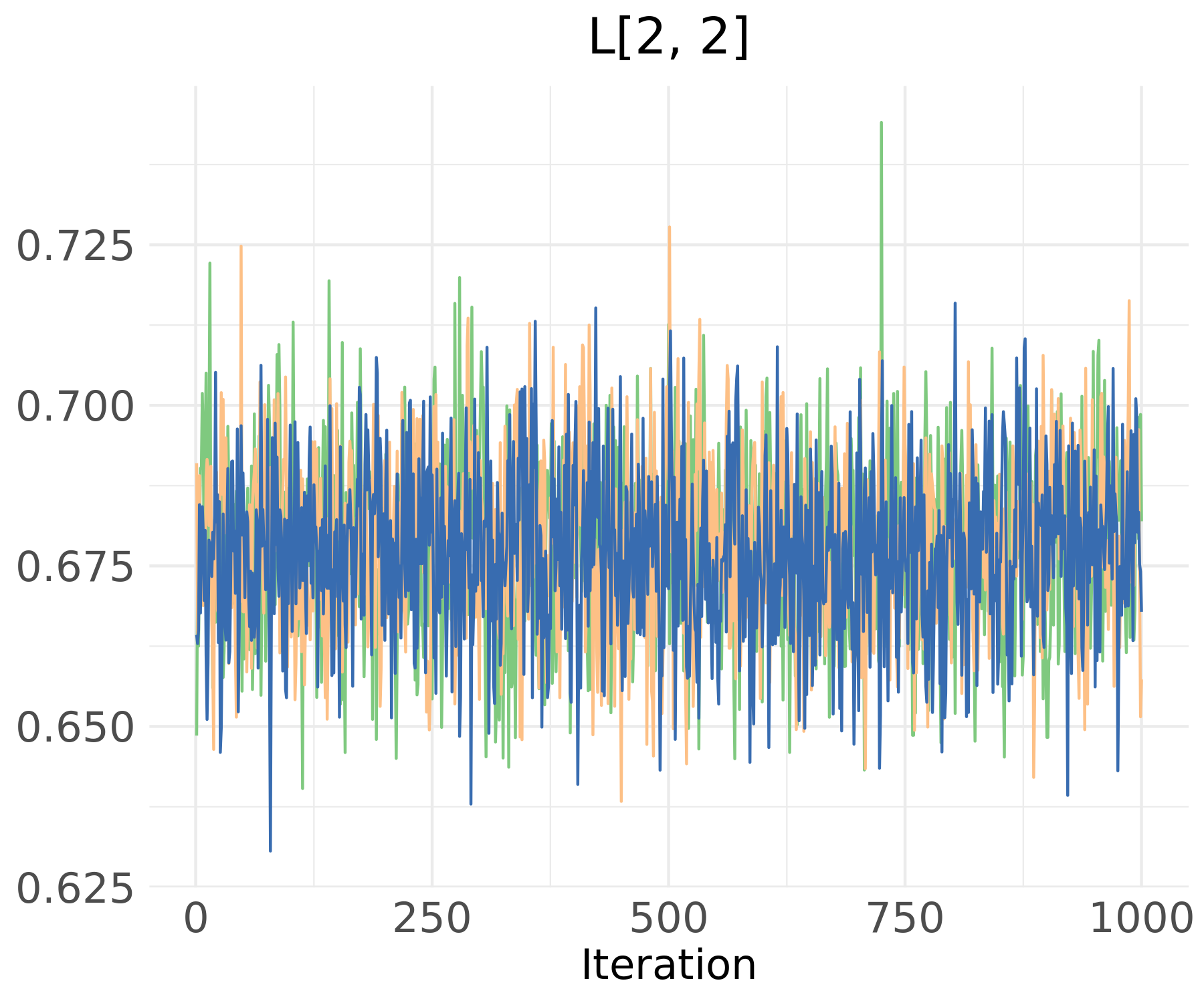}
\includegraphics[width=1.6in]{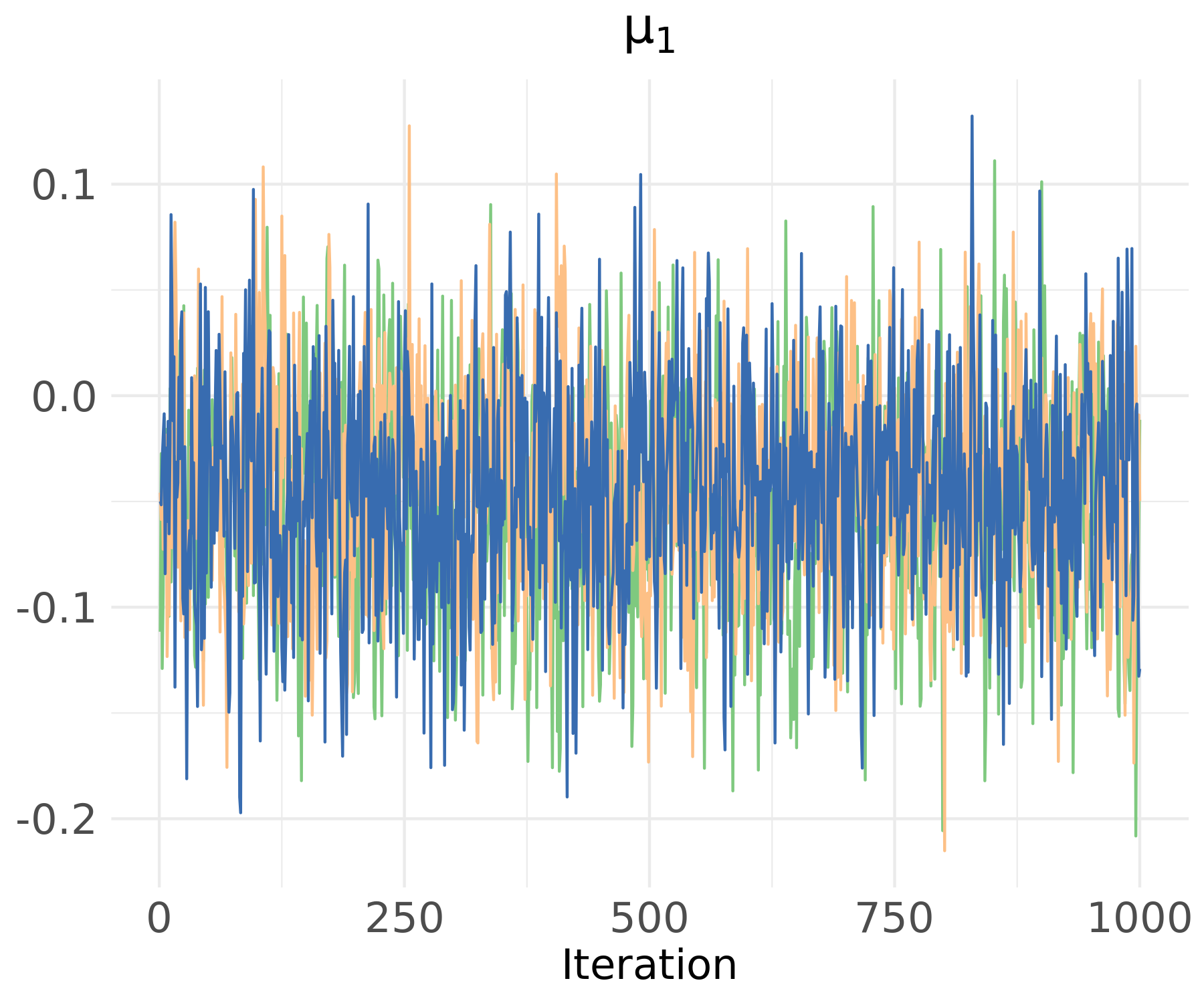}
\includegraphics[width=1.6in]{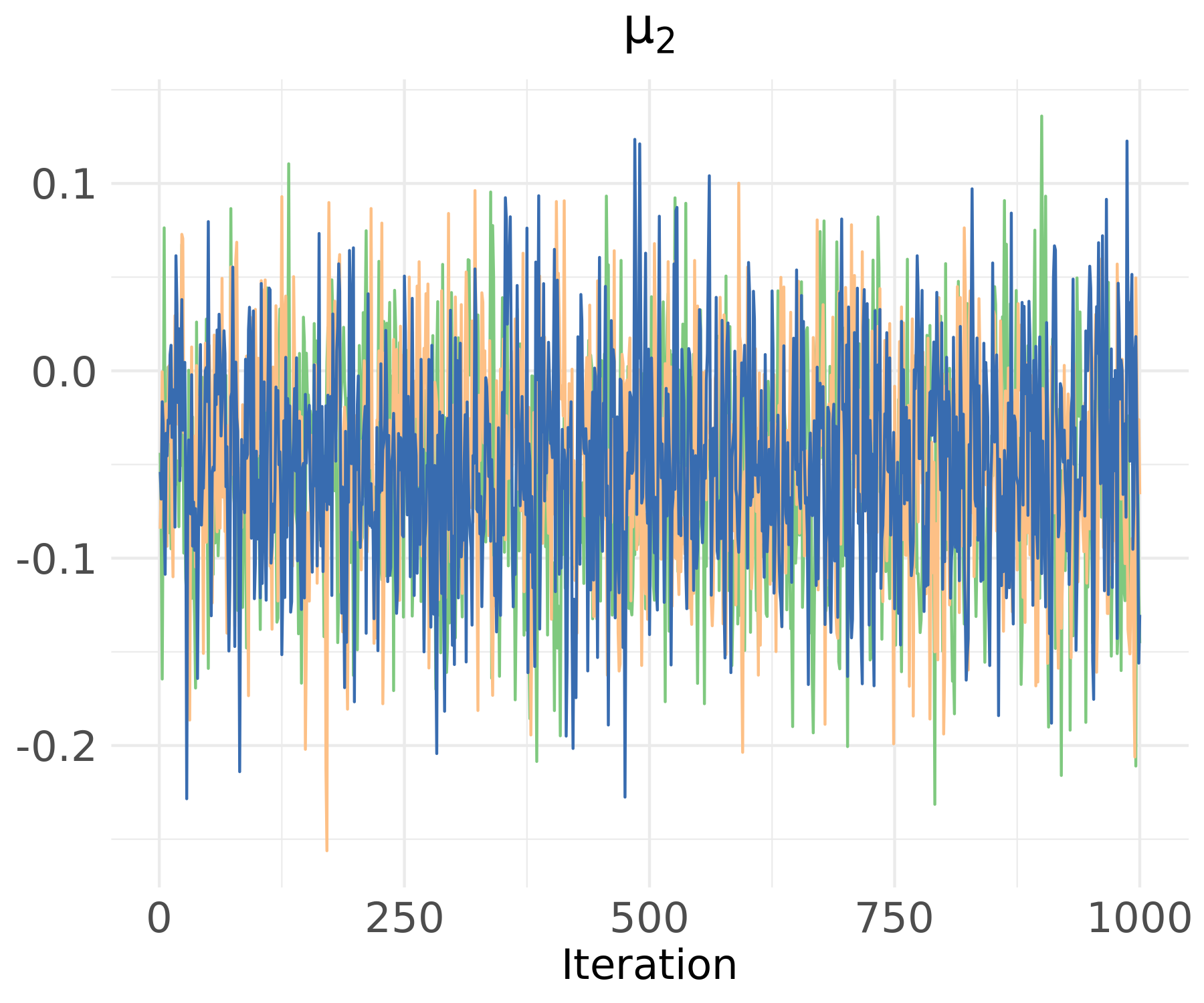}
\includegraphics[width=1.6in]{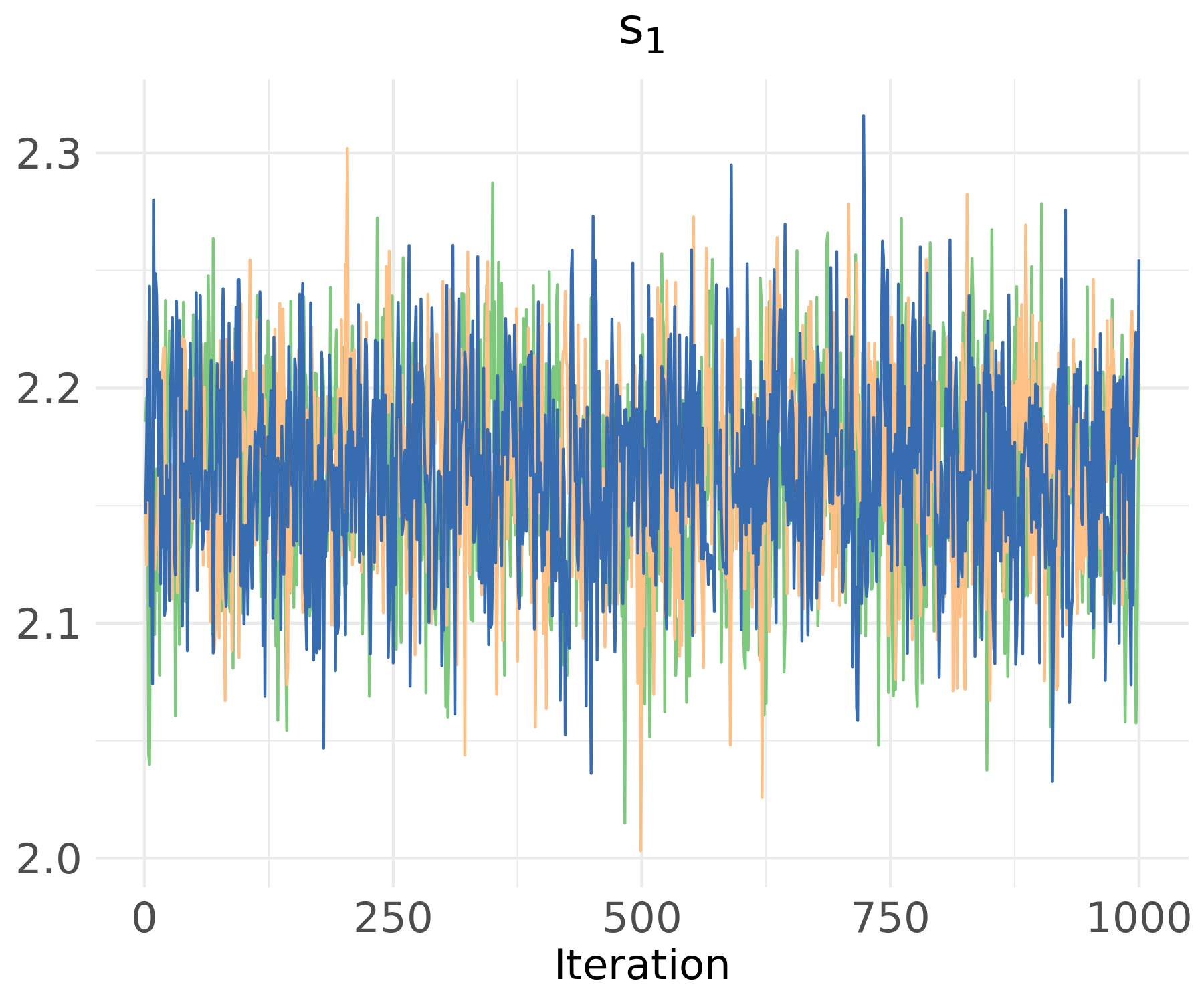}
\includegraphics[width=1.6in]{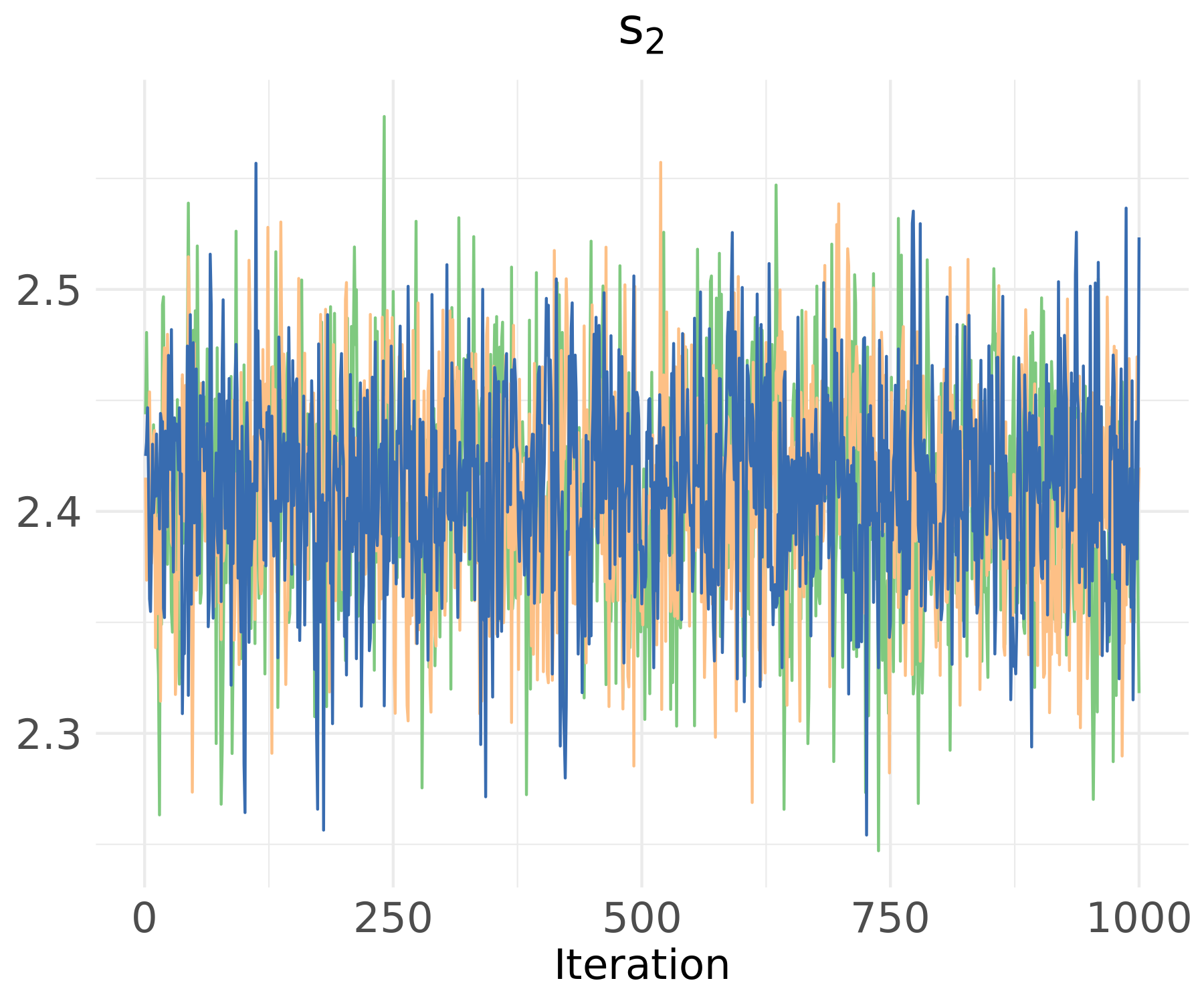}
\includegraphics[width=1.6in]{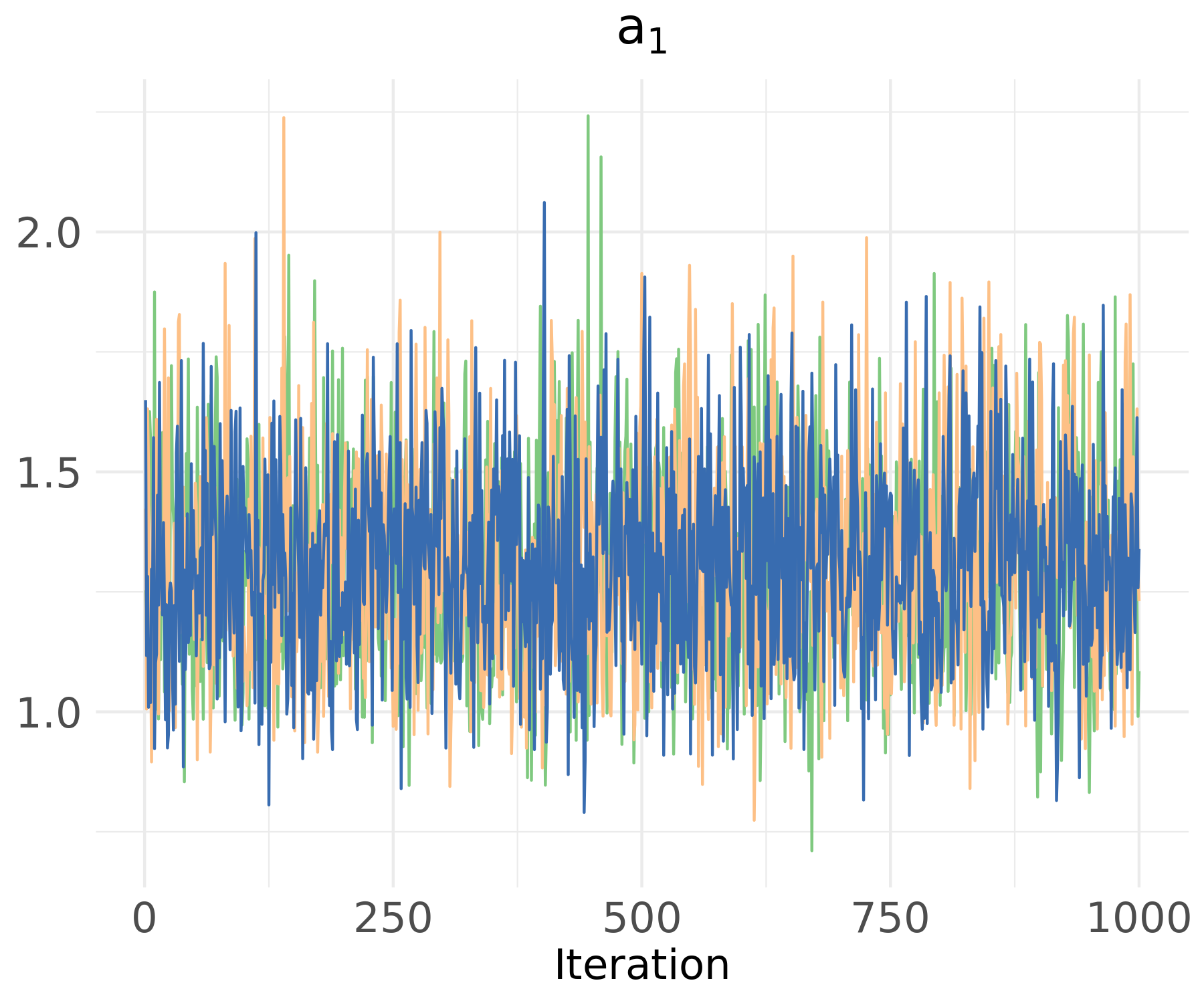}
\includegraphics[width=1.6in]{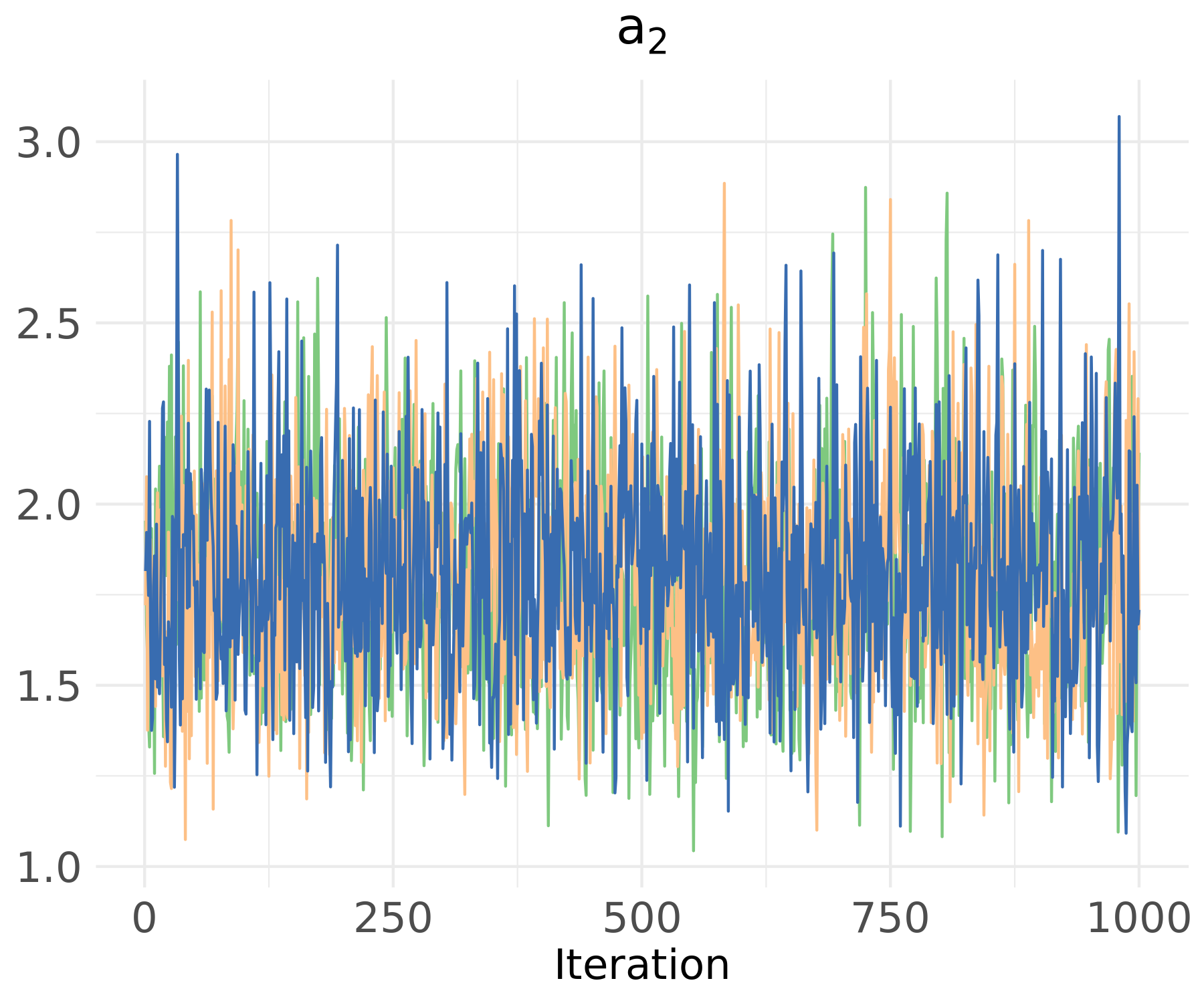}
\includegraphics[width=1.6in]{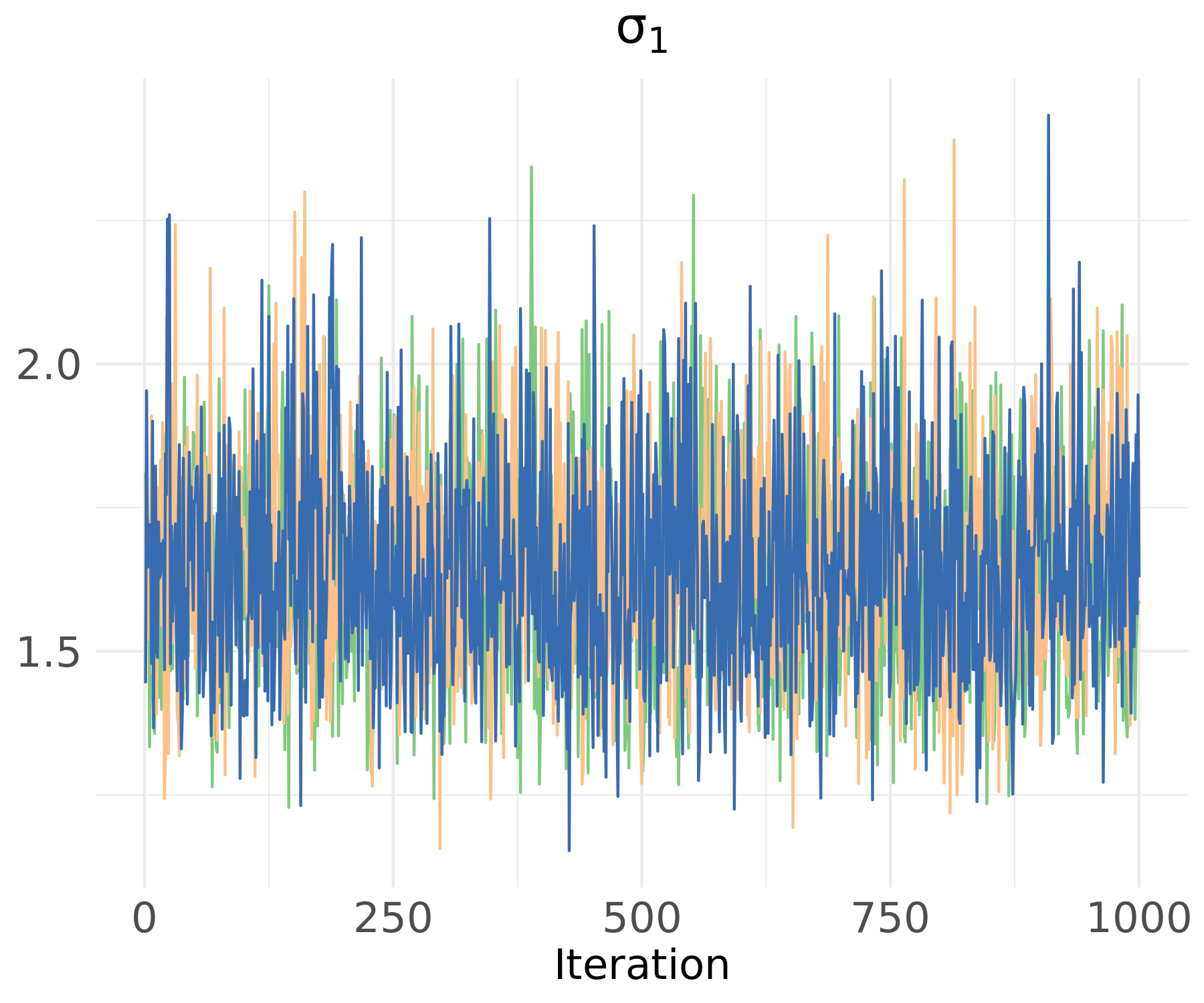}
\includegraphics[width=1.6in]{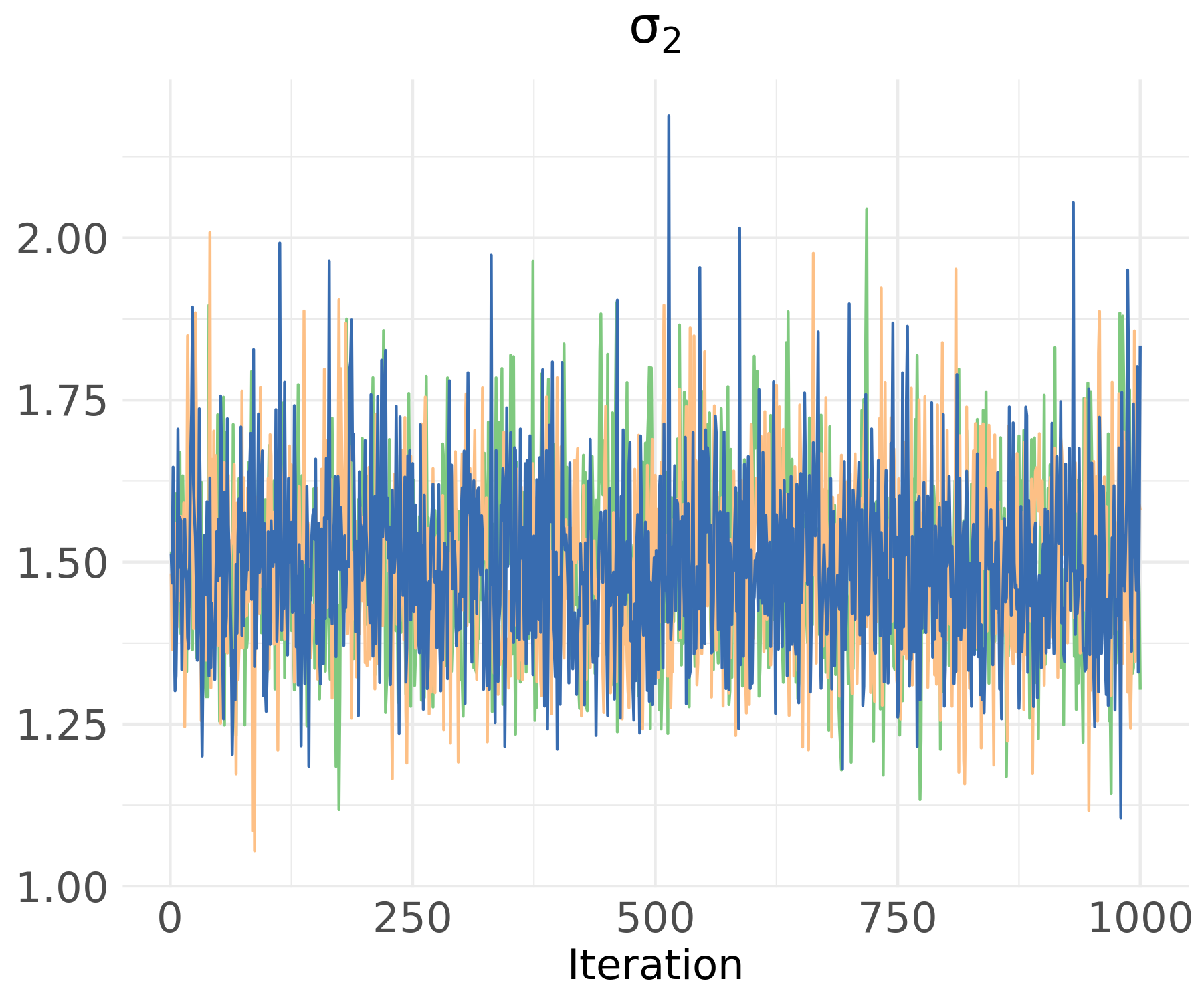}
\includegraphics[width=1.6in]{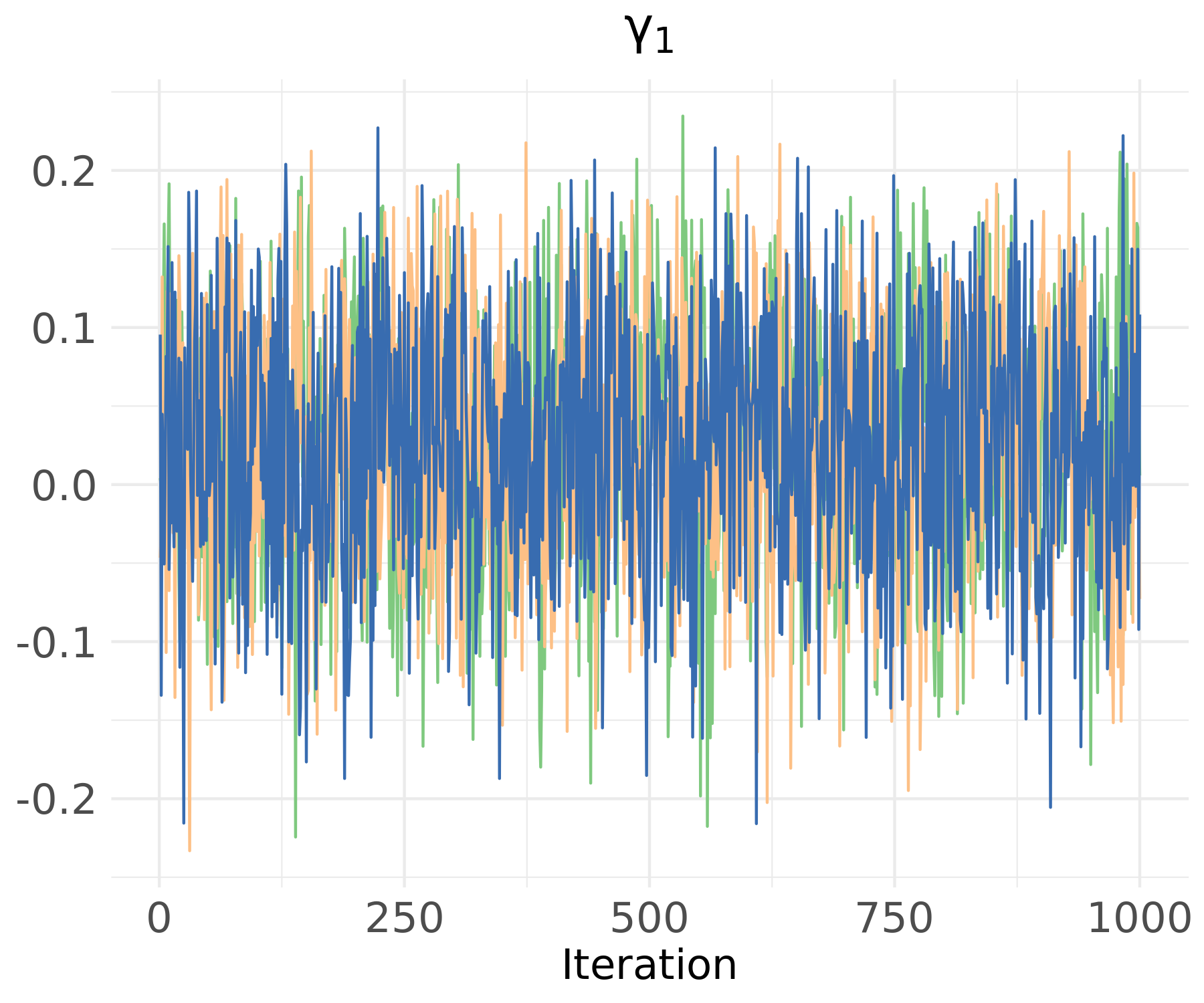}
\includegraphics[width=1.6in]{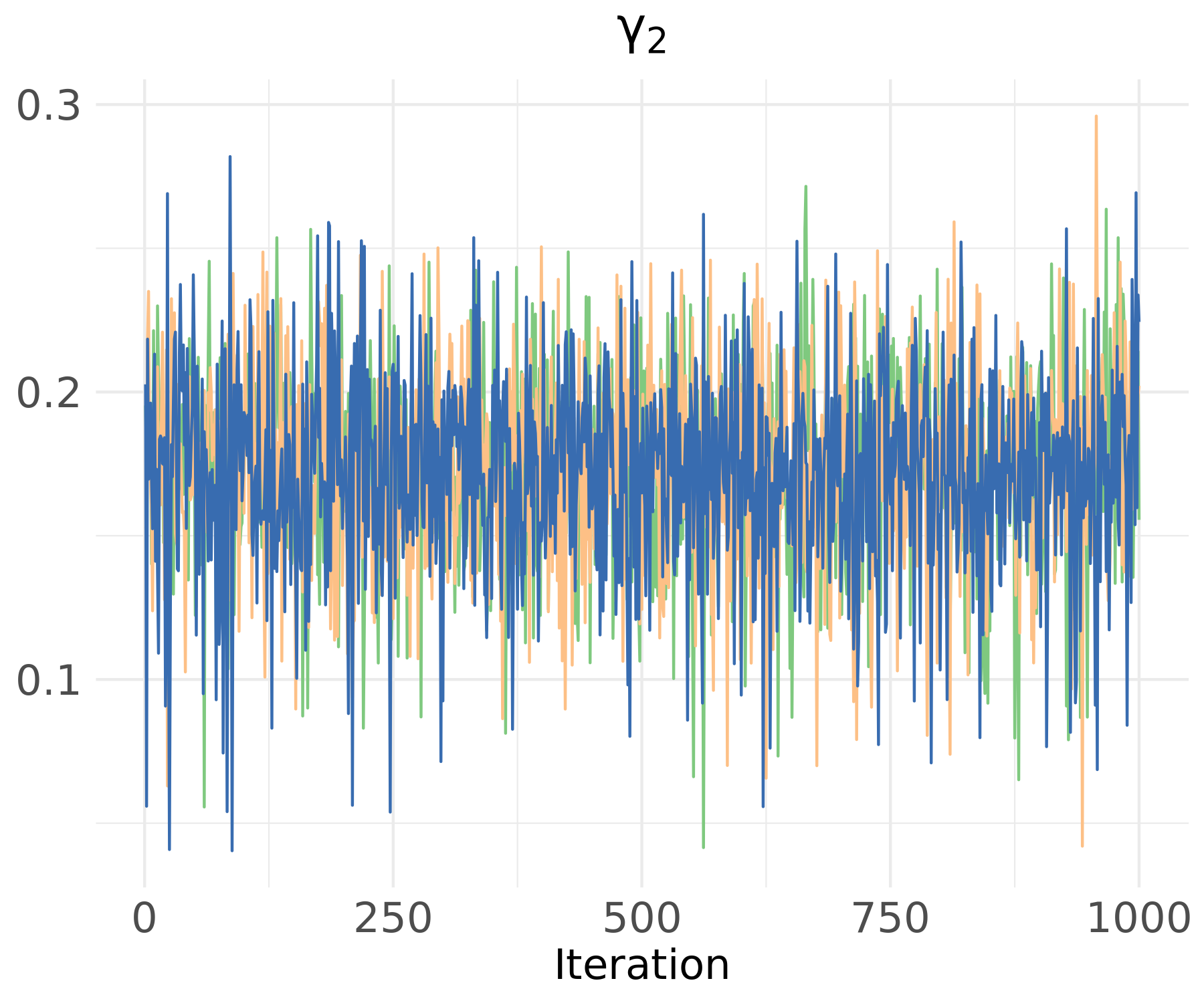}
\includegraphics[width=1.6in]{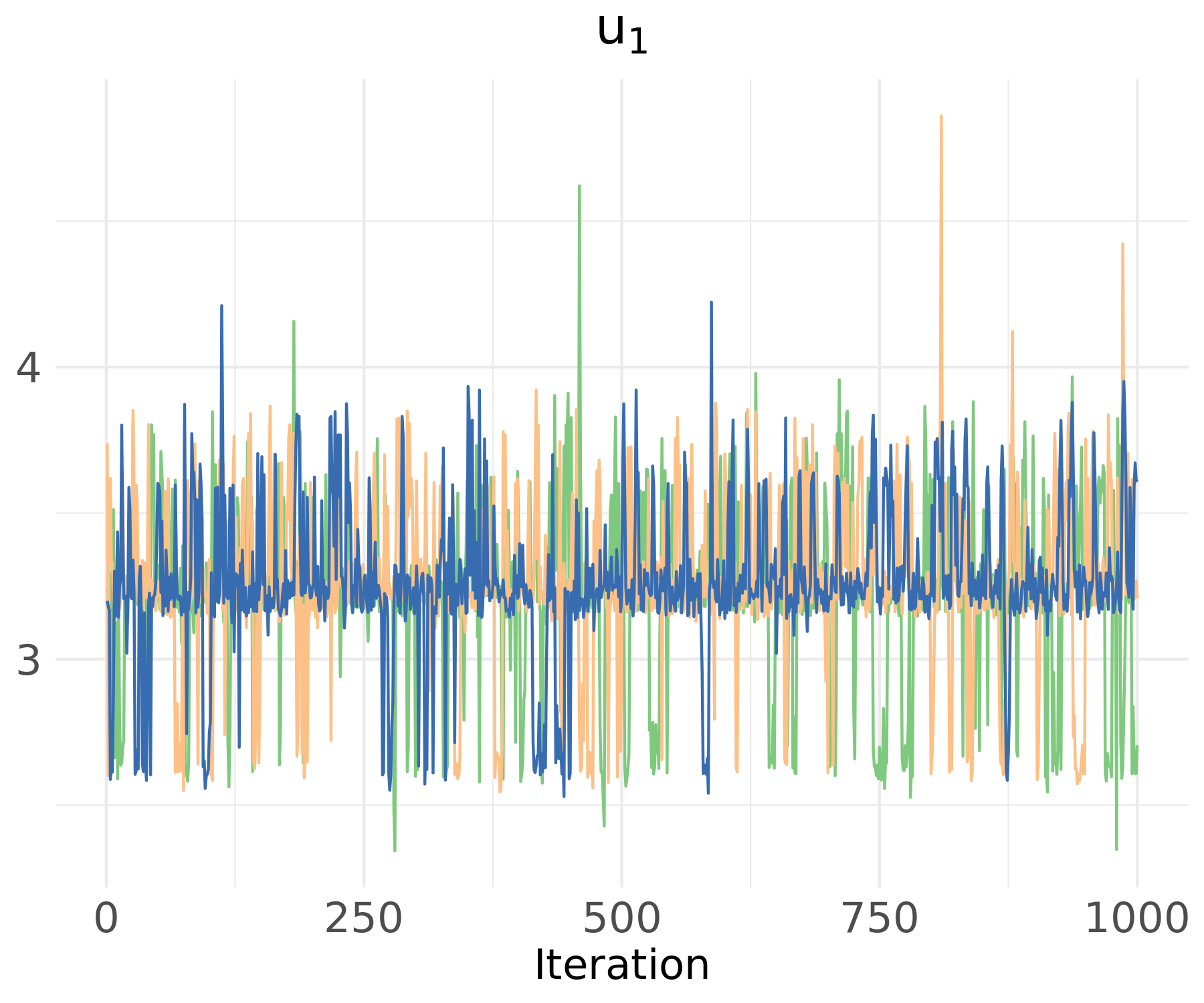}
\includegraphics[width=1.6in]{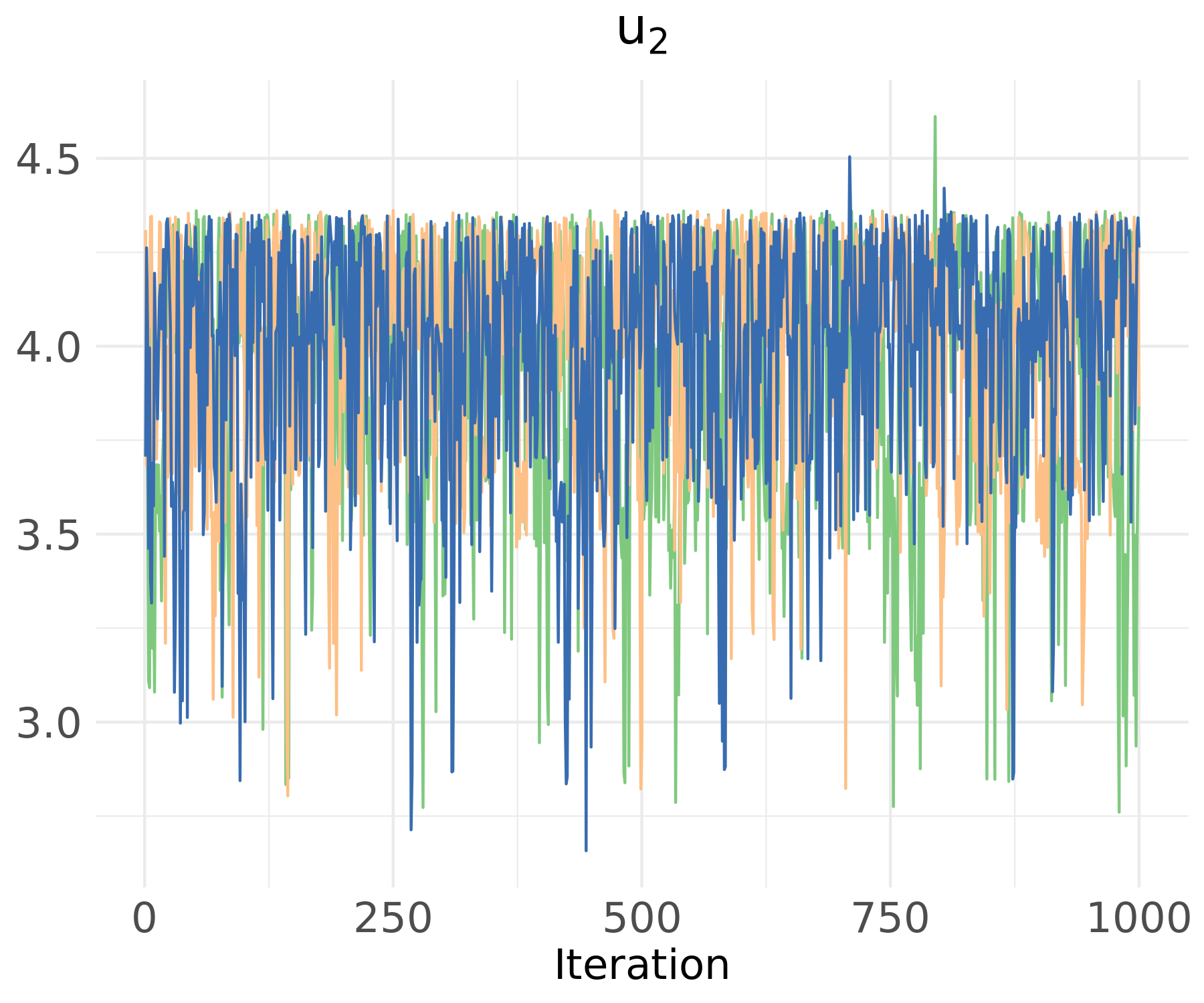}
\end{center}
\caption{\footnotesize{Trace plots for the MCMC results in the data application, with parameters ordered as in Figure ~\ref{fig: hist of posterior}.}
}
\label{fig:traceplot}
\end{figure}

\FloatBarrier
\bibliographystyle{apalike}
\bibliography{My_Library}
\end{document}